\documentclass[journal]{IEEEtran}

\usepackage{booktabs}
\usepackage{graphicx}

\usepackage{amsthm,amsmath,amssymb}
\usepackage{xcolor}
\usepackage[utf8]{inputenc}
\usepackage{multirow}
\usepackage[caption=false,font=normalsize,labelfont=sf,textfont=sf]{subfig}
\usepackage{url}
\usepackage{soul}
\usepackage{enumitem}
\usepackage{mathrsfs}
\usepackage{balance}
\usepackage{cite}
\usepackage{tabularx}
\usepackage[colorlinks]{hyperref}
\usepackage[linesnumbered,ruled,vlined,nofillcomment]{algorithm2e}

\captionsetup[subfigure]{font=footnotesize}

\definecolor{MyBlue}{RGB}{0,0,255}

\hypersetup{citecolor=MyBlue, linkcolor=MyBlue, urlcolor=MyBlue}

\graphicspath{{figs/}}

\setlength{\textfloatsep}{6pt}

\newtheorem{definition}{Definition}[section]
\newtheorem{theorem}{Theorem}[section]
\newtheorem{lemma}{Lemma}[section]
\newtheorem{proposition}{Proposition}[section]
\newtheorem{example}{Example}[section]
\newtheorem{property}{Property}[section]


\begin{document}

\title{GNN-based Anchor Embedding for Efficient Exact Subgraph Matching}

\author{Bin Yang, Zhaonian Zou, Jianxiong Ye
	\IEEEcompsocitemizethanks{
	\IEEEcompsocthanksitem Bin~Yang, Zhaonian~Zou and Jianxiong~Ye are with School of Computer Science and Technology, Harbin Institute of Technology, Heilongjiang, China. E-mail: bin.yang@stu.hit.edu.cn
    }
}

\markboth{Journal of \LaTeX\ Class Files,~Vol.~14, No.~8, August~2025}%
{Shell \MakeLowercase{\textit{et al.}}: A Sample Article Using IEEEtran.cls for IEEE Journals}


\maketitle

\begin{abstract}
Subgraph matching query is a fundamental problem in graph data management and has a variety of real-world applications. Several recent works utilize deep learning (DL) techniques to process subgraph matching queries. Most of them find approximate subgraph matching results without accuracy guarantees. Unlike these DL-based inexact subgraph matching methods, we propose a learning-based exact subgraph matching framework, called \textit{graph neural network (GNN)-based anchor embedding framework} (GNN-AE). In contrast to traditional exact subgraph matching methods that rely on creating auxiliary summary structures online for each specific query, our method indexes small feature subgraphs in the data graph offline and uses GNNs to perform graph isomorphism tests for these indexed feature subgraphs to efficiently obtain high-quality candidates. To make a tradeoff between query efficiency and index storage cost, we use two types of feature subgraphs, namely anchored subgraphs and anchored paths. Based on the proposed techniques, we transform the exact subgraph matching problem into a search problem in the embedding space. Furthermore, to efficiently retrieve all matches, we develop a parallel matching growth algorithm and design a cost-based DFS query planning method to further improve the matching growth algorithm. Extensive experiments on 6 real-world and 3 synthetic datasets indicate that GNN-AE is more efficient than the baselines, especially outperforming the exploration-based baseline methods by up to 1--2 orders of magnitude.
\end{abstract}

\begin{IEEEkeywords}
	Subgraph matching, graph isomorphism, graph processing, matching optimization.
\end{IEEEkeywords}

\section{Introduction}
\label{sec:introduction}

\IEEEPARstart{S}{ubgraph} matching~\cite{sun2020memory, zhang2024comprehensive} is a fundamental problem in graph data management. Given a data graph $G$ and a query graph $Q$, a subgraph matching asks for all subgraphs (also called matches) in $G$ that match $Q$ in terms of isomorphism. Consider the example in Fig.~\ref{fig:example-sm}, the subgraphs $g_1$ and $g_2$ are all matches of $Q$ in $G$. The subgraph isomorphism problem is known to be NP-complete~\cite{cook2023complexity}, so it is costly to find all exact matches.   
Subgraph matching has many practical applications, such as discovering structures in chemical networks~\cite{zhang2024rapid}, finding communities in social networks~\cite{ajoykumar2023study}, explaining neural networks in machine learning~\cite{wu2023rethinking}, etc.

\begin{figure}[!t]
	\setlength{\abovecaptionskip}{0cm}
	\centering
	\includegraphics[width=0.9\linewidth]{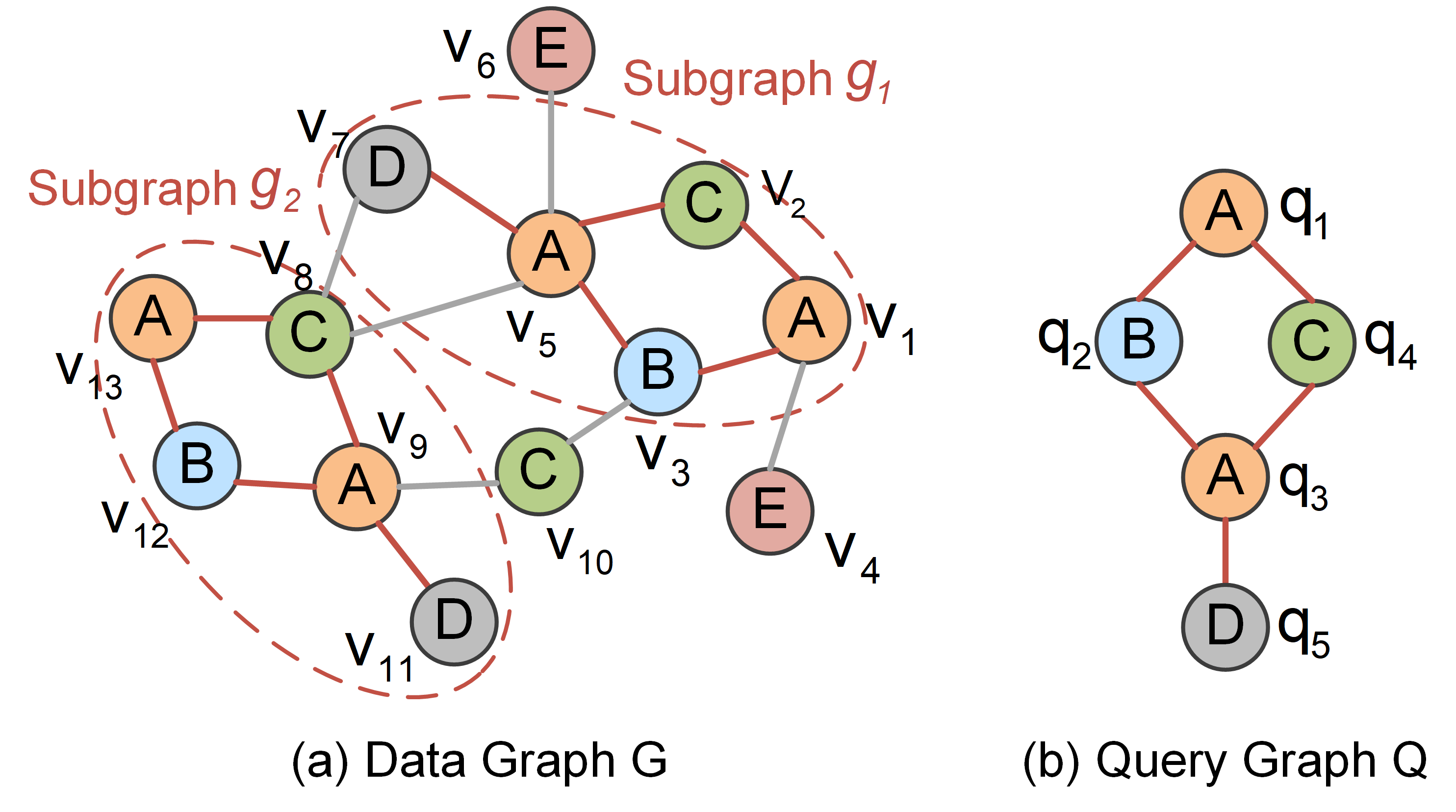}
	\caption{An example of the subgraph matching.}
	\label{fig:example-sm}
\end{figure} 

Prior works on the exact subgraph matching problem usually followed the exploration-based or the join-based methodology~\cite{sun2020memory, sun2021rapidmatch}. The exploration-based methods adopt the backtracking search, which maps query vertices to data vertices and iteratively extends intermediate results to find matches. The join-based methods convert the subgraph query to a relational query, and evolve the multi-way join to obtain all results. 

In recent years, several advanced works~\cite{liu2023d2match, roy2022interpretable, lou2020neural} utilize deep learning-based techniques such as \textit{graph neural networks} (GNNs) to handle the subgraph matching problem. However, these works can only obtain approximate matching results. Specifically, they transform the query graph and the data graph into vectors in an embedding space and learn the subgraph relationship between them. Although these GNN-based approaches are efficient, there are several drawbacks:   

(1) They usually approximately find subgraph isomorphism relationships between the query and the data graph, and are not suitable for retrieving all matches. 

(2) There are no theoretical guarantees or error bounds about the approximate accuracy.

It is known that GNNs usually provide approximate solutions, such as for classification or regression, without an accuracy guarantee. Thus, applying GNNs to handle approximate subgraph matching is relatively straightforward. However, it is not trivial to utilize GNNs to handle exact subgraph matching~\cite{ye2024efficient}. To obtain exact results for subgraph matching via GNNs, RL-QVO~\cite{wang2022reinforcement} and GNN-PE~\cite{ye2024efficient} provide two demonstrations. RL-QVO employs Reinforcement Learning (RL) and GNNs to generate a high-quality matching order for subgraph matching. However, it relies on historical queries to train the model, which limits its scalability and robustness. GNN-PE trains a GNN model to learn the dominance relationships between the vertices and their 1-hop neighborhood substructures on the data graph. These vertices are concatenated into paths as the basic matching units. GNN-PE requires that the GNNs be trained to overfit the training data set. When the data graph is large, it results in a high offline computation cost.

\noindent{\bf Our Method.}
Motivated by the above, to further solve the subgraph matching problem, in this paper, we propose a novel GNN-AE (\underline{GNN}-based \underline{A}nchor \underline{E}mbedding) framework, which allows exact subgraph matching. In contrast to existing GNN-based methods for approximate subgraph matching without theoretical guarantees of accuracy, GNN-AE can obtain all exact results for subgraph matching and retrieve all matches. Unlike RL-QVO~\cite{wang2022reinforcement}, our GNN-AE does not rely on historical query graphs to train the GNNs (i.e., only depends on the data graph). Furthermore, our GNN model does not have to overfit the training dataset, which makes our GNN-AE have a low offline computation cost compared with the GNN-PE~\cite{ye2024efficient}.

In GNN-AE, we select some edges (also called \textit{anchors}) as the basic matching units from the query graph $Q$ to the data graph $G$. We pre-store the $k$-hop graph/path structures (called \textit{anchored subgraphs/paths} for short) surrounding each edge in $G$ into the indexes to facilitate online subgraph matching queries. Although most existing exact subgraph matching methods also consider the 1-hop neighborhood of vertices, they build an auxiliary index online for each query, which occupies much time in online query processing. In our GNN-AE, index construction is \textit{offline} and \textit{one-time only}, which can help improve online query efficiency. In subgraph matching, if the $k$-hop graph around an edge in the query graph $Q$ is isomorphic to the $k$-hop graph/one of its substructures around an edge in the data graph $G$, then the two edges may match. However, graph isomorphism testing is computationally expensive. \textbf{The GNNs naturally have the order-invariant property~\cite{maron2019invariant}, which can map isomorphic graphs to the same embedding.} Thus, we express the above matching relationship via GNNs. That is, if there is a matching relationship from a query edge $(q_i, q_j)$ to a data graph edge $(u, v)$, the $k$-hop graph of $(q_i, q_j)$ and the $k$-hop graph/one of its substructures of $(u, v)$ have the same embedding via a GNN model. We store all edges in the data graph $G$ that have the same $k$-hop graph/substructure embedding vector in a collection of the index without omission. In online querying, by retrieving indexes, GNN-AE can obtain candidate matches with 100\% recall for each query edge without false dismissals to ensure exact subgraph matching results.

In summary, we make the following contributions:

{\textbullet} We propose a novel GNN-AE framework for exact subgraph matching via GNN-based anchor embeddings. (\textsection\ref{sec:overview})  

{\textbullet} We design GNN-based anchored subgraph embedding and path embedding techniques for exact subgraph retrieval. (\textsection\ref{sec:gnn-based-feature-embedding})

{\textbullet} We propose an anchor matching mechanism based on subgraphs/paths and develop an efficient parallel matching growth algorithm to obtain all exact matching locations. (\textsection\ref{sec:subgraph-matching-growth})

{\textbullet} We design a cost-model-based DFS query plan to further enhance the parallel matching growth algorithm. (\textsection\ref{sec:cost-model-dfs-plan})

{\textbullet} We conduct experiments on 6 real and 3 synthetic datasets to verify the effectiveness and efficiency of GNN-AE. (\textsection\ref{sec:experimental})

Related works on subgraph matching are reviewed in Section~\ref{sec:related-work} and Section~\ref{sec:conclusion} concludes this paper.

\section{Problem Definition}
\label{sec:problem} 

We focus on undirected vertex-labeled graphs. Let $\Sigma$ be a set of labels. A graph is defined as $(V, E, L, \Sigma)$, where $V$ is a set of vertices, $E$ is a set of edges, and $L: V \to \Sigma$ is a function that associates vertex $v \in V$ with a label in $\Sigma$. 

\begin{table}[!t]
	\centering
	\fontsize{6.8pt}{8pt}\selectfont 
	\renewcommand\arraystretch{1.2}
	\caption{Notations and Descriptions}
	\label{tab:notations}
	\begin{tabular}{|p{1.4cm}||p{6cm}|} 
		\hline
		Notations & Descriptions \\ 
		\hline
		\hline
		$Q$, $G$ and $g$ & query graph, data graph and matched subgraph \\ 
		\hline
		$S^k_G(u, v)$ & an anchored subgraph of the data graph edge $(u, v)$ \\ 
		\hline
		$o(S^k_G(u, v))$ & an embedding of anchored subgraph $S^k_G(u, v)$ via GNNs \\
		\hline
		$P^k_G(u, v)$ & an anchored path of the data graph edge $(u, v)$ \\ 
		\hline
		$c(P^k_G(u, v))$ & an encoded embedding of anchored path $P^k_G(u, v)$ \\
		\hline
		$I_S, I_{S'}, I_P$ & hash indexes of anchored subgraphs and anchored paths \\
		\hline
		$C_{q_iq_j}$ & candidate matches of the query edge $(q_i, q_j)$ \\
		\hline
	\end{tabular}
\end{table} 

\begin{definition}[Graph Isomorphism]
	\label{def:graph-isomorphism}
	A graph $G = (V, E, L, \Sigma)$ is isomorphic to a graph $G' = (V', E', L', {\Sigma}’)$, denoted as $G \cong G'$, if there exists a bijecton $f: V \to V'$ such that (1) $\forall v \in V$, $L(v) = L'(f(v))$; (2) $\forall u, v \in V$, $(u, v) \in E$ if and only if $(f(u), f(v)) \in E'$. 
\end{definition}

A graph $G'$ is \textit{subgraph isomorphic} to a graph $G$, denoted as $G' \sqsubseteq G$, if $G'$ is isomorphic to a subgraph $g$ of $G$ (i.e., $g \subseteq G$). The subgraph isomorphism problem is NP-complete~\cite{cook2023complexity}.

\begin{definition}[Subgraph Matching Query]
	\label{def:subgraph-matching}
	Given a data graph $G$ and a query graph $Q$, a subgraph matching query finds all subgraphs $g$ in $G$ that are isomorphic to $Q$.
\end{definition}

In this paper, we consider exact subgraph matching, i.e., retrieving all subgraph locations on the data graph $G$ that are strictly isomorphic to the query graph $Q$. 

\section{An Overview}
\label{sec:overview}

Our GNN-AE is an \textit{exact subgraph matching algorithm} that can find all matches of the query graph. In contrast to many traditional exact subgraph matching methods, our algorithm is an index-based algorithm that relies on an \textit{index created offline and one-time only} instead of creating a summary online for each query. Moreover, GNN-AE adopts an \textit{edge-oriented matching} approach rather than a vertex-by-vertex expansion matching. Furthermore, GNN-AE is a \textit{learning-based algorithm} that utilizes machine learning models to perform graph isomorphism tests for small feature subgraphs. The framework of GNN-AE is illustrated in Algorithm~\ref{alg:gnn-ae-framework}, and Fig.~\ref{fig:example-gnnae} shows an example of GNN-AE for exact subgraph matching. 

The GNN-AE framework consists of two phases: offline pre-computation and online subgraph matching. We pre-process the data graph into embeddings via GNNs and path encoding (lines 1-10 and 12), build indexes (lines 11 and 13) in the offline pre-computation phase, and then answer online subgraph matching queries over indexes (lines 14-21).

Existing subgraph matching methods usually build matching relationships from the query to the data graph based on vertices or edges. The edge directly constrains the two endpoints to be neighbors, which can reduce the computations in matching. Thus, in GNN-AE, we set edges as the basic matching unit. We select some edges of the query graph to build matching relationships with edges in the data graph. These edges in the query or the data graph are named \underline{\textbf{\textit{anchors}}}, which anchor the matching from the query graph to the data graph.

\begin{algorithm}[t]
	\footnotesize
	\caption{The GNN-Based Anchor Embedding (GNN-AE) Framework for Exact Subgraph Matching}
	\label{alg:gnn-ae-framework}
	\KwIn{a data graph $G$, a query $Q$, vertex degree threshold $d^*$, and the hop value $k$ of the anchored subgraph/path}
	\KwOut{all subgraphs $g$ ($\subseteq G$) that are isomorphic to $Q$}  
	\tcp{\textbf{Offline Pre-Computation Phase}}
	initial a training data set $D = \emptyset$ and a path set $P = \emptyset$ \\
	\For{each edge $(u, v) \in G$}{
		\eIf{degree $d_u \leq d^*$ \textbf{or} $d_v \leq d^*$}{
			obtain anchored $k$-radius subgraphs $S^k_G(u, v)$ of $(u, v)$  \\ 
			add all possible anchored subgraphs $S^k_G(u, v)$ to $D$
		}{
			obtain anchored $k$-radius paths $P^k_G(u, v)$ of $(u, v)$ \\
			add all possible anchored paths $P^k_G(u, v)$ to $P$
		}
	}
	\tcc{train a GNN for subgraph embeddings} 
	train a GNN model $\mathcal{M}$ over the training data set $D$ \\
	obtain embeddings $o(S^k_G(u, v))$ for $S^k_G(u, v)$ in $D$ via $\mathcal{M}$ \\
	\tcc{build indexes over graph $G$} 
	build two hash indexes $I_S$ and $I_{S'}$ over embedding vectors for anchored subgraphs in $G$ \\
	encode each anchored path $P^k_G(u, v)$ in $P$ \\
	build a hash index $I_P$ over path encodings $c(P^k_G(u, v))$ for anchored paths in $G$   \\  
	\tcp{\textbf{Online Subgraph Matching Phase}}
	compute cost-model-based DFS query plan $\varphi$ for $Q$ \\ 
	\tcc{retrieve candidate matches}
	\For{each DFS edge $(q_i, q_j) \in Q$}{
		extract its maximum anchored $k$-radius subgraphs $S^k_Q(q_i, q_j)$ and maximum anchored $k$-radius paths $P^k_Q(q_i, q_j)$ \\
		get embeddings $o(S^k_Q(q_i, q_j))$ via GNN $\mathcal{M}$ \\
		get path encodings $c(P^k_Q(q_i, q_j))$ \\
		find candidate matches $C_{q_iq_j}$ by retrieving $I_S$, $I_{S'}$ and $I_P$ \\
	}
	\tcc{refine and obtain all exact matches}
	assemble candidate matches $C$ along the DFS edge order and refine subgraphs $g$ via hash join \\
	\Return{subgraphs $g$ ($\cong Q$)} 
\end{algorithm} 

\noindent{\bf The Offline Pre-computation Phase:}
In this phase, the data graph $G$, a vertex degree threshold $d^*$, and a hop value $k$ are given as input. Since edges are simple in structure, building matching relationships relying only on them results in a large candidate match space for the query. Existing exact subgraph matching methods usually consider the 1-hop neighbors of vertices to obtain fewer candidate matches. If the 1-hop neighborhood of the query vertex $q$ is not a subgraph of the 1-hop neighborhood of the data graph vertex $u$, then $q$ must not match $u$. We consider that the feature subgraphs within the $k$-hop neighborhood of endpoints on edges in $G$ is stored offline, and then perform online subgraph matching queries on these feature subgraphs to improve query efficiency. Unlike existing subgraph matching methods that consider vertex neighborhoods online, our index construction is \textit{offline} and \textit{one-time only}. However, obtaining all subgraphs for the high-degree vertices in the data graph $G$ results in a high offline index storage cost. Thus, we extract paths for high-degree vertices and graph structures for low-degree vertices.

Specifically, in GNN-AE, we design the edge-based $k$-radius graph structure and path structure. For each edge $(u, v)$ with $L(u) \leq L(v)$ in $G$, if it has the vertex degree $d_u \leq d^*$ (i.e., the low-degree vertex), we extract all possible subgraphs (which contain the edge $(u, v)$) within the $k$-hop graph structure centered on $u$ as features. If the vertex degree $d_u > d^* \land d_v \leq d^*$, we obtain all possible subgraphs within the $k$-hop graph structure centered on $v$. We call these subgraphs the \textit{anchored $k$-radius subgraphs} (lines 2-5). If the edge $(u, v)$ has only high-degree vertices (i.e., $d_u > d^* \land d_v > d^*$), we extract all possible paths (which contain $(u, v)$) within the $k$-hop path structures extending from the vertex $u$ as features. We call these paths the \textit{anchored $k$-radius paths} (lines 6-8).

In order to build the matching relationship from the query graph $Q$ to the data graph $G$, we handle anchored subgraphs and anchored paths in different ways:

\begin{figure}[!t]
	\setlength{\abovecaptionskip}{0.1cm}
	\centering
	\includegraphics[width=0.96\linewidth]{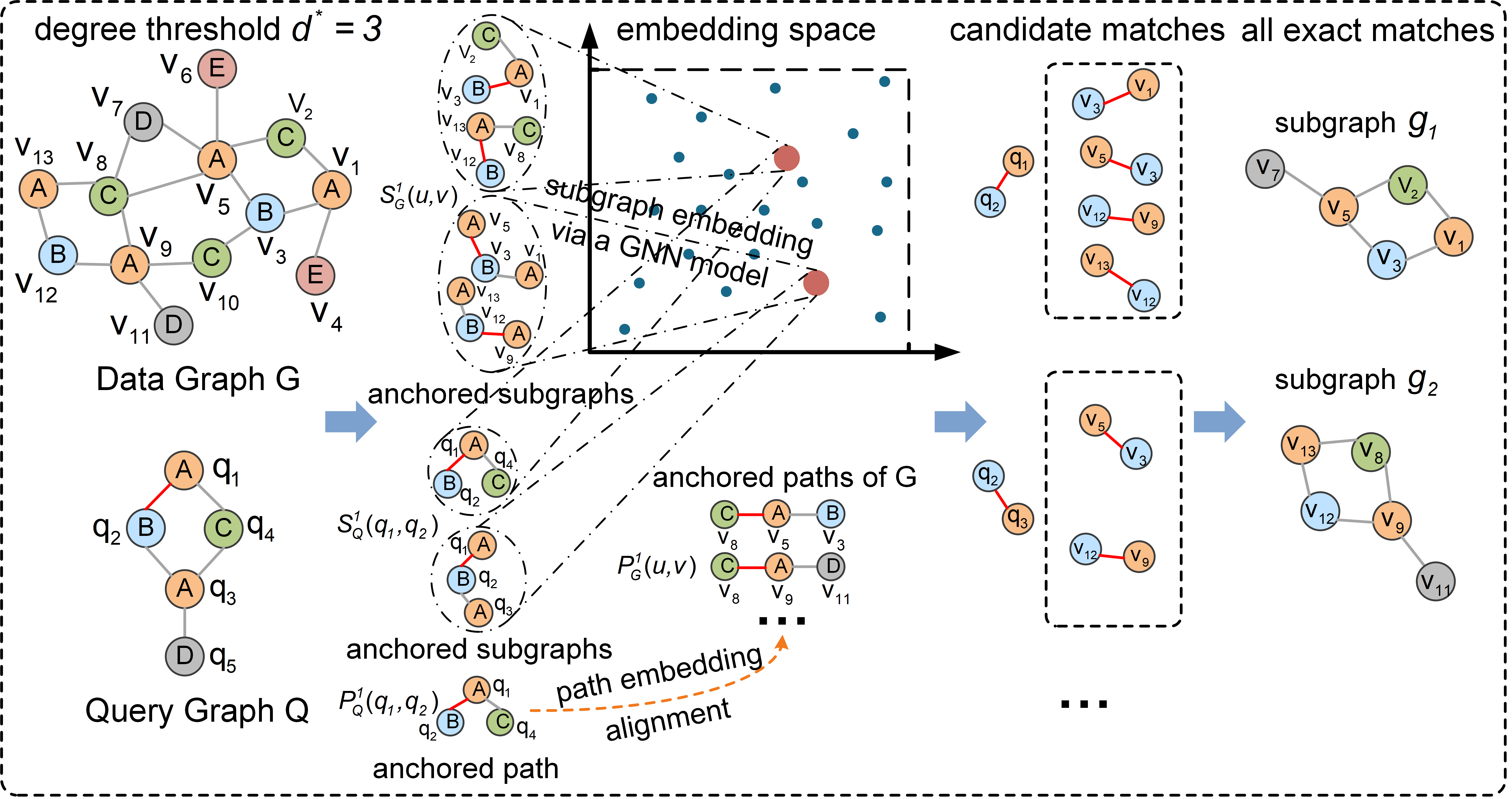}
	\caption{An example of GNN-AE for efficient exact subgraph matching.}
	\label{fig:example-gnnae}
	\vspace{-0.2em}
\end{figure} 

(1) For anchored subgraphs of $G$, we train a GNN model over them, and then obtain their embeddings (lines 9-10). For an edge $(q_i, q_j)$ with $L(q_i) \leq L(q_j)$ in the query $Q$, we extract its $k$-hop graph structures (which contain $(q_i, q_j)$) centered on $q_i$ and $q_j$ respectively. In online subgraph matching, if $(q_i, q_j)$ matches a data graph edge $(u, v)$, the $k$-hop graph structure of $(q_i, q_j)$ is isomorphic to a possible anchored subgraph of $(u, v)$. However, graph isomorphism testing is computationally expensive. \textbf{The GNNs naturally have the order-invariant property~\cite{maron2019invariant}, which can map isomorphic graphs to the same embedding.} Thus, we express the above matching relationship via GNNs. That is, we also transform the $k$-hop graph structures of $Q$ into embeddings via the same GNN model used in $G$. If the query edge $(q_i, q_j)$ matches the data graph edge $(u, v)$, the $k$-hop graph of $(q_i, q_j)$ and an anchored subgraph of $(u, v)$ have the same embedding via the GNN model. For example, in Fig.~\ref{fig:example-gnnae}, let $k = 1$, and the anchored subgraphs in $S^1_G(u, v)$ from $G$ and the graphs in $S^1_Q(q_1, q_2)$ from $Q$ have the same embedding. 

(2) For anchored paths of $G$, we directly encode them as embeddings due to their simplicity (line 12). For an edge $(q_i, q_j)$ with $L(q_i) \leq L(q_j)$ in the query $Q$, we extract all its $k$-hop paths extending from $q_i$. For example, in Fig.~\ref{fig:example-gnnae}, the path $P^1_Q(q_1, q_2)$ is the 1-hop path of $(q_1, q_2)$. If a query edge $(q_i, q_j)$ matches a data graph edge $(u, v)$, a path of $(q_i, q_j)$ and an anchored path of $(u, v)$ have the same encoded embedding.  

We build hash indexes $I_S$ and $I_{S'}$ to store key-value pairs $\{o(S^k_G(u, v)): \Theta\}$ consisting of the graph embedding $o(S^k_G(u, v))$ and the set $\Theta$ of all edges on $G$ corresponding to $o(S^k_G(u, v))$ (line 11). Similarly, for path embedding $c(P^k_G(u, v))$, we build a hash index $I_P$ to store $c(S^k_G(u, v))$ and all its corresponding data graph edges (line 13). They guarantee a 100\% matching recall for each query edge without omission. Thus, GNN-AE can obtain exact matching results.

\noindent{\bf The Online Subgraph Matching Phase:}
In this phase, a query graph $Q$ is given as input. In order to build the matching from query vertices to data graph vertices, we perform a depth-first search (DFS) on $Q$ to cover all query vertices. Since multiple DFS can be chosen in $Q$, we design a cost model to find a low-cost DFS strategy (line 14). For each edge $(q_i, q_j) \in Q$ in the DFS traversal, we extract its $k$-hop graphs (including $(q_i, q_j)$) centered on $q_i$ and $q_j$ respectively, and all $k$-hop paths. We get embeddings of $k$-hop graphs via the trained GNN model, and encode all $k$-hop paths as path embeddings. We take embeddings as keys, and retrieve hash indexes $I_S$, $I_{S'}$, and $I_P$ respectively to obtain candidate matches for $(q_i, q_j)$ (lines 15-19). Finally, we organize the candidate matches for each edge in $Q$ into a table and assemble them into all exact matches via hash joins (line 20).

\section{GNN-based Feature Embedding}
\label{sec:gnn-based-feature-embedding}

In this section, we introduce the feature subgraphs (anchored subgraph and anchored path) adopted in our GNN-AE framework and present the method for deriving the embeddings of the feature subgraphs using graph neural networks (GNN).

\subsection{Feature Subgraphs}
\label{sub:feature-subgraphs}

An edge (also called an anchor) in the query $Q$ usually has a large number of matching edges in the data graph $G$. It is generally inefficient to obtain the results by joining the matched edges of the edges of $Q$. This is because using single edges as the units of joining often generates many invalid results, which wastes a lot of computations. Thus, it is essential to adopt suitable subgraphs as the units of joining.

In our GNN-AE framework, we employ two kinds of subgraphs, namely \emph{anchored subgraphs} and \emph{anchored paths}, as the units of joining.

\begin{figure}[!t]
	\setlength{\abovecaptionskip}{0cm}
	\centering
	\includegraphics[width=0.9\linewidth]{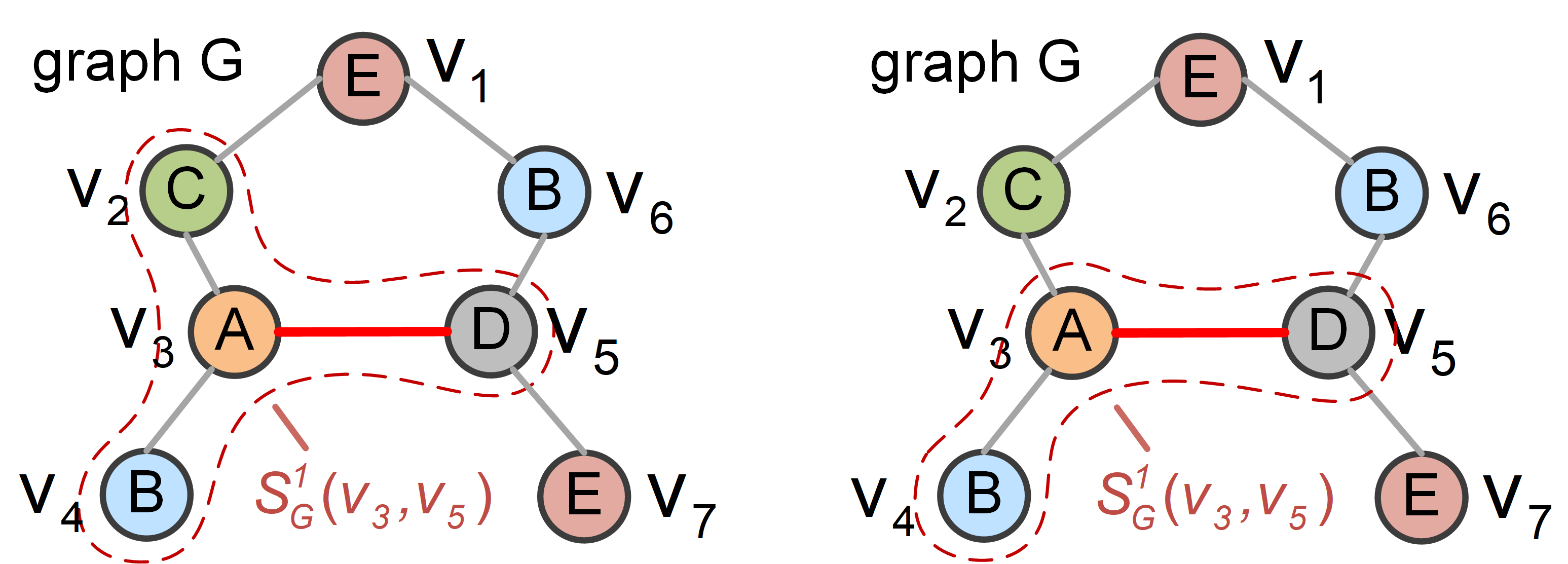}
	\caption{An example of the possible anchored subgraphs.}
	\label{fig:anchor-subgraphs}
	\vspace{0em}
\end{figure} 

\begin{definition}[Anchored Subgraph]
	\label{def:anchored-subgraph}
	Given an edge $(u, v)$ in a vertex-labeled graph $G = (V, E, L, \Sigma)$ where $L(u) \leq L(v)$ and an integer $k \geq 1$, the subgraph of $G$ spanned by the edge $(u, v)$ and some vertices $w$ with distance $d(u, w) \leq k$ is called the anchored $k$-radius subgraph with respect to $(u, v)$, denoted as $S^k_G(u, v)$.  
\end{definition}

In particular, the subgraph of $G$ spanned by the edge $(u, v)$ and all vertices $w$ with $d(u, w) \leq k$ is called the \textbf{\textit{maximum anchored $k$-radius subgraph}}. Obviously, if $L(u) < L(v)$, there is a unique maximum anchored $k$-radius subgraph $S^k_G(u, v)$; if $L(u) = L(v)$, there are two maximum anchored $k$-radius subgraph $S^k_G(u, v)$. Fig.~\ref{fig:anchor-subgraphs} illustrates two possible anchored 1-radius subgraphs $S^1_G(v_3, v_5)$ for edge $(v_3, v_5)$, where the left side is the maximum anchored 1-radius subgraph. Intuitively, an anchored subgraph is a graph containing the edge $(u, v)$ and spanned by extending vertices centered on $u$.

If the degrees of $u$ and $v$ are very large in $G$, $S^k_G(u, v)$ is also very large, making itself ineffective to be a feature. Thus, in this situation, another kind of substructure is adopted.

\begin{figure}[!t]
	\setlength{\abovecaptionskip}{0cm}
	\centering
	\includegraphics[width=0.98\linewidth]{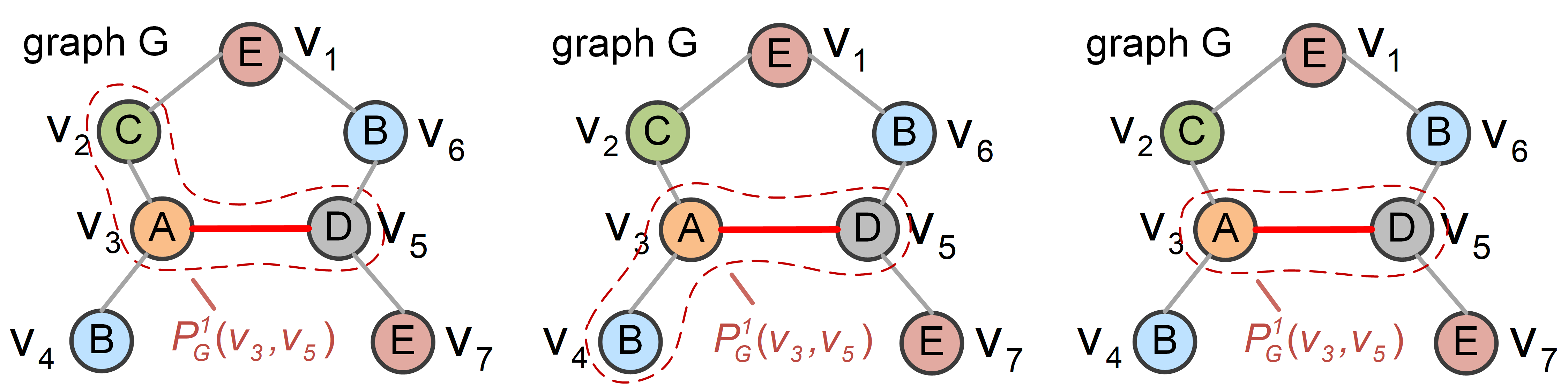}
	\caption{An example of the possible anchored paths.}
	\label{fig:anchor-paths}
	\vspace{0em}
\end{figure} 

\begin{definition}[Anchored Path]
	\label{def:anchored-path}
	Given an edge $(u, v)$ in a vertex-labeled graph $G = (V, E, L, \Sigma)$ where $L(u) \leq L(v)$ and an integer $k \geq 1$, a path in $G$ spanned by $u$, $v$, and some vertices $w$ with distance $d(u, w) \leq k$ is called an anchored $k$-radius path with respect to $(u, v)$, denoted as $P^k_G(u, v)$. 
\end{definition}

In particular, the path of $G$ spanned by $u$, $v$, and all vertices $w$ with $d(u, w) \leq k$ is called a \textbf{\textit{maximum anchored $k$-radius path}}. Unlike anchored $k$-radius subgraph, there are usually multiple maximum anchored $k$-radius paths $P^k_G(u, v)$ with respect to $(u, v)$. Fig.~\ref{fig:anchor-paths} illustrates three possible anchored 1-radius paths $P^1_G(v_3, v_5)$ with respect to edge $(v_3, v_5)$, of which the first two are the maximum anchored paths. Intuitively, an anchored path is a path containing the edge $(u, v)$ and spanned by extending vertices starting from $u$.

In our GNN-AE framework, we set a threshold $d^*$ on vertex degree. For an edge $(u, v)$ with $L(u) \leq L(v)$, if the vertex degree $d(u) \leq d^*$ or $d(v) \leq d^*$, the anchored $k$-radius subgraph $S^k_G(u, v)$ is adopted as features; otherwise, all anchored $k$-radius paths $P^k_G(u, v)$ are adopted as features.

Some existing works consider other features, such as trees \cite{lin2014graph} or stars~\cite{wang2019efficient}, to aid matching. However, subgraphs are more complex than trees or stars in structure, which can help filter out more invalid (i.e., does not belong to the final results) candidate matches for a query as much as possible in subgraph matching. Furthermore, it is usually acceptable to obtain all possible anchored subgraphs for an edge $(u, v)$ with $d(u) \leq d^* \lor d(v) \leq d^*$, especially when the size or the maximum vertex degree of the data graph is not large. For an edge $(u, v)$ with $d(u) > d^* \land d(v) > d^*$, although the tree or star has a more complex structure, all its possible substructures result in a high offline storage cost. Thus, we adopt paths as features.

In real-world graphs, vertex degrees usually follow the power-law distribution~\cite{barabasi1999}, and only a small fraction of vertices have high degrees~\cite{ye2024efficient}. For example, 85\% of vertices have degree $\leq 10$ in DBLP graph data~\cite{sun2020memory}. Therefore, we can set $d^* = 10$. Specifically, \textbf{when the maximum vertex degree $d_{max}(u)$ of the data graph is large (e.g., $d_{max}(u) > 8$), we adopt the anchored 1-radius subgraph to avoid the combinatorial explosion of the number of anchored substructures}. According to Lemma~\ref{lem:num-substructure}, the edge $(u, v)$ has $2^{d(u)-1}$ possible anchored 1-radius subgraphs. We can further directly deduce that the edge $(u, v)$ with $d(u) \leq d^* \lor d(v) \leq d^*$ has at most $2^{d^*-1}$ anchored 1-radius subgraphs. When $d^* = 10$, there are no more than 512 ($= 2^{9}$) anchored subgraphs for edge $(u, v)$, which is acceptable. \textbf{When the maximum vertex degree $d_{max}(u)$ of the data graph is not large (e.g., $d_{max}(u) \leq 8$), we can adopt the anchored 2-radius or 3-radius subgraph to obtain a better filtering effect} in subgraph matching due to their more complex structures. \textbf{For such data graphs, it is also acceptable to extract all possible anchored 2-radius or 3-radius subgraphs}. In fact, there are lots of data graphs/networks with a low maximum vertex degree in the real world, such as DNA molecular graph structures~\cite{wu2025overview}, urban road networks~\cite{chen2024global}, IoT sensor networks in a smart park~\cite{darabkh2025smart}, base station networks for cellular mobile communications~\cite{liu2024mean, lin2025performance}, and so on.
 
\noindent\textbullet~~\emph{DNA molecular graph structure:} A DNA molecule contains millions of nucleotides, each of which can serve as a vertex in the graph structure. Each nucleotide is linked to the nucleotide above and below it by a phosphodiester bond, and to the nucleotide in the complementary chain through a base. Therefore, the maximum degree of each nucleotide vertex is only 3.

\noindent\textbullet~~\emph{Urban road network:} A road can be considered as an edge in the graph, and a road intersection as a vertex. In the road network, there are many types of intersections, such as crossroads, three-way intersections, and highway ramps. Each road intersection type can be mapped to a unique vertex label. In the real world, the number of forks at a road intersection usually does not exceed 8. Thus, the road network is a typical graph structure with a low maximum vertex degree.

\noindent\textbullet~~\emph{IoT sensor network in a smart park:} Vertices are sensors in the Internet of Things (IoT), such as temperature sensors, security cameras, and smart parking detectors. The number of vertices in an IoT sensor network of a smart park can reach hundreds of thousands. Due to the limitations of wireless transmission distance and energy consumption, a sensor is usually only connected to 3-5 adjacent sensors for data backup or relay transmission. Therefore, the maximum degree of an IoT sensor network in a smart park is usually only 5.

\noindent\textbullet~~\emph{Base station network for cellular mobile communications:} Each base station is a vertex. To avoid signal interference and ensure coverage, a base station usually only establishes data connections with 5-6 neighboring base stations. Thus, the base station network is another a typical data graph with a low maximum vertex degree.

\begin{lemma}\label{lem:num-substructure}
	For an edge $(u, v)$ in graph $G$, if vertex $u$ has degree $d(u)$, there are $2^{d(u)-1}$ possible anchored 1-radius subgraphs.
\end{lemma}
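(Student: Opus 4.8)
The plan is to set up an explicit bijection between the anchored $1$-radius subgraphs of $(u,v)$ and the subsets of the neighbor set $N(u)\setminus\{v\}$, and then simply count the latter.

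First I would identify which vertices are available for inclusion. For $k=1$, Definition~\ref{def:anchored-subgraph} admits only vertices $w$ with $d(u,w)\leq 1$, i.e.\ the vertex $u$ itself (distance $0$) together with the $d(u)$ neighbors of $u$ (distance $1$). Since $(u,v)\in E$, the endpoint $v$ is one of these $d(u)$ neighbors. By definition every anchored $1$-radius subgraph $S^1_G(u,v)$ contains the edge $(u,v)$, hence contains both $u$ and $v$ as vertices; the only remaining freedom lies in deciding which of the other $d(u)-1$ neighbors of $u$, namely the vertices of $N(u)\setminus\{v\}$, to span.

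Next I would make this freedom precise. To each subset $S\subseteq N(u)\setminus\{v\}$ I associate the anchored $1$-radius subgraph spanned by $(u,v)$ together with the vertices of $S$, i.e.\ the one on vertex set $\{u,v\}\cup S$; conversely, to each anchored $1$-radius subgraph I associate the set of its vertices other than $u$ and $v$, which is exactly such a subset. I would verify these two maps are mutually inverse: the vertex set of a $1$-radius subgraph determines, and is determined by, the chosen subset, and—because the structure is centered at $u$ and has radius $1$—its edge set is in turn fixed once the vertex set is fixed. Hence distinct subsets yield distinct subgraphs, and every anchored $1$-radius subgraph arises from a unique subset. Counting then gives $2^{d(u)-1}$, since $N(u)\setminus\{v\}$ has $d(u)-1$ elements and a set of $d(u)-1$ elements has $2^{d(u)-1}$ subsets, as claimed. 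As a sanity check, $S=\emptyset$ recovers the lone edge $(u,v)$, while $S=N(u)\setminus\{v\}$ recovers the maximum anchored $1$-radius subgraph of Definition~\ref{def:anchored-subgraph}.

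The computational content is elementary, so the only real obstacle is one of definition-chasing: I must pin down that an anchored $1$-radius subgraph is completely determined by its vertex set, so that counting subgraphs reduces cleanly to counting vertex subsets with neither over- nor under-counting. In particular I would confirm that the edge set is not an independent degree of freedom—at radius $1$ the spanned edges are forced by the choice of spanned vertices—so that the correspondence with subsets of $N(u)\setminus\{v\}$ is genuinely one-to-one rather than one-to-many.
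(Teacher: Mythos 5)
Your proof is correct and takes essentially the same route as the paper's: where the paper counts subsets of the $d(u)-1$ edges incident to $u$ other than $(u,v)$, you count subsets of the $d(u)-1$ neighbors $N(u)\setminus\{v\}$, which is a trivially equivalent parameterization since each such edge corresponds to exactly one such neighbor. Both arguments reduce the lemma to counting the subsets of a $(d(u)-1)$-element set, yielding $2^{d(u)-1}$.
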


\begin{proof}
	According to Definition~\ref{def:anchored-subgraph}, the anchored 1-radius subgraph with respect to $(u, v)$ is a subgraph of $G$ that must contain the edge $(u, v)$ and some vertices $w$ with distance $d(u, w) \leq 1$. Let the 1-hop subgraph starting from vertex $u$ be $S = (V_S, E_S, L_S, \Sigma_S)$. The number of possible anchored 1-radius subgraphs of $(u, v)$ is equivalent to the number of subsets of the non-$(u, v)$ edge sets $\Theta = \{e | e \in E_S \wedge e \neq (u, v)\}$ in $S$, where $|\Theta| = d(u)-1$. Thus, there are $C^0_{d(u)-1} + C^1_{d(u)-1} + \dots + C^{d(u)-1}_{d(u)-1} = 2^{d(u)-1}$ possible anchored 1-radius subgraphs. The lemma holds.  
\end{proof}

\subsection{GNN Model for Anchored Subgraph Embedding}
\label{sub:gnn-for-anchor-embedding}

If an anchored subgraph of the query is isomorphic to an anchored subgraph of the data graph, there may be a match. However, graph isomorphism testing is computationally expensive. Therefore, we adopt a model to replace the graph isomorphism testing. We determine the isomorphism relationship of anchored subgraphs by comparing the embeddings of the model. The model should satisfy two conditions: 
\begin{itemize}
	\item[\textit{C1.}] Isomorphic anchored subgraphs have the same embedding via the model to ensure no matches are missed.
	\item[\textit{C2.}] Non-isomorphic subgraphs have different embeddings via the model as much as possible to reduce invalid matches.
\end{itemize}
We employ a GNN model to replace the graph isomorphism testing. The GNN model takes an anchored subgraph $S^k_G(u, v)$ as input and an $m$-dimensional embedding $o(S^k_G(u, v))$ as output, which can satisfy the above conditions. 

\begin{figure}[!t]
	\setlength{\abovecaptionskip}{0cm}
	\centering
	\includegraphics[width=0.98\linewidth]{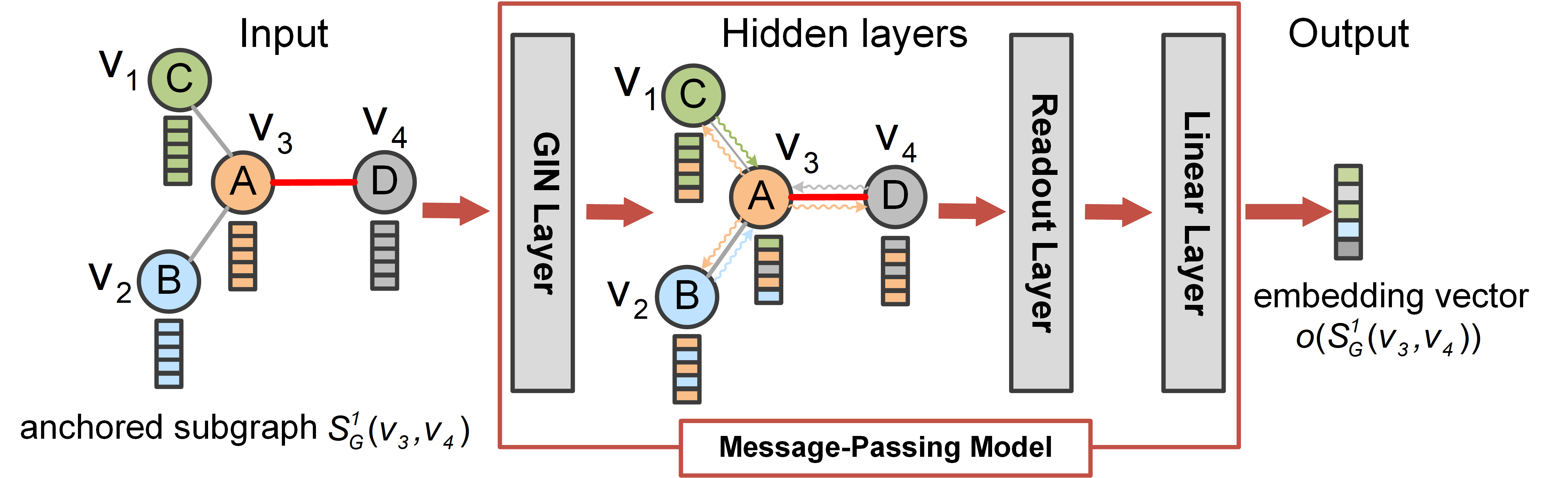}
	\caption{The GNN model for the anchored subgraph embedding.}
	\label{fig:gnn-model}
	\vspace{-0.2em}
\end{figure} 

\noindent{\bf GNN Model Architecture.}
Fig.~\ref{fig:gnn-model} illustrates the GNN model. Specifically, we adopt a classic \textit{Graph Isomorphism Network} (GIN)~\cite{xu2018powerful}) model for subgraph matching. Here are the reasons: (1) For the above condition \textit{C1}, the GNN model (e.g., GIN) naturally has the \textit{order-invariant} property~\cite{maron2019invariant} to enable isomorphic anchored subgraphs to have the same embedding. Theorem~\ref{the:isomorphic-equal} also provides a rigorous proof. Some existing graph encoding or embedding methods cannot satisfy the condition \textit{C1}, such as DeepWalk~\cite{perozzi2014deepwalk}, and GraphSAGE~\cite{Inductive2017}. Because they introduce the randomization step (e.g., sampling, random walks) into their execution. (2) For the above condition \textit{C2}, the GIN model is as strong as the Weisfeiler-Lehman (WL) graph isomorphism testing to make non-isomorphic anchored subgraphs have different embeddings as much as possible, and stronger than other embedding methods, such as the graph factorization embedding method~\cite{choudhury}. Theorem~\ref{the:wl-gin} provides a rigorous theoretical guarantee. (3) The GIN model has a simple architecture and high execution efficiency. Therefore, the classic GIN model is suitable for subgraph matching. If a more advanced GNN model satisfies conditions \textit{C1} and \textit{C2}, and has higher execution efficiency than GIN in the future, the GIN model can be replaced by it.

\begin{property}[Order-Invariant]
	\label{def:order-invariant}
	Assume $A \in \mathbb{R}^{n \times n}$ is the adjacency matrix of a graph $G$ and $P \in \mathbb{R}^{n \times n}$ is a permutation matrix, the function $f$ is order-invariant if $f(P^TAP) = f(A)$.
\end{property}

Property~\ref{def:order-invariant} indicates that no matter how the vertices in the graph $G$ are numbered, the output of the function $f$ is the same. That is, the output of function $f$ regardless of the assignment of vertex IDs on the graph $G$. The GNN model satisfies the order-invariant property, which means that the embedding generated by the GNN focuses on the topological structure of the graph $G$ rather than the vertex ID assignment. This is the theoretical basis for the GNN model to support graph isomorphism testing.

\underline{\textit{Input:}}
The input of the GNN model is an anchored subgraph $S^k_G(u, v)$. In Fig.~\ref{fig:gnn-model}, each vertex $v_i$ in the input anchored subgraph $S^k_G(v_3, v_4)$ is associated with an initial feature embedding, which is obtained via label encoding~\cite{bisong2019}. 

\underline{\textit{Hidden Layers:}}
In each GIN layer, vertex feature embeddings on neighbors are aggregated. In $t$-th round, the new representation $h^{(t+1)}_v$ for $v$ is computed:
\begin{align}\label{eqn:gin-layer}
	a^{(t)}_v &= \sum_{u \in N(v)} h^{(t)}_u,\\ 
	h^{(t+1)}_v &= MLP^{(t+1)}\left((1 + \epsilon^{(t+1)}) h^{(t)}_v, a^{(t)}_v\right). 
\end{align}
where $\epsilon^{(t+1)}$ is a learnable parameter or a fixed scalar. Finally, all hidden representations $h^{(t)}_v$ at each layer are aggregated to form the representation of the anchored subgraph $S^k_G(u, v)$,
\begin{equation}\label{eqn:gin-readout}
	h^{(t)}_G = readout(\{h^{(t)}_v | v \in S^k_G(u, v)\}).
\end{equation}
Generally, the \textit{readout} function can be an operator that sums or means all vertex hidden representations in $S^k_G(u, v)$. 

A linear layer performs a linear transformation for $h^{(t)}_G \in \mathbb{R}^{n \times 1}$ ($n$ is the $readout$ hidden dimension) and obtains the embedding vector $o(S^k_G(u, v))$ with dimension $m$ for $S^k_G(u, v)$. 
\begin{equation}\label{eqn:gin-fully}
	o(S^k_G(u, v)) = Wh^{(t)}_G,
\end{equation}
where $W$ is a learnable weight matrix and $W \in \mathbb{R}^{m \times n}$. 

\underline{\textit{Output:}}
We output a vector $o(S^k_G(u, v))$ with dimension $m$ as the embedding of the anchored subgraph $S^k_G(u, v)$. 

\noindent{\bf GNN Model Training.}
For each edge $(u, v)$ with $d(u) \leq d^* \lor d(v) \leq d^*$, we extract all its possible anchored subgraphs. As mentioned in subsection~\ref{sub:feature-subgraphs}, when the maximum vertex degree $d_{max}(u)$ of the data graph is large (e.g., $d_{max}(u) > 8$), we adopt the anchored 1-radius subgraph to avoid the combinatorial explosion of the number of anchored substructures. When the maximum vertex degree $d_{max}(u)$ of the data graph is not large (e.g., $d_{max}(u) \leq 8$), we can adopt the anchored 2-radius or 3-radius subgraph to obtain a better filtering effect. For such data graphs, it is also acceptable to extract all possible anchored 2/3-radius subgraphs.

\begin{figure}[!t]
	\setlength{\abovecaptionskip}{0cm}
	\centering
	\includegraphics[width=0.98\linewidth]{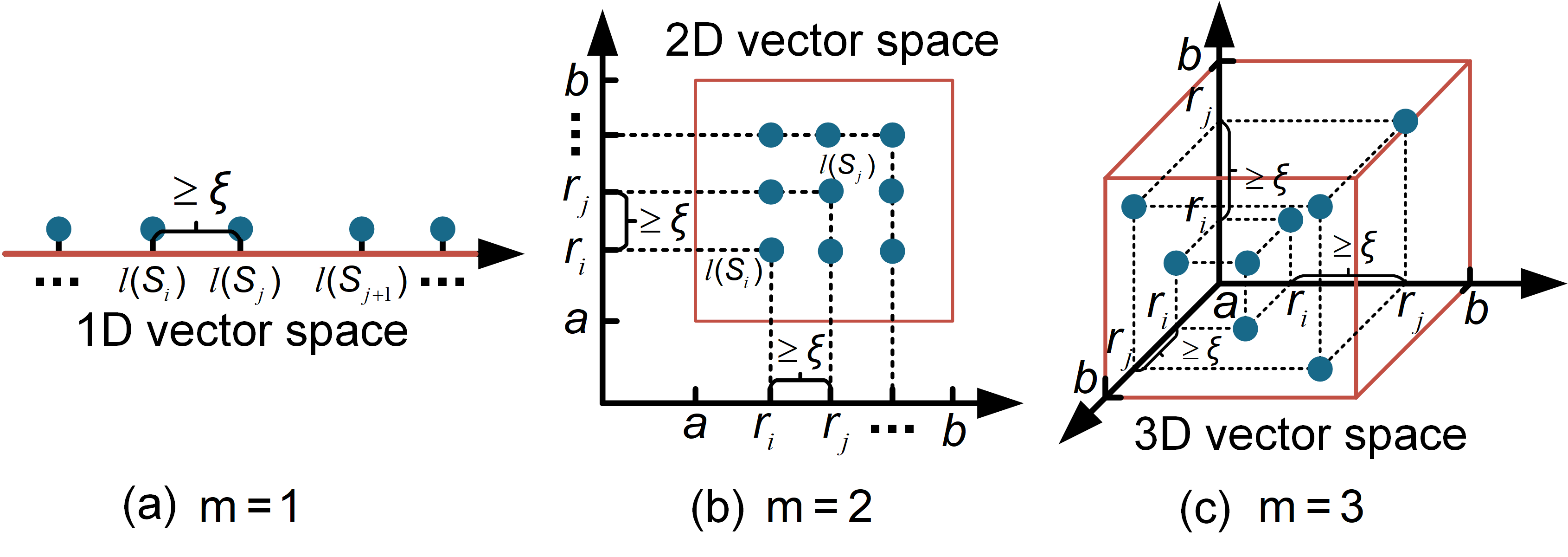}
	\caption{The labels of the training data set for the GNN model.}
	\label{fig:gnn-labels}
	\vspace{-0.2em}
\end{figure}

\underline{\textit{Training Data Set:}}
Algorithm~\ref{alg:gnn-training} describes the GNN model training for anchored subgraph embedding. As described in subsection~\ref{sub:feature-subgraphs}, all possible anchored subgraphs for each edge $(u, v)$ with $d(u) \leq d^* \lor d(v) \leq d^*$ in the data graph $G$ constitute the training data set $D$. In order to make non-isomorphic anchored subgraphs learn different embeddings as much as possible via the GNN model, we assign a unique $m$-dimensional ($m \geq 1$) vector as a label to each anchored subgraph in $D$, and make the vectors far enough from each other (e.g., the distance at least $\xi$). If $m = 1$, when $D$ is large, the 1-dimensional vector results in the label range being too wide, which is not conducive to model convergence. For example, in Fig.~\ref{fig:gnn-labels}(a), in order to keep the distance between vectors at least $\xi$, the 1-dimensional labels have a wide range. However, when $m > 1$, labels can be restricted to a small space (e.g., a rectangle or a cube in Fig.~\ref{fig:gnn-labels}(b)(c).), and kept far enough from each other, which is beneficial for model convergence. Thus, we adopt an $m$-dimensional vector as the label, where $m > 1$. Specifically, we randomly choose a range $[a, b]$, and generate $\lceil \sqrt[m]{|D|} \rceil$ values $\{r_1, r_2, \dots, r_{\lceil \sqrt[m]{|D|} \rceil}\}$ from $[a, b]$, where the distance between any two values is at least $\xi$. The value of each dimension in the $m$-dimensional vector is $r \in \{r_1, r_2, \dots, r_{\lceil \sqrt[m]{|D|} \rceil}\}$. Therefore, we can combine ${\lceil \sqrt[m]{|D|} \rceil}^m \geq |D|$ types of $m$-dimensional vectors, and the Euclidean distance between any two vectors is at least $\xi$. These vectors are assigned to anchored subgraphs in $D$ as labels. 

\begin{algorithm}[t]
	\footnotesize
	\caption{GNN Model Training for Embedding}
	\label{alg:gnn-training}
	\KwIn{a data graph $G$, vertex degree threshold $d^*$, the hop value $k$ of the anchored subgraph, and training epochs $\rho$}
	\KwOut{a trained GNN model $\mathcal{M}$}  
	\tcp{\textbf{Generate a training data set $D$}}
	initial a training data set $D = \emptyset$  \\
	\For{each edge $(u, v) \in G$, where $L(u) \leq L(v)$}{
		\uIf{degree $d_u \leq d^*$}{
			extract all possible anchored subgraphs $S^k_G(u, v)$ for $(u, v)$ \\
			add all anchored subgraphs $S^k_G(u, v)$ to $D$ 
		}
		\ElseIf{degree $d_v \leq d^*$}{
			extract all possible anchored subgraphs $S^k_G(u, v)$ spanned by $(u, v)$ and some vertices $w$ with $d(v, w) \leq k$ \\
			add all anchored subgraphs $S^k_G(u, v)$ to $D$
		}
	}
	attach a unique label $l(S^k_G(u, v))$ for each graph $S^k_G(u, v)$ in $D$ \\
	\tcp{\textbf{Train a GNN Model Until Epochs $\rho$}}
	\While{$epochs < \rho$}{
		\For{each batch $B \subseteq D$}{
			obtain embedding vectors of graph in $B$ via $\mathcal{M}$ \\
			compute the loss function $\mathcal{L}(D)$ of $\mathcal{M}$ by Equation~\ref{eqn:gin-loss-function} \\
			$M.update(\mathcal{L}(D))$
		}
	}
	\Return{the trained GNN model $\mathcal{M}$} 
\end{algorithm}

\underline{\textit{Training Loss of the GNN Model:}}
Existing method GNN-PE~\cite{ye2024efficient} requires that the GNN model be trained until it overfits the training data set to ensure exact subgraph matching, which results in a high offline computation cost. In our GNN-AE, we store all edges with the same feature embedding in the data graph $G$ offline in an index collection without omission. We only need to ensure that isomorphic features have the same embedding, which can guarantee exact subgraph matching. The GNN model is naturally able to map isomorphic anchored subgraphs to the same embedding. Thus, our GNN model does not have to overfit the training data set.

In addition, the output of two non-isomorphic anchored subgraphs via the GNN model may be mapped to the same embedding, which is called \textbf{\textit{embedding conflict}}. An excellent GNN model should \textit{minimize occurrences of embedding conflicts, which can help the query edge filter out more invalid (i.e., does not belong to the final match results) candidate matches in subgraph matching} (refer to subsection~\ref{sub:matching-anchor-graph-embedding}). Therefore, for any two anchored subgraphs $S, S' \in D$, we define the loss function as follows:
\begin{equation}\label{eqn:gin-loss-function}
	\mathcal{L}(D) = \sum_{S, S' \in D} max(0, \|l(S) - l(S')\| - \|o(S) - o(S')\|).
\end{equation}
where $o(S)$ is the output embedding of the anchored subgraph $S$ via the GNN, and $l(S)$ is the attached unique label (i.e., a $m$-dimensional vector) of $S$ (line 9 of Algorithm~\ref{alg:gnn-training}). In the offline phase, we do not need to perform graph isomorphism testing for the anchored subgraphs in $D$ to distinguish isomorphic graphs. There are two cases in the loss function:

\noindent\textbullet~\emph{Case~1: $S \cong S'$.} If $S$ is isomorphic to $S'$, the GNN model naturally maps $S$ and $S'$ to the same embedding, that is, $o(S) = o(S')$. Thus, the formula in $\sum$ degenerates into $max(0, \|l(S) - l(S')\|)$. Since we assign different vectors to $S$ and $S'$ as labels before training, $\|l(S) - l(S')\| > 0$ and $\|l(S) - l(S')\|$ does not change with training epochs. In other words, for any two anchored subgraphs $S, S' \in D$, their loss in $\mathcal{L}(D)$ is fixed in training epochs.
	
\noindent\textbullet~\emph{Case~2: $S \not\cong S'$.} If $S$ is not isomorphic to $S'$, the loss function will make the distance between $o(S)$ and $o(S')$ not less than the distance between $l(S)$ and $l(S')$ as the training progresses. When the distance between $o(S)$ and $o(S')$ is less than the distance between $l(S)$ and $l(S')$, $max(0, \|l(S) - l(S')\| - \|o(S) - o(S')\|) > 0$, otherwise $max(\cdot) = 0$. Therefore, Equation~\ref{eqn:gin-loss-function} can make the dispersion of non-isomorphic $S$ and $S'$ in the embedding space no weaker than our assigned $l(S)$ and $l(S')$.

In summary, the GNN model ensures that isomorphic anchored subgraphs have the same embedding. Meanwhile, the loss function $\mathcal{L}(D)$ makes non-isomorphic subgraphs have different embeddings as much as possible. After training, we obtain embeddings for each anchored subgraph in $D$ via the trained GNN model (refer to line 10 of Algorithm~\ref{alg:gnn-ae-framework}).

\noindent{\bf The Quality of Anchored Subgraph Embeddings.}
We provide a theoretical guarantee on the quality of these anchored subgraph embeddings generated by the GNN model.

\underline{\textit{GNN Model Expressive Power vs. WL Test:}}
The Weisfeiler -Lehman (WL) graph isomorphism test~\cite{leman1968reduction} is a powerful measure that can distinguish most non-isomorphic graphs. It can measure the power of distinguishing non-isomorphic graphs for the model~\cite{xu2018powerful}. Here, we provide rigorous theorems for the GNN model in mapping isomorphic graphs and distinguishing non-isomorphic graphs from a WL test perspective. 


\begin{lemma}
	\label{lem:gnn-wl}
	Let $G$ and $G'$ be any two non-isomorphic graphs. If a graph neural network $M: G \to \mathbb{R}^m$ maps $G$ and $G'$ to different embeddings, the Weisfeiler-Lehman test also decides $G$ and $G'$ are not isomorphic.
\end{lemma}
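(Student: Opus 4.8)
The plan is to prove the contrapositive: I will show that whenever the WL test fails to separate $G$ and $G'$ (declaring them possibly isomorphic after $T$ refinement rounds), the graph neural network $M$ is forced to assign them the same embedding, $M(G)=M(G')$. Contraposing this implication yields exactly the statement of the lemma. The core of the argument is to establish that the WL coloring is at least as fine as the family of node representations produced by the GNN layers, so that any structural distinction the GNN can detect has already been recorded by the WL refinement.

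First I would run the two recurrences in parallel for $t=0,1,\dots,T$: the WL refinement $l_v^{(t+1)} = \mathrm{HASH}\bigl(l_v^{(t)},\{\!\{\, l_u^{(t)} : u\in N(v)\,\}\!\}\bigr)$ with an injective $\mathrm{HASH}$ acting on (color, color-multiset) pairs, and the GNN update of Eq.~(\ref{eqn:gin-layer}), which I abstract as $h_v^{(t+1)}=\psi\bigl(h_v^{(t)},\{\!\{\, h_u^{(t)} : u\in N(v)\,\}\!\}\bigr)$ for a deterministic aggregation $\psi$ (not assumed injective). The key claim to establish is the node-level domination: for every round $t$ and all vertices, $l_v^{(t)}=l_{v'}^{(t)}$ implies $h_v^{(t)}=h_{v'}^{(t)}$.

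I would prove this claim by induction on $t$. The base case is immediate, since both $l_v^{(0)}$ and $h_v^{(0)}$ are functions of the vertex label $L(v)$ alone. For the inductive step, assume $l_v^{(t+1)}=l_{v'}^{(t+1)}$; injectivity of $\mathrm{HASH}$ forces $l_v^{(t)}=l_{v'}^{(t)}$ together with the equality of neighbor color-multisets $\{\!\{\, l_u^{(t)} : u\in N(v)\,\}\!\}=\{\!\{\, l_{u'}^{(t)} : u'\in N(v')\,\}\!\}$. The equal multisets yield a color-preserving bijection $N(v)\to N(v')$, and pushing the induction hypothesis through this bijection shows that the neighbor feature-multisets coincide, $\{\!\{\, h_u^{(t)}\,\}\!\}=\{\!\{\, h_{u'}^{(t)}\,\}\!\}$; feeding equal arguments into the deterministic $\psi$ then gives $h_v^{(t+1)}=h_{v'}^{(t+1)}$.

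Finally I would lift the node-level claim to the graph level. If WL cannot separate $G$ and $G'$, then the global color-multisets $\{\!\{\, l_v^{(T)} : v\in G\,\}\!\}$ and $\{\!\{\, l_{v'}^{(T)} : v'\in G'\,\}\!\}$ are equal; by the claim (again via the induced bijection) the global feature-multisets are equal, so the permutation-invariant $readout$ of Eq.~(\ref{eqn:gin-readout}) followed by the linear map of Eq.~(\ref{eqn:gin-fully}) outputs $M(G)=M(G')$, completing the contrapositive. The step I expect to be the main obstacle is the inductive bookkeeping over multisets: equal WL neighbor-multisets must be converted into equal GNN neighbor-multisets, which requires transporting the induction hypothesis along a bijection rather than comparing vertices index-by-index, and it also requires aligning the GNN depth with the number of WL rounds so that the final readout comparison is legitimate.
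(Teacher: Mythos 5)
Your proposal is correct and coincides with the argument the paper relies on: the paper does not prove Lemma~\ref{lem:gnn-wl} itself but defers to the reference~\cite{xu2018powerful} (Lemma~2 of the GIN paper), and your contrapositive argument---inducting on rounds to show that WL colors refine GNN features node-by-node, then lifting equal color multisets to equal feature multisets through the permutation-invariant readout---is exactly that cited proof.
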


The proof of Lemma~\ref{lem:gnn-wl} can be found in~\cite{xu2018powerful}. It shows that classic GNN can be as powerful as the WL test in distinguishing non-isomorphic graphs. Thus, we have Theorem~\ref{the:isomorphic-equal}. 

\begin{theorem}\label{the:isomorphic-equal}
	For two graphs $G$ and $G'$, if $G \cong G'$, we have their embeddings $o(G) = o(G')$.
\end{theorem}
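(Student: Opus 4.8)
The plan is to exploit the permutation-equivariance of each GIN layer together with the permutation-invariance of the readout and the final linear map, which together realize the order-invariance asserted in Property~\ref{def:order-invariant}.

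First, I would translate the isomorphism into linear-algebraic terms. Since $G \cong G'$, Definition~\ref{def:graph-isomorphism} supplies a label-preserving, adjacency-preserving bijection $f: V \to V'$. Fixing any vertex ordering on $G$ and transporting it through $f$ to $G'$ yields a permutation matrix $P$ with $A' = P^{T} A P$ for the adjacency matrices; moreover, since label encoding assigns identical initial features to identically labeled vertices and $f$ preserves labels, the initial feature matrices satisfy ${H'}^{(0)} = P^{T} H^{(0)}$, i.e.\ ${h'}^{(0)}_{f(v)} = h^{(0)}_{v}$ for every $v \in V$.

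Next, I would prove by induction on the layer index $t$ that the GIN layers are equivariant, namely ${h'}^{(t)}_{f(v)} = h^{(t)}_{v}$ for all $v$ and all $t$. The base case $t = 0$ is the initial-feature identity above. For the inductive step, the crucial observation is that $f$ restricts to a bijection between $N(v)$ and $N'(f(v))$ by adjacency preservation, so the neighbor aggregate $a^{(t)}_{v} = \sum_{u \in N(v)} h^{(t)}_{u}$ is matched term-for-term by ${a'}^{(t)}_{f(v)}$; since a sum does not depend on the order of its summands, the two aggregates coincide. Feeding the equal pairs $\big((1+\epsilon^{(t+1)}) h^{(t)}_{v},\, a^{(t)}_{v}\big)$ through the \emph{shared} $MLP^{(t+1)}$ (both graphs are processed by the same model, hence the same learned parameters) gives ${h'}^{(t+1)}_{f(v)} = h^{(t+1)}_{v}$, closing the induction.

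Finally, I would push this equivariance through the readout and the linear layer to obtain invariance. Because $f$ is a bijection on the vertex sets and the equivariance identity pairs every $h^{(t)}_{v}$ with an equal ${h'}^{(t)}_{f(v)}$, the sum/mean readout of Equation~\ref{eqn:gin-readout} returns the same aggregate over both vertex sets, so $h^{(t)}_{G} = h^{(t)}_{G'}$; applying the common weight matrix $W$ as in Equation~\ref{eqn:gin-fully} then yields $o(G) = o(G')$. I expect the main obstacle to be the bookkeeping in the inductive step: one must explicitly invoke that $f$ carries $N(v)$ onto $N'(f(v))$ and use the permutation-invariance of the sum (precisely the order-invariant property of Property~\ref{def:order-invariant}) to license the reindexing, while also making explicit that one and the same model with shared parameters $\epsilon^{(t)}$, $MLP^{(t)}$, and $W$ is applied to both graphs---without this shared-parameter assumption the statement would fail.
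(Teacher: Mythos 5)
Your proof is correct, but it takes a genuinely different route from the paper. The paper's proof is indirect: it invokes Lemma~\ref{lem:gnn-wl} (the result from Xu et al.\ that a GNN distinguishing two graphs implies the WL test distinguishes them) and applies its contrapositive, together with the soundness of the WL test on isomorphic graphs, to conclude $o(G) = o(G')$ in two lines. Your proof is instead a direct, self-contained structural argument: you transport the isomorphism into a permutation of the adjacency and initial feature matrices, prove layer-wise equivariance of the GIN updates by induction (using that $f$ carries $N(v)$ bijectively onto $N'(f(v))$ and that summation is order-independent), and then collapse equivariance to invariance through the readout and the shared linear layer. Each approach has advantages. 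The paper's argument is shorter and situates the GNN's behavior within the WL hierarchy, which it reuses for Theorem~\ref{the:wl-gin}; however, it is slightly strained formally, since Lemma~\ref{lem:gnn-wl} is quantified over \emph{non-isomorphic} pairs, so applying its contrapositive to isomorphic graphs requires reading the lemma more broadly than stated. Your proof avoids that issue entirely, needs no WL machinery, makes the essential shared-parameter assumption explicit (which the paper leaves implicit), and generalizes verbatim to any message-passing architecture with permutation-invariant aggregation and readout---it is, in effect, a rigorous proof of the order-invariance asserted in Property~\ref{def:order-invariant} rather than an appeal to it.
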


\begin{proof}
	If $G \cong G'$, they must satisfy the WL test~\cite{leman1968reduction, xu2018powerful}. Then, the proof immediately follows from the inverse negative proposition of Lemma~\ref{lem:gnn-wl} that if the WL test decides that $G \cong G'$, a graph neural network $\mathcal{M}$ maps $G$ and $G'$ to the same embedding vector (i.e., $o(G) = o(G')$). 
\end{proof}

Theorem~\ref{the:isomorphic-equal} shows that for any two isomorphic anchored subgraphs, the GNN can map them to the same embedding, which is a theoretical guarantee that we do not lose any candidate matches for exact subgraph matching.

\begin{theorem}\label{the:wl-gin}
	For any $t \in \mathbb{Z}^+$, if the vertex degrees and the feature dimension are finite, there exists a $t$-layered GIN model that satisfies, for two graphs $G$ and $G'$, if the 1-WL distinguishes $G$ and $G'$ as non-isomorphic within $t$ rounds, the embeddings $o(G)$ and $o(G')$ via the GIN are different. 
\end{theorem}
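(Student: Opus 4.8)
The plan is to show constructively that a $t$-layer GIN can simulate $t$ rounds of $1$-WL color refinement, so that its node representations injectively encode the $1$-WL colors and an injective readout then separates any two graphs whose $t$-round color histograms differ. First I would recall the operational meaning of the hypothesis: $1$-WL maintains a color $c^{(s)}_v$ for each vertex, refined by $c^{(s+1)}_v = \mathrm{HASH}\bigl(c^{(s)}_v, \{\!\{ c^{(s)}_u : u \in N(v)\}\!\}\bigr)$, and it declares $G$ and $G'$ non-isomorphic within $t$ rounds exactly when the multisets $\{\!\{ c^{(t)}_v : v \in G\}\!\}$ and $\{\!\{ c^{(t)}_v : v \in G'\}\!\}$ differ. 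The goal is therefore to build GIN parameters so that $o(G)$ is an injective function of this multiset; this complements Lemma~\ref{lem:gnn-wl}, which gives the converse direction.

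The core is an induction on the layer index $s$ claiming that there is a choice of the label encoding, the scalars $\epsilon^{(s)}$, and the $\mathrm{MLP}^{(s)}$ in Eq.~\eqref{eqn:gin-layer} such that the map $c^{(s)}_v \mapsto h^{(s)}_v$ is injective over all vertices of $G$ and $G'$ simultaneously. The base case $s=0$ holds because label encoding assigns distinct initial embeddings to distinct labels, i.e. to the distinct $0$-round colors. For the inductive step, I would invoke the injectivity lemma for aggregation from \cite{xu2018powerful}: since the vertex degrees and the feature dimension are finite, the set of colors appearing at level $s$ is finite and the neighbor multisets have bounded size, so there exists a scalar $\epsilon^{(s+1)}$ for which the pre-activation $(1+\epsilon^{(s+1)}) h^{(s)}_v + \sum_{u \in N(v)} h^{(s)}_u$ is an injective function of the pair $\bigl(c^{(s)}_v, \{\!\{ c^{(s)}_u \}\!\}\bigr)$. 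Because this pair determines $c^{(s+1)}_v$ through the WL hash, and because the domain is finite, a sufficiently wide $\mathrm{MLP}^{(s+1)}$ can realize an injective map from the pre-activation to a representation $h^{(s+1)}_v$ that injectively encodes $c^{(s+1)}_v$; this closes the induction.

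With injectivity established at level $t$, the multiset $\{\!\{ h^{(t)}_v : v \in G\}\!\}$ is the image of $\{\!\{ c^{(t)}_v\}\!\}$ under an injective relabeling, so the two multisets coincide iff the corresponding color histograms coincide. I would then choose the $readout$ in Eq.~\eqref{eqn:gin-readout}, followed by the linear map of Eq.~\eqref{eqn:gin-fully}, to be an injective multiset function — again realizable over the finite color domain by summing an appropriately chosen feature map, as in the same lemma of \cite{xu2018powerful} — so that distinct node-representation multisets yield distinct graph embeddings $o(G)$. Consequently, if $1$-WL separates $G$ and $G'$ within $t$ rounds, their $t$-round color histograms differ, hence the node-representation multisets differ, hence $o(G) \neq o(G')$, which is the claim.

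I expect the main obstacle to be the inductive injectivity of the aggregation rather than the bookkeeping around WL: one must exhibit a single $\epsilon^{(s+1)}$ and MLP that keep the pair $\bigl(c^{(s)}_v,\{\!\{c^{(s)}_u\}\!\}\bigr)$ separable after summation, and it is precisely the finiteness of the vertex degrees and of the feature dimension (the hypotheses of the theorem) that guarantees a finite color alphabet and bounded multiset sizes, making such an $\epsilon^{(s+1)}$ exist and allowing an MLP to \emph{exactly} realize the required injection on a finite set. Care is also needed to ensure the chosen parameters work \emph{uniformly} across all layers $s \le t$ and across both graphs at once, and that the injectivity of the readout is preserved by the final linear layer $W$ in Eq.~\eqref{eqn:gin-fully}.
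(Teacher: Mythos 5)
Your proposal is correct and takes essentially the same route as the paper's proof: both argue that the GIN aggregation/update can injectively simulate the 1-WL rehashing step, using the finiteness of vertex degrees and feature dimension together with the injectivity lemma of \cite{xu2018powerful} to obtain injective MLPs and an injective readout, so that any pair separated by 1-WL within $t$ rounds receives distinct embeddings. Your layer-by-layer induction and your insistence that the MLP \emph{exactly} realize the injection on a finite domain (rather than merely approximate it, as the paper loosely states) simply make rigorous what the paper presents as a sketch.
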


\begin{proof}
	Given two graphs $G$ and $G'$, in each round of the 1-WL algorithm, the multi-set labels rehashing via the hash function is the key for 1-WL to distinguish $G$ and $G'$ as non-isomorphic. In the GIN, the aggregation and update functions can be seen as the hash function of the 1-WL~\cite{sato2020survey, zhang2024expressive}. Since the vertex degrees and the feature dimension of $G$ and $G'$ are finite, the round that 1-WL stops can be a constant $t$. In the GIN, the vertex features it supports are countable (refer to Lemma 5 in~\cite{xu2018powerful}). Thus, the MLPs in GIN can approximate an injective function~\cite{sato2020survey}, which makes the GIN as powerful as the hash function in 1-WL in terms of rehashing power. Thus, for two graphs $G$ and $G'$, if the 1-WL distinguishes $G$ and $G'$ as non-isomorphic within $t$ rounds, the embeddings $o(G)$ and $o(G')$ via the GIN are different. The theorem holds.
\end{proof}

Theorem~\ref{the:wl-gin} shows that \textit{the GIN is as strong as the 1-WL test, which provides a theoretical basis for applying a GIN model rather than other embedding methods} (e.g., graph factorization emdedding~\cite{choudhury}) in our framework. Besides, GIN is more flexible than the WL test in its ability to map graphs to the embedding space and handle continuous vertex features.   

\underline{\textit{The GNN Training w.r.t. The Embedding Quality:}}
We train a GNN model (with 2 GIN layers modeled by a 1-layer MLP, SUM as the readout function, the $readout$ hidden dimension $n = 10$, and the output embedding vector dimension $m = 3$) over a large training data set $D$, where by default $|V(G)| = 500K$, $avg\_deg(G) = 5$ and the vertex label domain size $|\Sigma| = 100$. The learning rate of the Adam optimizer $\eta = 0.001$ and the batch size is $1K \sim 2K$.

\begin{figure}[!t]
	\centering
	\begin{minipage}{\linewidth}
		\centering
		\subfloat[$|V(G)|=500K$]{\includegraphics[width=0.4\linewidth,height=1.1in]{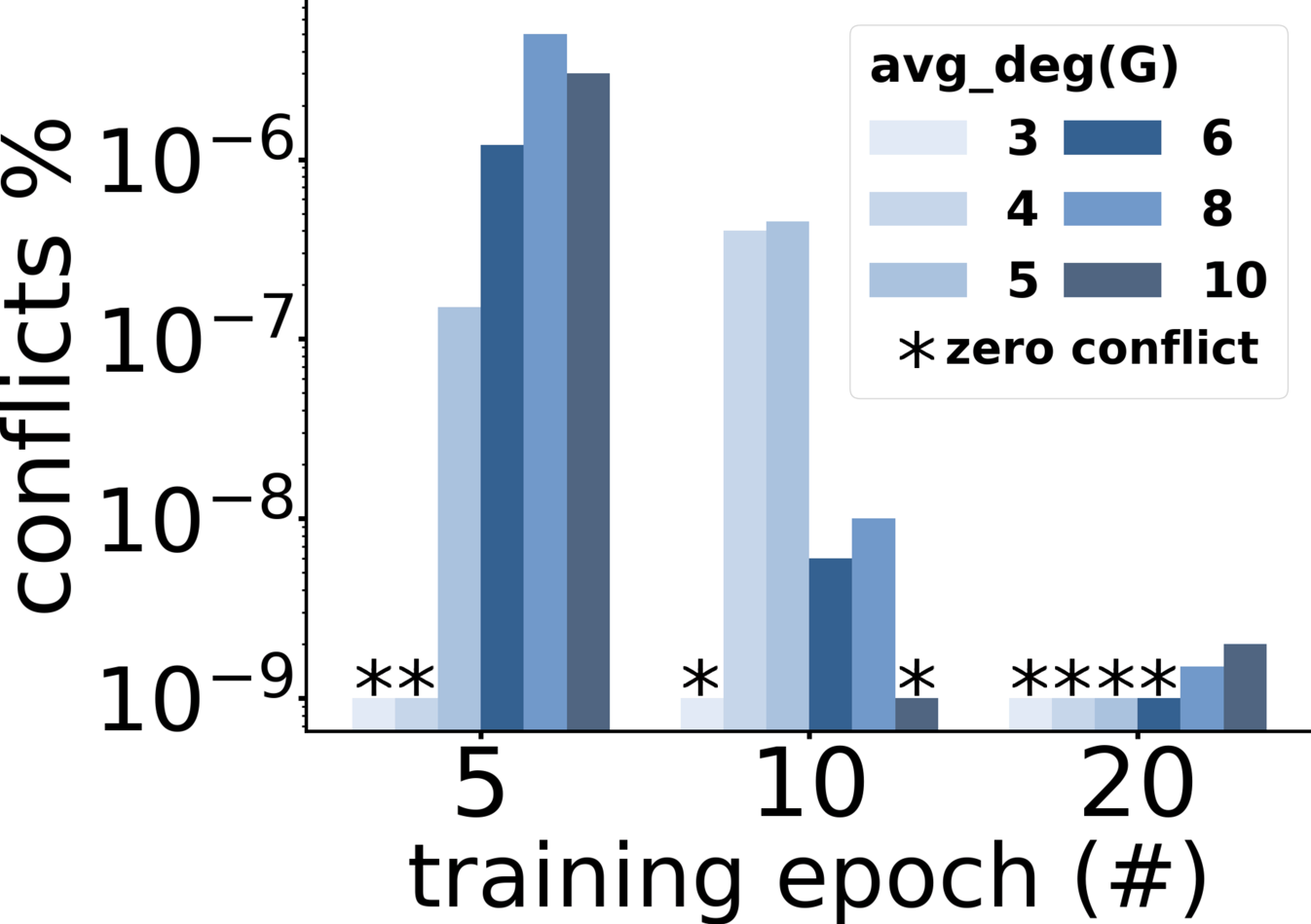}}
		\hspace{0.5cm}	
		\subfloat[$avg\_deg(G)=5$]{\includegraphics[width=0.4\linewidth,height=1.1in]{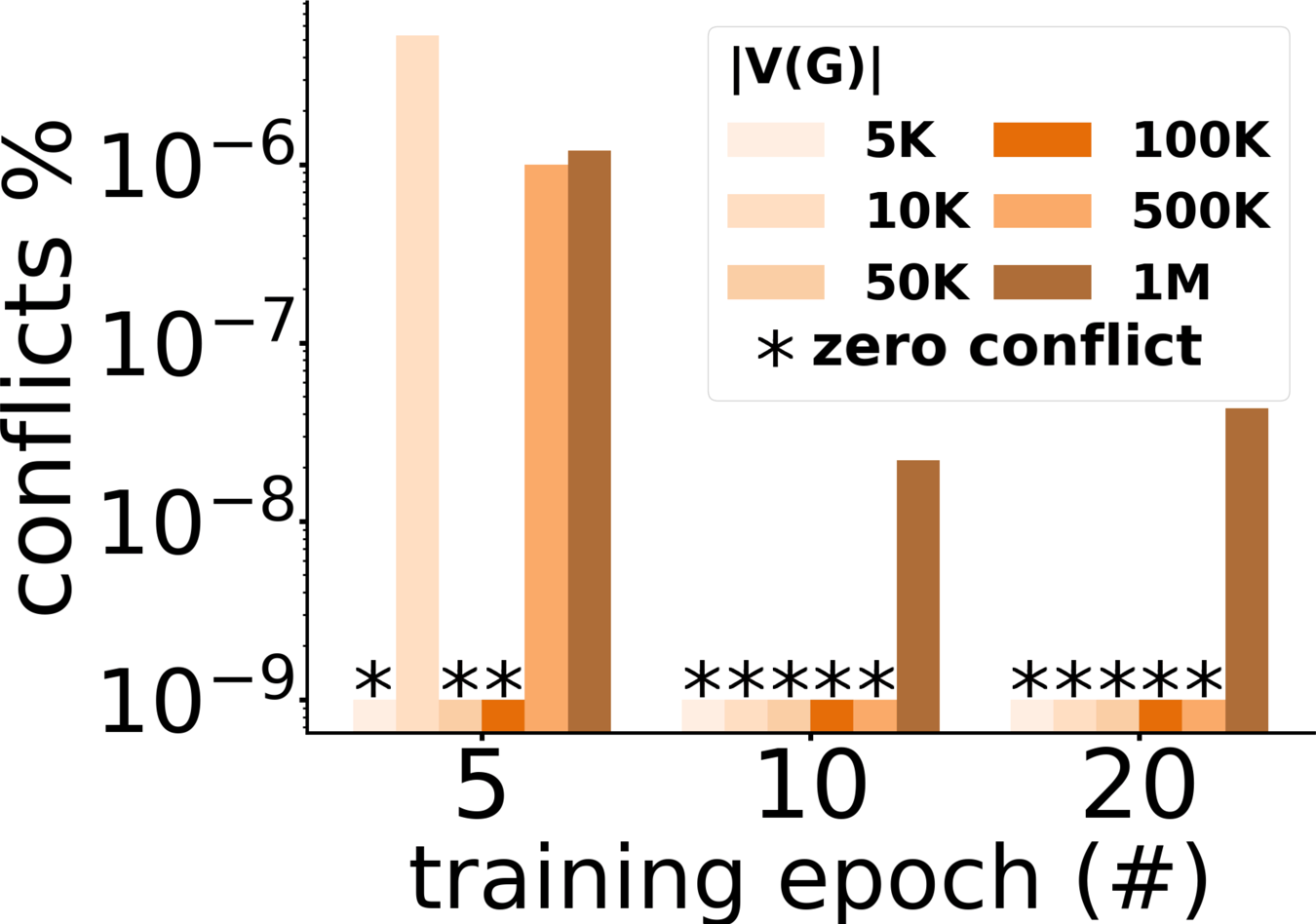}}
	\end{minipage}
	\caption{Illustration of the anchored subgraph embedding quality.}
	\label{fig:conlict-train-data}
	\vspace{-0.2em}
\end{figure}

Fig.~\ref{fig:conlict-train-data} reports the anchored 1-radius subgraph embedding conflict ratio of our GNN-AE. We vary $avg\_deg(G)$ from 3 to 10 in Fig.~\ref{fig:conlict-train-data}(a) and adjust $|V(G)|$ from $5K$ to $1M$ in Fig.~\ref{fig:conlict-train-data}(b). We adjust the training epochs $\rho \in \{5, 10, 20\}$. From the Fig.~\ref{fig:conlict-train-data}, the anchored subgraph embedding conflict ratio via the GNN model is not greater than $10^{-5}\%$. This means that almost any two non-isomorphic anchored subgraphs are mapped to non-conflicting embeddings. 

\noindent{\bf Complexity Analysis of the GNN Training.}
For each GIN layer, let the number of layers in the MLP be $\beta$. The initial feature embedding dimension of the input and the hidden dimension in the MLP are both $n$. Assume we obtain the anchored 1-radius subgraphs in the data graph $G$. For an anchored subgraph $S$ in $D$, $|E(S)| \le d_v$ and $|V(S)| = |E(S)| + 1$. The time complexity of the computation on the GIN layer is $O(\beta \cdot n^2 \cdot |V(S)| + |E(S)|) = O(\beta \cdot n^2 \cdot d_v)$. The upper bound of the number of anchored subgraphs in the training data set $D$ is $\sum_{v \in V(G) \land d_v \le d^*}{(2^{d_v} - 1)}$. The time complexity of the $readout$ function is $O(n \cdot |V(S)|)$, and the linear layer is $O(m \cdot n)$. We set the number of training epochs to be $\rho$. Thus, the total time complexity of the GNN training is $O(\rho \cdot \sum_{v \in V(G) \land d_v \le d^*}{2^{d_v}} \cdot (\beta \cdot n^2 \cdot d_v + m \cdot n))$.

\subsection{Matching via Anchored Subgraph Embedding}
\label{sub:matching-anchor-graph-embedding}

As described in subsection~\ref{sub:feature-subgraphs}, in our GNN-AE, we set a threshold $d^*$ on vertex degree of the data graph $G$. For an edge $(u, v)$ with $L(u) \leq L(v)$, if the vertex degree $d(u) \leq d^*$ or $d(v) \leq d^*$, the anchored $k$-radius subgraph $S^k_G(u, v)$ is adopted as features; otherwise, all anchored $k$-radius paths $P^k_G(u, v)$ are adopted as features. Specifically, we divide the edges $(u, v)$ with $L(u) \leq L(v)$ in $G$ into four types:
\begin{itemize}[leftmargin=1em]
	\item If $d_u \leq d^* \land d_v \leq d^*$, $(u, v)$ is named a sparse-sparse edge.
	\item If $d_u \leq d^* \land d_v > d^*$, $(u, v)$ is named a sparse-dense edge.
	\item If $d_u > d^* \land d_v \leq d^*$, $(u, v)$ is named a dense-sparse edge.
	\item If $d_u > d^* \land d_v > d^*$, $(u, v)$ is named a dense-dense edge.
\end{itemize}
In this subsection, for the first three types of edges $(u, v)$, we build the subgraph matching relationships from the query graph $Q$ to the data graph $G$ via the embeddings of anchored subgraphs. For the fourth type of edges $(u, v)$, we build the matching relationships via the embeddings of anchored paths, which will be discussed in subsection~\ref{sub:matching-anchor-path-embedding}.

\noindent{\bf Sparse-Sparse Edge \& Sparse-Dense Edge.}  
Fig.~\ref{fig:query-emb-graph} illustrates an example of the subgraph matching relationship from the query $Q$ to the data graph $G$ via anchored 1-radius subgraph embeddings of the sparse-sparse and sparse-dense edges.

\begin{figure}[!t]
	\setlength{\abovecaptionskip}{-0.05cm}
	\centering
	\includegraphics[width=\linewidth]{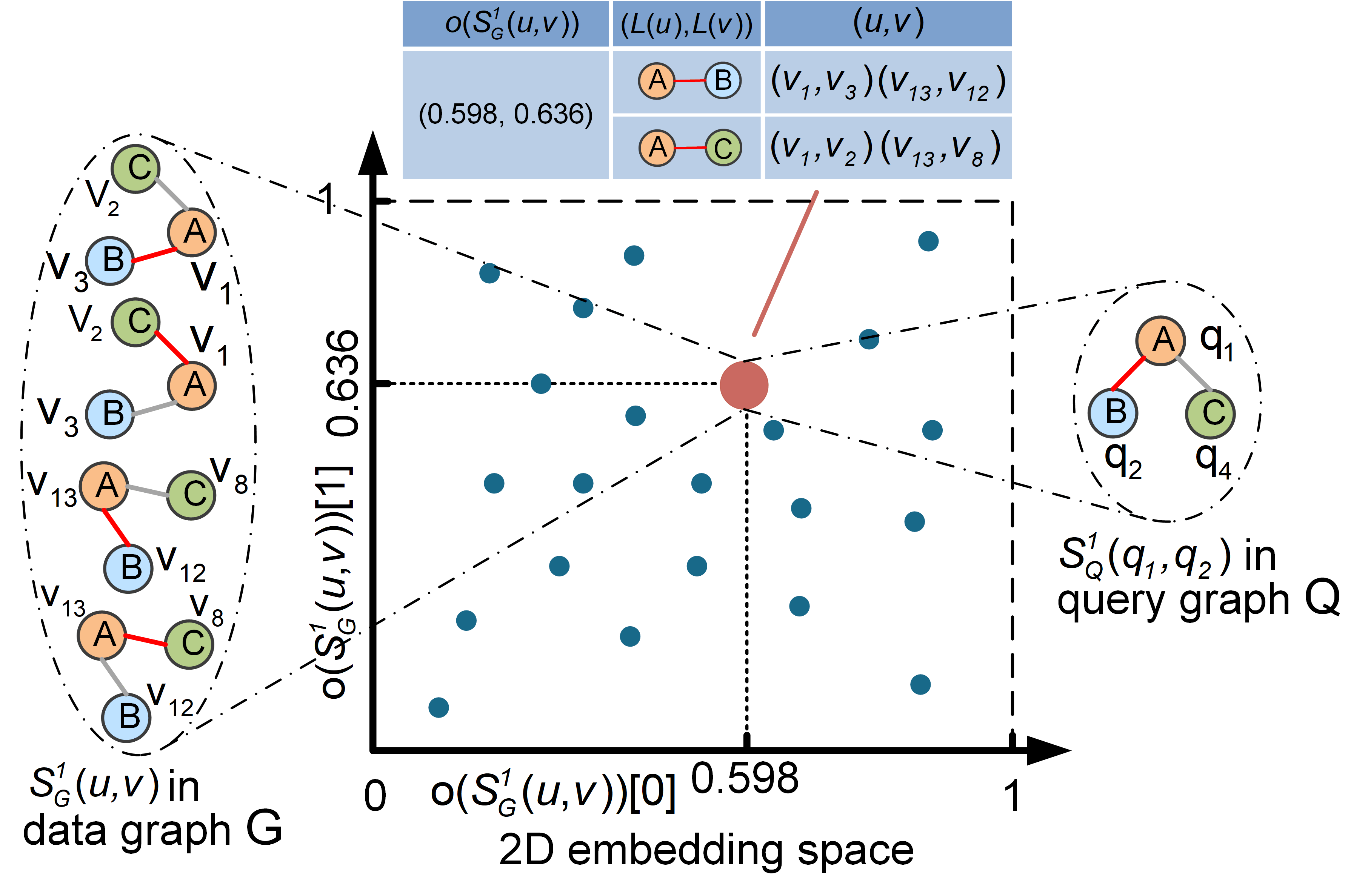}
	\caption{An example of building subgraph matching relationships via the anchored subgraph embeddings of sparse-sparse and sparse-dense edges.}
	\label{fig:query-emb-graph}
	\vspace{0em}
\end{figure} 

\begin{example}
	\label{exa:sm-relationships}
	Consider the data graph $G$ in Fig.~\ref{fig:example-sm}, let the vertex degree threshold $d^* = 3$. Edges $(v_1, v_3)$, $(v_1, v_2)$ and $(v_{13}, v_{12})$ are sparse-sparse edges, and $(v_{13}, v_8)$ is a sparse-dense edge. Their anchored 1-radius subgraphs have isomorphic graph structures mapped to the same position $(0.598, 0.636)$ in the embedding space via the GNNs. Each embedding maintains an entry, which consists of the embedding vector $o(S^1_G(u, v))$, edge labels $(L(u), L(v))$, and edges $(u, v)$. In Fig.~\ref{fig:query-emb-graph}, we merge edges with the same anchored subgraph embedding in $G$ into the same index entry. For an edge $(q_1, q_2)$ from the query graph $Q$, we extract its maximum anchored 1-radius subgraph $S^1_Q(q_1, q_2)$. Since $S^1_Q(q_1, q_2)$ is isomorphic to the above anchored subgraphs $o(S^1_G(u, v))$ from the data graph $G$, $S^1_Q(q_1, q_2)$ must be mapped to the same embedding position $(0.598, 0.636)$ by the GNN model. After aligning the edge labels $(L(u), L(v))$, we obtain the candidate matches $(v_1, v_3), (v_{13}, v_{12})$ (with sparse-sparse edge or sparse-dense edge types) in $G$ of the edge $(q_1, q_2)$.  
\end{example}

Due to embedding conflicts (i.e., two non-isomorphic graphs are mapped to the same embedding vector via the GNN model), some non-isomorphic anchored subgraphs may have the same embedding. The merge operation in Example~\ref{exa:sm-relationships} does not break the exact subgraph matching, which still guarantees a 100\% matching recall for each query edge. \textit{However, the non-isomorphic anchored subgraphs with embedding conflicts in the data graph $G$ will result in more candidate edges $(u, v)$ in an index entry. Furthermore, this results in more invalid candidate matches for a query edge and reduces query efficiency.} \textbf{Thus, it is important to ensure a high embedding quality in the embedding space (refer to subsection~\ref{sub:gnn-for-anchor-embedding})}. 

\noindent{\bf Dense-Sparse Edge.} 
For a dense-sparse edge $(u, v)$, the vertex degrees $d_u > d^* \land d_v \leq d^*$. According to the anchored subgraph definition~\ref{def:anchored-subgraph}, we need to obtain all possible subgraphs of $G$ spanned by the edge $(u, v)$ and some vertices $w$ with $d(u, w) \leq k$. However, due to the high degree of vertex $u$, there are a large number of anchored subgraphs. For example, a dense-sparse edge $(u, v)$ has at least $2^{d^*}$ anchored 1-radius subgraphs, which is unacceptable. Thus, for a dense-sparse edge, we obtain all its possible anchored subgraphs spanned by $(u, v)$ and some vertices $w$ with $d(v, w) \leq k$, denoted as $S'^k_G(u, v)$. Intuitively, this anchored subgraph $S'^k_G(u, v)$ is a subgraph of $G$ containing edge $(u, v)$ and spanned by extending vertices centered on the low degree vertex $v$.   

\begin{figure}[!t]
	\setlength{\abovecaptionskip}{-0.05cm}
	\centering
	\includegraphics[width=0.9\linewidth]{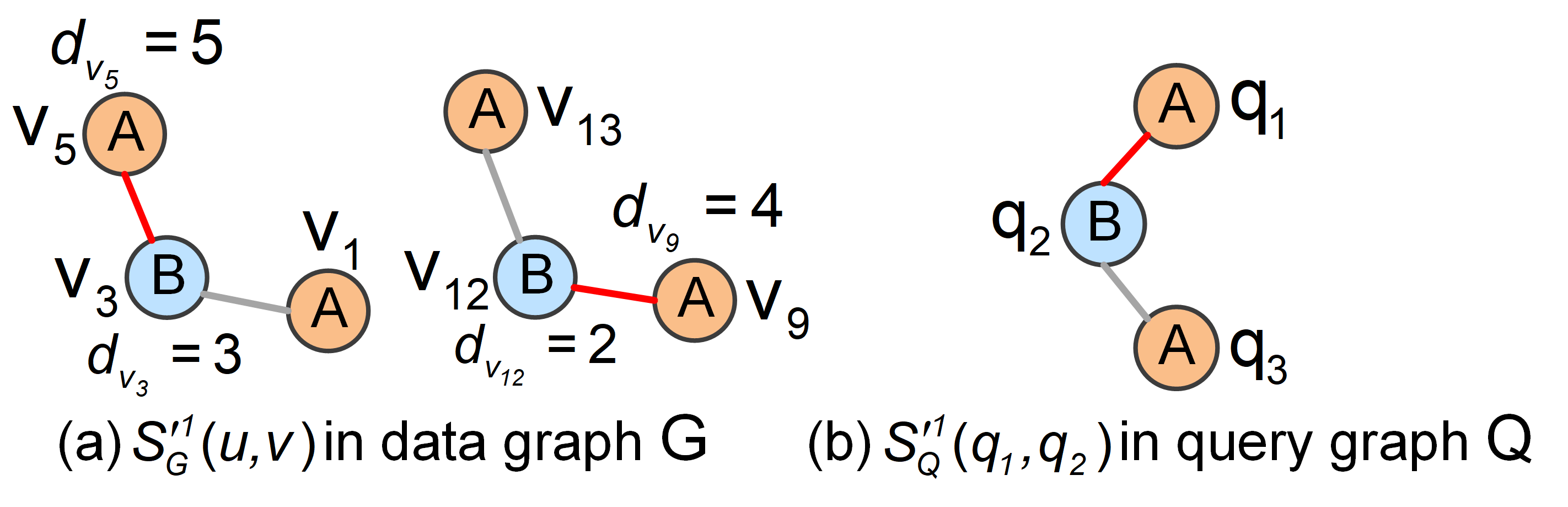}
	\caption{An example of the anchored subgraphs $S'^1_G(u, v)$ of the edge $(u, v)$ in the data graph $G$, and $S'^1_Q(q_1, q_2)$ of the edge $(q_1, q_2)$ in the query $Q$.}
	\label{fig:query-negative-graph}
	\vspace{0em}
\end{figure} 

\begin{example}
	\label{exa:query-negative-graph}
	Consider the data graph $G$ in Fig.~\ref{fig:example-sm}, let the acceptable vertex degree threshold $d^* = 3$. Fig.~\ref{fig:query-negative-graph}(a) illustrates an example of the anchored 1-radius subgraphs of the edge $(u, v)$ in the data graph $G$. In Fig.~\ref{fig:query-negative-graph}(a), the left is an anchored 1-radius subgraph $S'^1_G(v_5, v_3)$ of the dense-sparse edge $(v_5, v_3)$, and the right is an anchored 1-radius subgraph $S'^1_G(v_9, v_{12})$ of the dense-sparse edge $(v_9, v_{12})$. Fig.~\ref{fig:query-negative-graph}(b) illustrates the maximum anchored 1-radius subgraph $S'^1_Q(q_1, q_2)$ of the edge $(q_1, q_2)$ in the query graph $Q$.
\end{example}

For the anchored subgraphs $S'^1_G(u, v)$ and $S'^1_Q(q_1, q_2)$ in Fig.~\ref{fig:query-negative-graph}, we build subgraph matching relationships from the query $Q$ to the data graph $G$ similar to that in Fig.~\ref{fig:query-emb-graph}. $S'^1_G(u, v)$ and $S'^1_Q(q_1, q_2)$ in Fig.~\ref{fig:query-negative-graph} are mapped to the same embedding via the trained GNN model. After aligning edge labels $(L(u), L(v))$, we obtain candidate matches $(v_5, v_3), (v_9, v_{12})$ (with the dense-sparse edge type) in $G$ of the edge $(q_1, q_2)$. 

In Fig.~\ref{fig:query-emb-graph} and Fig.~\ref{fig:query-negative-graph}, we adopt anchored 1-radius subgraphs as examples. If the size or the maximum vertex degree of the data graph $G$ is not large, an anchored 2-radius or 3-radius subgraph is better. Because extracting all possible anchored 2-radius or 3-radius subgraphs is also acceptable, and due to the more complex structure, they can obtain fewer candidates for the query edge than the anchored 1-radius subgraph.

\noindent{\bf Complexity Analysis.}  
The anchored subgraph embeddings are obtained via the GNN model in subsection~\ref{sub:gnn-for-anchor-embedding}. We build a hash index $I_S$ for the sparse-sparse and sparse-dense edges and a hash index $I_{S'}$ for the dense-sparse edges to store embedding vectors and their additional maintained index entries. The time cost of building the indexes $I_S$ and $I_{S'}$ as follows:

When we obtain the anchored 1-radius subgraphs in the data graph $G$, the time complexity of obtaining anchored subgraph embeddings is $O(\sum_{v \in V(G) \land d_v \le d^*}{2^{d_v}} \cdot (\beta \cdot n^2 \cdot d_v + m \cdot n))$ (refer to the complexity analysis of the GNN training in subsection~\ref{sub:gnn-for-anchor-embedding}). During the index building, we merge the edges for the anchored subgraphs with the same embedding vector into the same index entry. The time complexity of the merge operation is $O(\sum_{v \in V(G) \land d_v \le d^*}{(2^{d_v} -1}))$. In summary, the total time complexity of building hash indexes for the anchored subgraphs is $O(\sum_{v \in V(G) \land d_v \le d^*}{2^{d_v}} \cdot (\beta \cdot n^2 \cdot d_v + m \cdot n))$.

\subsection{Matching via Anchored Path Embedding}
\label{sub:matching-anchor-path-embedding}

As described in subsection~\ref{sub:feature-subgraphs}, if the vertex degrees $d_u$ and $d_v$ of the edge $(u, v)$ are both very large in the data graph $G$, the number of anchored subgraphs $S^k_G(u, v)$ is also very large, making them ineffective to be a feature. For example, an edge $(u, v)$ with $d_u > d^* \land d_v > d^*$ has at least $2^{d^*}$ anchored 1-radius subgraphs. Extracting all possible anchored subgraphs will result in a high offline index storage cost. Thus, we adopt anchored paths (refer to Definition~\ref{def:anchored-path}) as features of dense-dense edges. Here, we discuss how to build matching relationships from the query graph $Q$ to the data graph $G$ via the anchored path embeddings for the dense-dense edge type.

\begin{figure}[!t]
	\setlength{\abovecaptionskip}{-0.05cm}
	\centering
	\includegraphics[width=\linewidth]{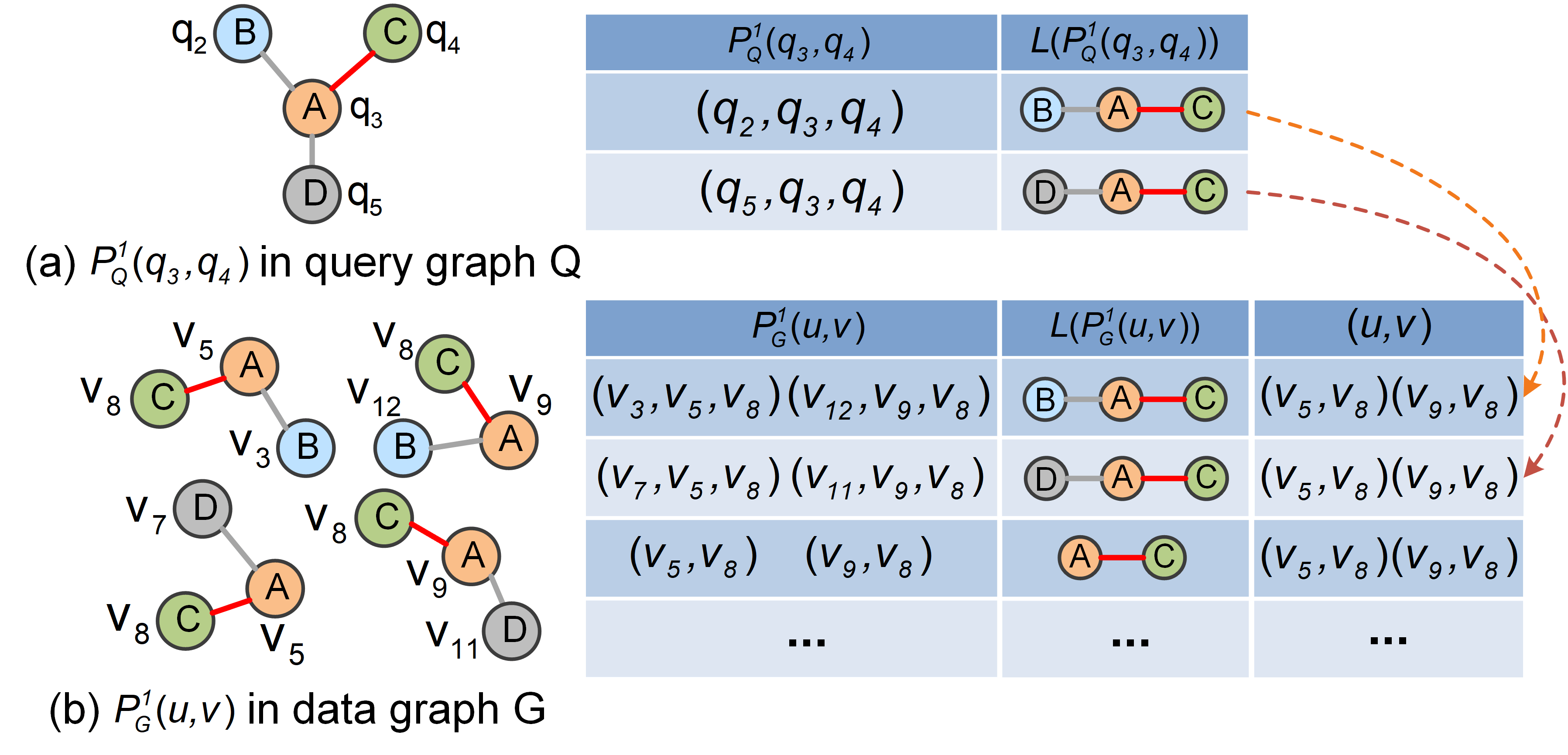}
	\caption{An example of building subgraph matching relationships via the anchored 1-radius path alignment.}
	\label{fig:path-alignment}
	\vspace{0em}
\end{figure} 

Unlike the anchored subgraph, the anchored path is simple in structure. There are usually multiple maximum anchored $k$-radius paths $P^k_G(u, v)$ with repect to $(u, v)$. Compared with the anchored subgraph, all possible anchored paths are greatly reduced. For example, an edge $(u, v)$ with $L(u) \leq L(v)$ has only $d_u$ possible anchored 1-radius paths. The lower bound on the number of anchored 1-radius paths for a dense-dense edge $(u, v)$ is $d^* + 1$, which is much less than $2^{d^*}$. Thus, extracting all possible anchored paths is acceptable.

\noindent{\bf Path Alignment \& Encoding.}  
Since the anchored path has a simple structure, we provide a \textit{path alignment} to directly build subgraph matching relationships from the query graph $Q$ to the data graph $G$ for the dense-dense edge type.

\begin{example}
	\label{exa:anchor-path-alignment}
	Consider the data graph $G$ and the query graph $Q$ in Fig.~\ref{fig:example-sm}, let the vertex degree threshold $d^* = 3$. $(v_5, v_8)$ and $(v_9, v_8)$ are dense-dense edges. We obtain all their possible anchored 1-radius paths. In Fig.~\ref{fig:path-alignment}, anchored paths $(v_3, v_5, v_8)$ and $(v_{12}, v_9, v_8)$ in the data graph $G$ have the isomorphic path structures. Thus, they are merged into the same index entry. For an edge $(q_3, q_4)$ in the query graph $Q$, its maximum anchored 1-radius path $(q_2, q_3, q_4)$ that can be aligned with anchored paths in the above entry in terms of path labels, so we obtain candidate matches $\{(v_5, v_8), (v_9, v_8)\}$ of the edge $(q_3, q_4)$. Similarly, we can obtain candidate matches $\{(v_5, v_8), (v_9, v_8)\}$ for $(q_3, q_4)$ via the maximum anchored 1-radius path $(q_5, q_3, q_4)$. The intersection $\{(v_5, v_8), (v_9, v_8)\}$ of two candidate match sets is the final candidate matches (with the dense-dense edge type) in $G$ of the edge $(q_3, q_4)$.
\end{example}

We provide a path encoding for efficient anchored path alignment. We encode an anchored path $(u_k, \dots, u_1, u, v)$ as a tuple of vertex labels $(L(u_k), \dots, L(u_1), L(u), L(v))$ on the path. For example, for the anchored path $(v_3, v_5, v_8)$, we encode it as a 3-tuple of vertex labels $(B, A, C)$. For a graph $G = (V, E, L, \Sigma)$, the upper bound on the number of anchored 1-radius path embeddings is $|\Sigma|^3 + |\Sigma|^2$. The proof is intuitive. The anchored 1-radius path has two possible cases $(u_1, u, v)$ and $(u, v)$. The number of possible anchored path embeddings is equivalent to the number of combinations of vertex labels on the path. Therefore, there are at most $|\Sigma|^3 + |\Sigma|^2$ anchored 1-radius path embeddings in the data graph $G$. 

\noindent{\bf Complexity Analysis.}  
Each anchored path is encoded and stored in a hash index $I_P$. If we extract the anchored 1-radius paths in the data graph $G$, the time complexity of obtaining all anchored paths in $P$ is $O(\sum_{(u, v) \in D'(G)}{d_u})$, where $D'(G)$ is the set of dense-dense edges in the data graph $G$. For each anchored path, we scan the vertex labels to encode its embedding. The time cost of encoding and merging operation is also $O(\sum_{(u, v) \in D'(G)}{d_u})$. Thus, the overall time complexity of building the hash index $I_P$ is $O(\sum_{(u, v) \in D'(G)}{d_u})$. 

\section{Subgraph Matching with Feature Embedding}
\label{sec:subgraph-matching-growth}

In this section, we introduce how to obtain the locations of all matches via anchored feature (anchored subgraph and path) embeddings (as discussed in Section~\ref{sec:gnn-based-feature-embedding}) in parallel.

\subsection{DFS-based Query Strategy}
\label{sub:dfs-query-strategy} 

Subgraph matching is finding the correspondence from query vertices to data graph vertices. The edges selected from the query graph $Q$ should cover all query vertices, without having to regard every edge in $Q$ as a matching unit. Thus, we design a \textit{depth-first search (DFS) query strategy}, and select the edges in the DFS traversal to match the edges in the data graph. The DFS query strategy divides the edges in the query graph into two types: \textit{DFS edges} (i.e., edges traversed by DFS), and \textit{non-DFS edges} (i.e., edges not traversed by DFS). 

Specifically, starting from a vertex in the query $Q$, we perform a DFS and obtain edges in the search order. Fig.~\ref{fig:dfs-query}(a) provides an example. Consider the query $Q$ in Fig.~\ref{fig:example-sm}, $Q$ has four DFS edges $\{(q_1, q_2), (q_2, q_3), (q_3q_4), (q_3, q_5)\}$, which are edges in the DFS pathway, and a non-DFS edge $(q_1, q_4)$, which is an edge not traversed by DFS. For each DFS edge in $Q$, we extract its maximum anchored $k$-radius subgraphs and maximum anchored $k$-radius paths (as shown in Fig.~\ref{fig:dfs-query}(b)).

\begin{figure}[!t]
	\setlength{\abovecaptionskip}{-0.05cm}
	\centering
	\includegraphics[width=0.98\linewidth]{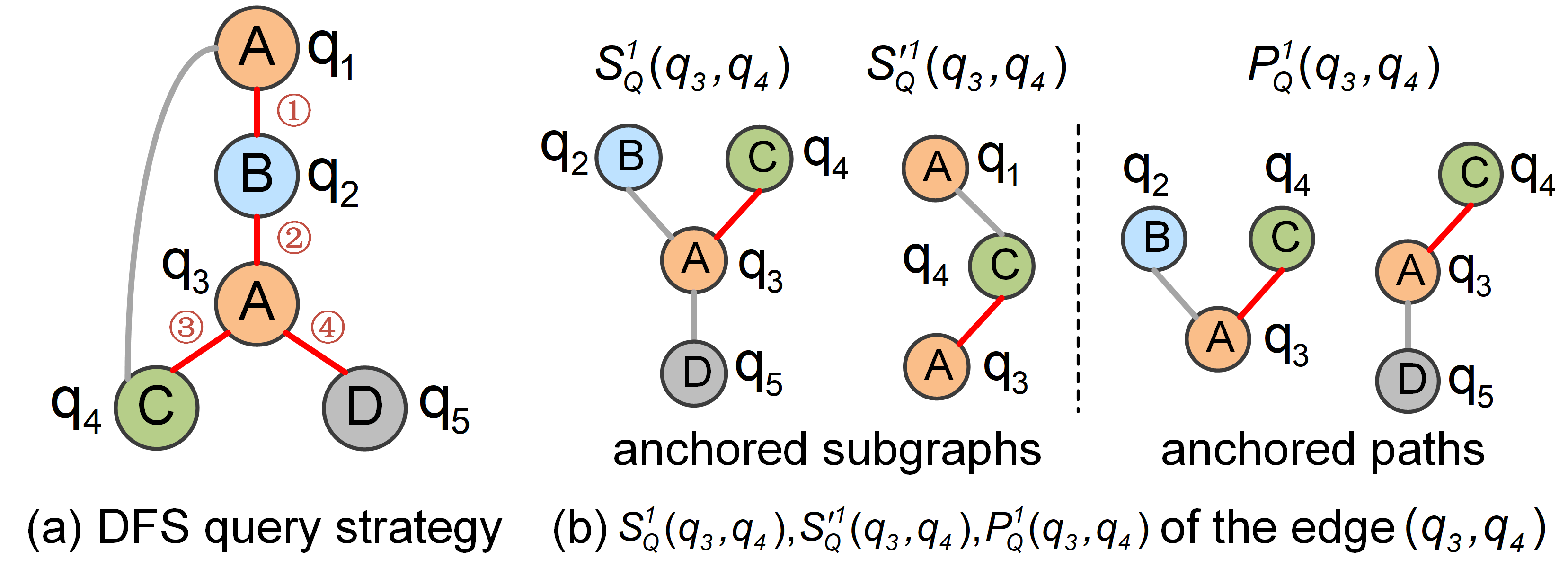}
	\caption{The DFS strategy for the query graph $Q$ and anchored subgraphs/paths for the edge $(q_3, q_4)$ in $Q$.}
	\label{fig:dfs-query}
	\vspace{0em}
\end{figure}

\subsection{Feature-based Anchor Matching Mechanism}
\label{sub:anchor-matching-maechanism} 

In subsection~\ref{sub:matching-anchor-graph-embedding} and~\ref{sub:matching-anchor-path-embedding}, if the query edge $(q_i, q_j)$ match the data graph edge $(u, v)$, the anchored subgraph (resp., anchored path) of $(q_i, q_j)$ must have the same embedding as a possible anchored subgraphs (resp., anchored paths) of $(u, v)$. In the offline phase, all edges in the data graph $G$ that have the same anchored feature embedding are stored in an entry in the index without omission (refer to Fig.~\ref{fig:query-emb-graph} and Fig.~\ref{fig:path-alignment}). In the online query phase, we process each DFS edge of the query graph $Q$ on the index. Thus, GNN-AE can guarantee a 100\% matching recall for each query DFS edge without false dismissals. Therefore, GNN-AE can obtain exact matching results. Algorithm~\ref{alg:subgraph-matching} describes the process of getting candidate matches for each query DFS edge (lines 2-7 and 10-17).

\begin{algorithm}[t]
	\footnotesize
	\caption{Exact Subgraph Matching with GNN-based Feature Embedding}
	\label{alg:subgraph-matching}
	\KwIn{a query graph $Q$, a trained GNN model $\mathcal{M}$, and hash indexes $I_S$, $I_S'$ and $I_P$ over the data graph $G$}
	\KwOut{a set $S$ of matching subgraphs}
	obtain the DFS edge set $E_{dfs}(Q)$ of the query, and query vertex permutation $\pi$ from a cost-model-based DFS query plan $\varphi$ \\
	\For{each DFS edge $(q_i, q_j) \in E_{dfs}(Q)$}{
		extract its maximum anchored $k$-radius subgraphs $S$ and $S'$  \\
		obtain $o(S)$ and $o(S')$ via the GNN model $\mathcal{M}$ \\
		encode each maximum anchored $k$-radius path $P^k_Q(q_i, q_j)$ \\
		obtain the encoding set $\Omega$ of all anchored paths \\
		$C_{q_iq_j} \gets$ \texttt{GetCandidate}$(o(S), o(S'), \Omega, I_S, I_{S'}, I_P)$ \\
		add $\{(q_i, q_j): C_{q_iq_j}\}$ to $C$ 
	}
	\Return{\texttt{MatchGrowth}$(\pi, \varphi, C)$} \\
	
	\SetKwFunction{MyFunction}{\texttt{GetCandidate}}
	\SetKwProg{Fn}{Function}{}{}
	\Fn{\MyFunction{$o(S), o(S'), \Omega, I_S, I_{S'}, I_P$}}{
		get candidate matches $C_S$ by retrieving $I_S$ via key $o(S)$ \\
		get candidate matches $C_{S'}$ by retrieving $I_{S'}$ via key $o(S')$ \\
		initial path candidate match set $C_P = E(G)$ \\
		\For{each path encoding $\omega \in \Omega$}{
			obtain candidate match set $C_\omega$ by retrieving $I_P$ via key $\omega$ \\
			$C_P \gets C_P \cap C_\omega$ 
		}
		\Return{$C_S \cup C_{S'} \cup C_P$}
	}   
	\SetKwFunction{MyProcedure}{\texttt{MatchGrowth}}
	\SetKwProg{Pro}{Procedure}{}{}
	\Pro{\MyProcedure{$\pi, \varphi, C$}}{
		build the candidate match table $T$ based on $\varphi$ and $C$ \\
		initial tree seeds are candidate matches of first edge $(q, q')$ in $\varphi$ \\
		\For(\tcp*[f]{\texttt{In parallel}}){each candidate $(u, v) \in C_{qq'}$}{ 
			initial a match tree $S_t = (u, v)$ \\
			\While(\tcp*[f]{\texttt{In parallel}}){the depth of $S_t.depth < |\pi|$}{
				assemble candidate matches in $T$ via hash join \\
				\eIf{no permutation conflicts \& non-DFS matching}{
					insert candidate match vertices to $S_t$
				}{
					Stop the growth of this branch in $S_t$  
				}	
			}
		}
		add all branches $g$ with depth $|\pi|$ to $S$ \\ 
		\Return{$S$}	
	}
\end{algorithm}

Specifically, for each DFS edge $(q_i, q_j)$ with $L(q_i) \leq L(q_j)$ in the query graph $Q$, we extract its maximum anchored $k$-radius subgraphs $S^k_Q(q_i, q_j)$, and maximum anchored subgraphs $S'^k_Q(q_i, q_j)$ which is spanned by extending vertices centered on $q_j$. We obtain their embedding vectors $o(S^k_Q(q_i, q_j))$ (and $o(S'^k_Q(q_i, q_j))$) via the trained GNN model (lines 3-4). Then, all maximum anchored $k$-radius paths $P^k_Q(q_i, q_j)$ of $(q_i, q_j)$ are extracted and encoded (lines 5-6). In \texttt{GetCandidate}, we take $o(S^k_Q(q_i, q_j))$ as the key to retrieve the hash index $I_S$ to obtain candidate matches $C_S$ with sparse-sparse or sparse-dense edge types. We use $o(S'^k_Q(q_i, q_j))$ to retrieve $I_{S'}$ to obtain candidate matches $C_{S'}$ with the dense-sparse edge type (lines 11-12). At last, we retrieve $I_P$ with the encoding of each anchored path as the key, and calculate the intersection to obtain candidate matches $C_P$ with the dense-dense edge type (lines 13-16). The union $C_S \cup C_{S'} \cup C_P$ of the above three candidate match sets is the final candidate matches of the edge $(q_i, q_j)$ in the query graph $Q$ (line 17).

\begin{proposition}\label{pro:candidate-anchor-set}
	Given the data graph $G$ and an edge $(q_i, q_j)$ in the query graph $Q$, the candidate match set of $(q_i, q_j)$ on graph $G$ is $C_{q_iq_j} = C_S \cup C_{S'} \cup C_P$, where $C_S \cap C_{S'} \cap C_P = \emptyset$.  
\end{proposition}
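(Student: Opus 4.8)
The plan is to establish the proposition in two parts: the completeness of the decomposition $C_{q_iq_j} = C_S \cup C_{S'} \cup C_P$, and the disjointness condition $C_S \cap C_{S'} \cap C_P = \emptyset$. I would treat the disjointness as the combinatorial core and the completeness as a consequence of the already-proved recall guarantee.

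For disjointness, the key observation is that the three hash indexes $I_S$, $I_{S'}$, $I_P$ store mutually exclusive sets of data edges. I would make this precise by recalling the degree-based edge-type classification of Section~\ref{sub:matching-anchor-graph-embedding}: $C_S$ (retrieved from $I_S$) collects only edges with $d_u \le d^*$, i.e.\ sparse-sparse or sparse-dense edges; $C_{S'}$ (retrieved from $I_{S'}$) collects only dense-sparse edges, with $d_u > d^* \land d_v \le d^*$; and $C_P$ (retrieved from $I_P$) collects only dense-dense edges, with $d_u > d^* \land d_v > d^*$. These three degree conditions are mutually exclusive and jointly exhaustive, so the four edge types partition $E(G)$ and each data edge is indexed in exactly one of $I_S$, $I_{S'}$, $I_P$ during offline construction. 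Hence $C_S$, $C_{S'}$, $C_P$ are in fact pairwise disjoint, which immediately yields the stated (weaker) claim $C_S \cap C_{S'} \cap C_P = \emptyset$.

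For completeness, I would argue directly from the construction of \texttt{GetCandidate} in Algorithm~\ref{alg:subgraph-matching}: by definition the candidate set returned for $(q_i, q_j)$ is exactly $C_S \cup C_{S'} \cup C_P$ (line 17). It then remains to confirm that this union loses no true match. Here I would invoke the partition above together with Theorem~\ref{the:isomorphic-equal}: a genuine matching edge $(u,v)$ of $(q_i,q_j)$ has an anchored subgraph (resp.\ path) isomorphic to a feature of the query edge, so by order-invariance it is mapped to the same embedding and therefore resides in the index entry keyed by the query's feature embedding. Depending on the unique type of $(u,v)$, it is recovered in $C_S$, $C_{S'}$, or $C_P$, so every true match appears in the union and the equality $C_{q_iq_j} = C_S \cup C_{S'} \cup C_P$ holds.

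The main subtlety, rather than a genuine obstacle, lies in the completeness direction: one must ensure the embedding-based retrieval never drops a true match. This reduces cleanly to the two facts already at hand, namely that the degree conditions route each data edge to a single index offline, and that \emph{isomorphic features share an embedding} by Theorem~\ref{the:isomorphic-equal}, so no routing/retrieval mismatch can occur. Once these are in place, the disjointness direction is purely set-theoretic, following from the mutual exclusivity of the degree-based edge-type conditions.
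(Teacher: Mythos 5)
Your proof is correct and takes essentially the same approach as the paper: the paper offers no separate proof environment for this proposition, justifying it implicitly by the construction in the surrounding text --- the union is exactly what \texttt{GetCandidate} returns (line 17 of Algorithm~\ref{alg:subgraph-matching}), and disjointness holds because the degree-based edge types (sparse-sparse/sparse-dense, dense-sparse, dense-dense) are mutually exclusive, so each data edge is routed offline into exactly one of $I_S$, $I_{S'}$, $I_P$. Your additional observations --- that the sets are in fact pairwise disjoint (stronger than the stated triple-intersection claim) and that completeness of retrieval rests on Theorem~\ref{the:isomorphic-equal} --- are consistent with, and slightly sharpen, the paper's own reasoning.
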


\noindent{\bf Graph-based Matching.}
In subsection~\ref{sub:matching-anchor-graph-embedding}, we illustrate the principle of building subgraph matching relationships via anchored subgraph embeddings. We can derive a necessary condition of an edge $(u, v)$ in the data graph $G$ to be a match of an edge $(q_i, q_j)$ in the query graph $Q$. Therefore, we have the following Lemma~\ref{lem:graph-matching}. 

\begin{lemma}[Graph-based Candidate Matches]
	\label{lem:graph-matching}
	Given an edge $(u, v)$ with $L(u) \leq L(v)$ in the data graph $G$ and an edge $(q_i, q_j)$ with $L(q_i) \leq L(q_j)$ in the query graph $Q$, the edge $(u, v)$ can be a candidate match of $(q_i, q_j)$, if an anchored subgraph $S^k_G(u, v)$ of $(u, v)$ and the maximum anchored subgraph $S^k_Q(q_i, q_j)$ of $(q_i, q_j)$ satisfy $o(S^k_G(u, v)) = o(S^k_Q(q_i, q_j)) \land L(u) = L(q_i) \land L(v) = L(q_j)$. 
\end{lemma}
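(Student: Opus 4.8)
The plan is to read Lemma~\ref{lem:graph-matching} as a soundness guarantee for candidate generation: the stated embedding-and-label condition must be a \emph{necessary} condition for a genuine match, so that keeping exactly those $(u,v)$ satisfying it discards no true match. Accordingly I would prove the necessity direction. Suppose $(u,v)$ is a true match of $(q_i,q_j)$; that is, there is a subgraph isomorphism $f$ witnessing $g \cong Q$ with $g \subseteq G$, $f(q_i)=u$, and $f(q_j)=v$. The goal is to exhibit an anchored subgraph $S^k_G(u,v)$ with $o(S^k_G(u,v))=o(S^k_Q(q_i,q_j))$ and matching endpoint labels.

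First I would dispatch the label equalities. Because $f$ is an isomorphism onto $g$, it preserves vertex labels (Definition~\ref{def:graph-isomorphism}), so $L(u)=L(f(q_i))=L(q_i)$ and $L(v)=L(q_j)$. The hypotheses $L(q_i)\le L(q_j)$ and $L(u)\le L(v)$ then guarantee that both anchored subgraphs are centered on the matched endpoint $q_i\leftrightarrow u$, so the anchoring orientation is consistent on the two sides.

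The core is the embedding equality, which I would obtain structurally. Restricting $f$ to the vertices of the maximum anchored subgraph $S^k_Q(q_i,q_j)$ yields an isomorphism of $S^k_Q(q_i,q_j)$ onto its image $f(S^k_Q(q_i,q_j))$, a subgraph of $g$ (hence of $G$) that contains $(u,v)$. To place this image inside the anchored-subgraph family of $(u,v)$, I would bound distances: every vertex $w$ of $S^k_Q(q_i,q_j)$ has $d_Q(q_i,w)\le k$, and since $f$ is a global isomorphism $d_g(u,f(w))=d_Q(q_i,w)\le k$; because $g\subseteq G$ can only shorten distances, $d_G(u,f(w))\le d_g(u,f(w))\le k$. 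Thus $f(S^k_Q(q_i,q_j))$ is a subgraph of $G$ spanned by $(u,v)$ together with vertices within distance $k$ of $u$, so by Definition~\ref{def:anchored-subgraph} it is one of the anchored subgraphs $S^k_G(u,v)$ enumerated in the offline phase. Since $f$ realizes $S^k_Q(q_i,q_j)\cong S^k_G(u,v)$, Theorem~\ref{the:isomorphic-equal} yields $o(S^k_G(u,v))=o(S^k_Q(q_i,q_j))$, and together with the label equalities the full condition holds; the contrapositive then states that any $(u,v)$ failing the condition cannot match, justifying it as the candidate filter.

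I expect the main obstacle to be the containment step, namely verifying that $f(S^k_Q(q_i,q_j))$ genuinely appears among the enumerated anchored subgraphs of $(u,v)$. The delicacy is that $g$ is an arbitrary, not necessarily induced, subgraph of $G$, so the image may omit edges that $G$ carries among the neighbors of $u$ and need not equal the induced $k$-hop subgraph. The argument therefore must lean on the fact that anchored subgraphs are defined as subgraphs spanned by $(u,v)$ together with a \emph{chosen} vertex/edge selection (Definition~\ref{def:anchored-subgraph}), and that the offline construction enumerates \emph{all} such selections---consistent with the $2^{d(u)-1}$ count of Lemma~\ref{lem:num-substructure}---so that every admissible substructure, whether induced or not, is present in the index and available to match $S^k_Q(q_i,q_j)$.
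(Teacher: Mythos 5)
Your proof is correct and follows essentially the route the paper intends: the paper gives no formal proof of Lemma~\ref{lem:graph-matching}, justifying it only by the informal principle of subsection~\ref{sub:matching-anchor-graph-embedding} (a true match carries the query's maximum anchored subgraph onto one of the enumerated anchored subgraphs of $(u,v)$) combined with Theorem~\ref{the:isomorphic-equal} and label preservation, which is exactly the necessity argument you formalize. Your version is in fact more rigorous than the paper's, since you supply the distance-preservation step ($d_G(u,f(w)) \leq d_g(u,f(w)) = d_Q(q_i,w) \leq k$) and explicitly resolve the non-induced-image subtlety by appealing to the fact that \emph{all} edge selections are enumerated (consistent with Lemma~\ref{lem:num-substructure}), details the paper leaves implicit.
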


\noindent{\bf Path-based Matching.}
In subsection~\ref{sub:matching-anchor-path-embedding}, we illustrate the principle of building subgraph matching relationships via anchored path embeddings. Similarly, we can derive another necessary condition of an edge $(u, v)$ in the data graph $G$ to be a match of an edge $(q_i, q_j)$ in the query graph $Q$. Thus, we have the Lemma~\ref{lem:path-matching} of path-based candidate matches below.

\begin{lemma}[Path-based Candidate Matches]
	\label{lem:path-matching}
	Given an edge $(u, v)$ with $L(u) \leq L(v)$ in the data graph $G$ and an edge $(q_i, q_j)$ with $L(q_i) \leq L(q_j)$ in the query graph $Q$, the edge $(u, v)$ can be a candidate match of $(q_i, q_j)$, if for any maximum anchored path $P^k_Q(q_i, q_j)$ of $(q_i, q_j)$, the edge $(u, v)$ has an anchored path $P^k_G(u, v)$ aligned with $P^k_Q(q_i, q_j)$.
\end{lemma}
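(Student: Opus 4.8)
The plan is to prove the stated alignment requirement as a \emph{necessary} condition for $(u,v)$ to be a genuine match of $(q_i,q_j)$, so that generating candidates by this rule loses no true match (100\% recall). Assume $(u,v)$ is a match of $(q_i,q_j)$: by Definition~\ref{def:subgraph-matching} there exist a subgraph $g\subseteq G$ and an isomorphism $f$ (Definition~\ref{def:graph-isomorphism}) from $Q$ onto $g$ with $f(q_i)=u$ and $f(q_j)=v$; thus $f$ is a label-preserving, adjacency-preserving bijection. Fix an arbitrary maximum anchored path $P^k_Q(q_i,q_j)=(p_m,\dots,p_1,q_i,q_j)$ of $(q_i,q_j)$, a simple path extending from the smaller-label endpoint $q_i$ whose extension vertices satisfy $d_Q(q_i,p_\ell)\le k$. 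I would exhibit the witnessing data-graph path explicitly as the vertex image $f(P^k_Q(q_i,q_j))=(f(p_m),\dots,f(p_1),u,v)$ and verify that it is an anchored path $P^k_G(u,v)$ aligned with $P^k_Q(q_i,q_j)$.

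First I would check that the image is a simple path in $G$ through the edge $(u,v)$: consecutive vertices of $P^k_Q(q_i,q_j)$ are adjacent in $Q$, so adjacency preservation places their images adjacent in $g\subseteq G$; the edge $(q_i,q_j)$ maps to $(u,v)\in E(g)\subseteq E(G)$; and injectivity of $f$ keeps the images distinct. The delicate step is confirming the radius condition of Definition~\ref{def:anchored-path}, i.e.\ $d_G(u,f(p_\ell))\le k$ for every extension vertex. This is the main obstacle, because distances are \emph{not} preserved in general: $G$ may carry edges absent from $Q$, so the position of $f(p_\ell)$ along the image path need not equal its graph distance to $u$. I would resolve it not by counting along the image path but by transporting a \emph{shortest} path: since $d_Q(q_i,p_\ell)\le k$, pick a shortest $q_i$-to-$p_\ell$ path in $Q$ of length $\le k$; as $f$ preserves adjacency and sends $q_i$ to $u$, its image is a walk in $g\subseteq G$ from $u$ to $f(p_\ell)$ of the same length, whence $d_G(u,f(p_\ell))\le d_Q(q_i,p_\ell)\le k$. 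Hence the image is a legitimate anchored $k$-radius path of $(u,v)$.

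Finally I would establish the label alignment in the sense of subsection~\ref{sub:matching-anchor-path-embedding}. Label preservation gives $L(f(p_\ell))=L(p_\ell)$ together with $L(u)=L(q_i)\le L(q_j)=L(v)$, so the two extensions run from the matching smaller-label endpoints and the label encodings of the image path and of $P^k_Q(q_i,q_j)$ coincide term by term; the paths are therefore aligned. Because $I_P$ stores \emph{all} possible anchored paths of each data-graph edge, the image path is indexed under this encoding, so the alignment is actually witnessed in the index. As $P^k_Q(q_i,q_j)$ was arbitrary, the alignment holds for every maximum anchored path of $(q_i,q_j)$, which is exactly the hypothesis of the lemma; hence $(u,v)$ is retained as a candidate of $(q_i,q_j)$, completing the argument.
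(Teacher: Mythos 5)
Your proof is correct, but note that the paper itself offers no formal proof of Lemma~\ref{lem:path-matching}: it is asserted as a ``necessary condition'' derived informally from the path-alignment discussion in subsection~\ref{sub:matching-anchor-path-embedding}, just as Lemma~\ref{lem:graph-matching} is asserted from the anchored-subgraph discussion. So your argument fills in what the paper leaves implicit rather than paralleling an existing derivation. Your reading of ``can be a candidate match'' as a completeness (100\% recall) claim --- every genuine match of $(q_i,q_j)$ satisfies the alignment condition, so the candidate rule loses no true match --- is exactly the intended semantics. The step where a blind attempt could genuinely go wrong is the radius condition of Definition~\ref{def:anchored-path}: a vertex's position along an anchored path can exceed its graph distance to the anchor (the paper's Fig.~\ref{fig:anchor-paths} shows 1-radius paths of length greater than one edge beyond $u$), and an isomorphism onto a subgraph does not preserve distances in $G$, so counting along the image path would not suffice. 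You identify this and resolve it correctly by transporting a shortest $q_i$-to-$p_\ell$ path of $Q$ through $f$, obtaining a walk in $g \subseteq G$ of the same length and hence $d_G(u, f(p_\ell)) \le d_Q(q_i, p_\ell) \le k$; the inequality only needs to go in that direction, and it does. One minor caveat on your closing remark: $I_P$ stores anchored paths only for dense-dense edges of $G$ (sparse edges are handled through $I_S$ and $I_{S'}$), so the ``witnessed in the index'' claim applies in that case; the lemma itself, as a statement about the existence of an aligned anchored path of $(u,v)$, holds for any edge exactly as you prove it.
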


\subsection{Matching Growth Algorithm}
\label{sub:matching-growth-algorithm} 

We present the effective matching growth method to obtain all exact matches for the query $Q$ (lines 18-30 of Algorithm~\ref{alg:subgraph-matching}). Consider the data graph $G$ and the query $Q$ in Fig.~\ref{fig:example-sm}, Fig.~\ref{fig:query-candidate} shows candidate matches of each DFS edge in $Q$. We organize them into a table based on the order of the DFS traversal.

\noindent{\bf Hash Join \& Refinement.}
We take candidate matches $C_{qq'}$ of the first edge $(q, q')$ in the DFS query plan $\varphi$ as initial tree seeds (line 20 of Algorithm~\ref{alg:subgraph-matching}). For example, in Fig.~\ref{fig:query-candidate}, the candidate matches $\{(v_5, v_3), (v_1, v_3), (v_9, v_{12}), (v_{13}, v_{12})\}$ of the query edge $(q_1, q_2)$ are initial tree seeds. For each seed, we perform the \textit{hash join} on the table of candidate matches to grow the match trees.

\begin{example}
	\label{exa:match-tree-growth}
	Consider the data graph $G$ and the query graph $Q$ in Fig.~\ref{fig:example-sm}, we obtain query vertex permutation $\pi = (q_1, q_2, q_3, q_4, q_5)$ according to the DFS search order. We grow match trees along $\pi$. In Fig.~\ref{fig:query-candidate}, the query edge $(q_1, q_2)$ has 4 candidate matches, so there are 4 match trees in Fig.~\ref{fig:match-tree}. For example, in tree 2, we take $(v_1, v_3)$ as the initial seed. To obtain matches of $(q_1, q_2, q_3)$ in tree 2, we join the candidate match $(v_1, v_3)$ of the edge $(q_1, q_2)$ with the candidate match $(v_3, v_5)$ of the edge $(q_2, q_3)$. Because the edge $(q_1, q_2)$ and the edge $(q_2, q_3)$ are connected via the vertex $q_2$ in the DFS traversal, and the matches of their $q_2$ are both $v_3$. After joining, we obtain a match $(v_1, v_3, v_5)$ of $(q_1, q_2, q_3)$. All exact matches are produced in match trees.
\end{example}

\begin{figure}[!t]
	\setlength{\abovecaptionskip}{0.01cm}
	\centering
	\includegraphics[width=0.92\linewidth]{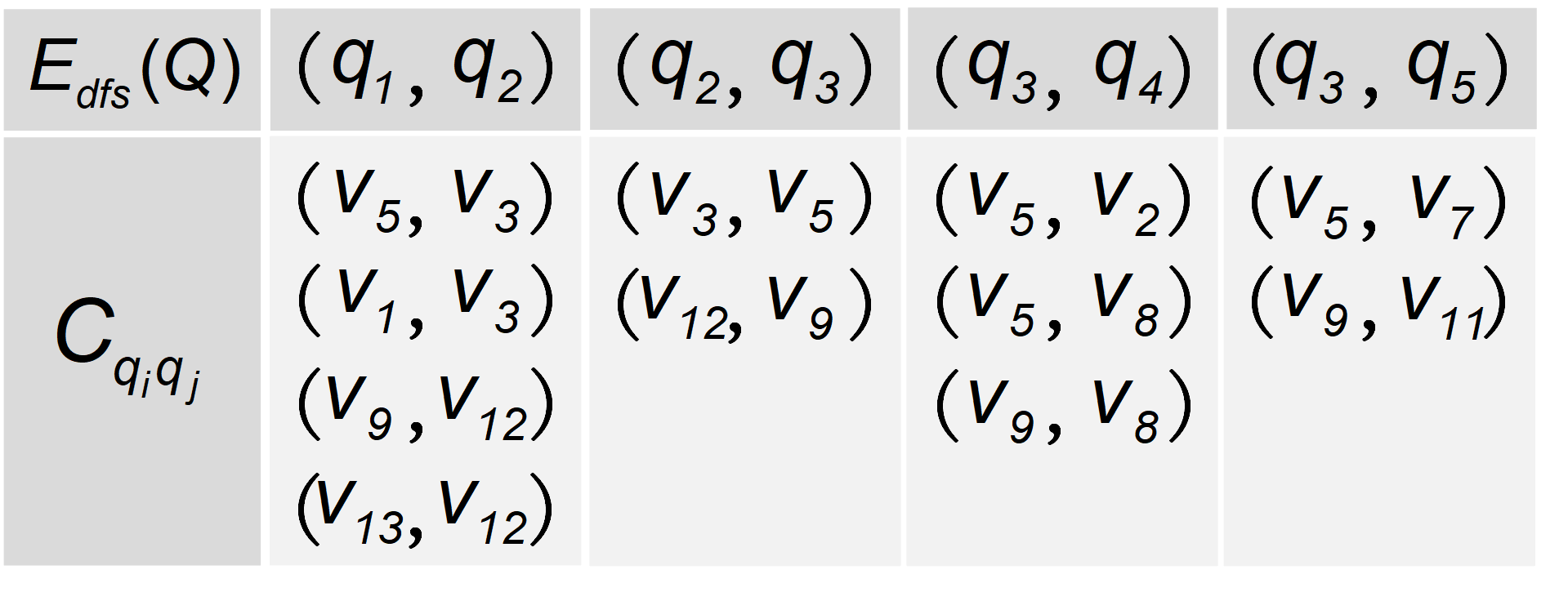}
	\caption{An example of the table of candidate matches.}
	\label{fig:query-candidate}
	\vspace{-1em}
\end{figure}

\begin{figure}[!t]
	\setlength{\abovecaptionskip}{0.05cm}
	\centering
	\includegraphics[width=0.82\linewidth]{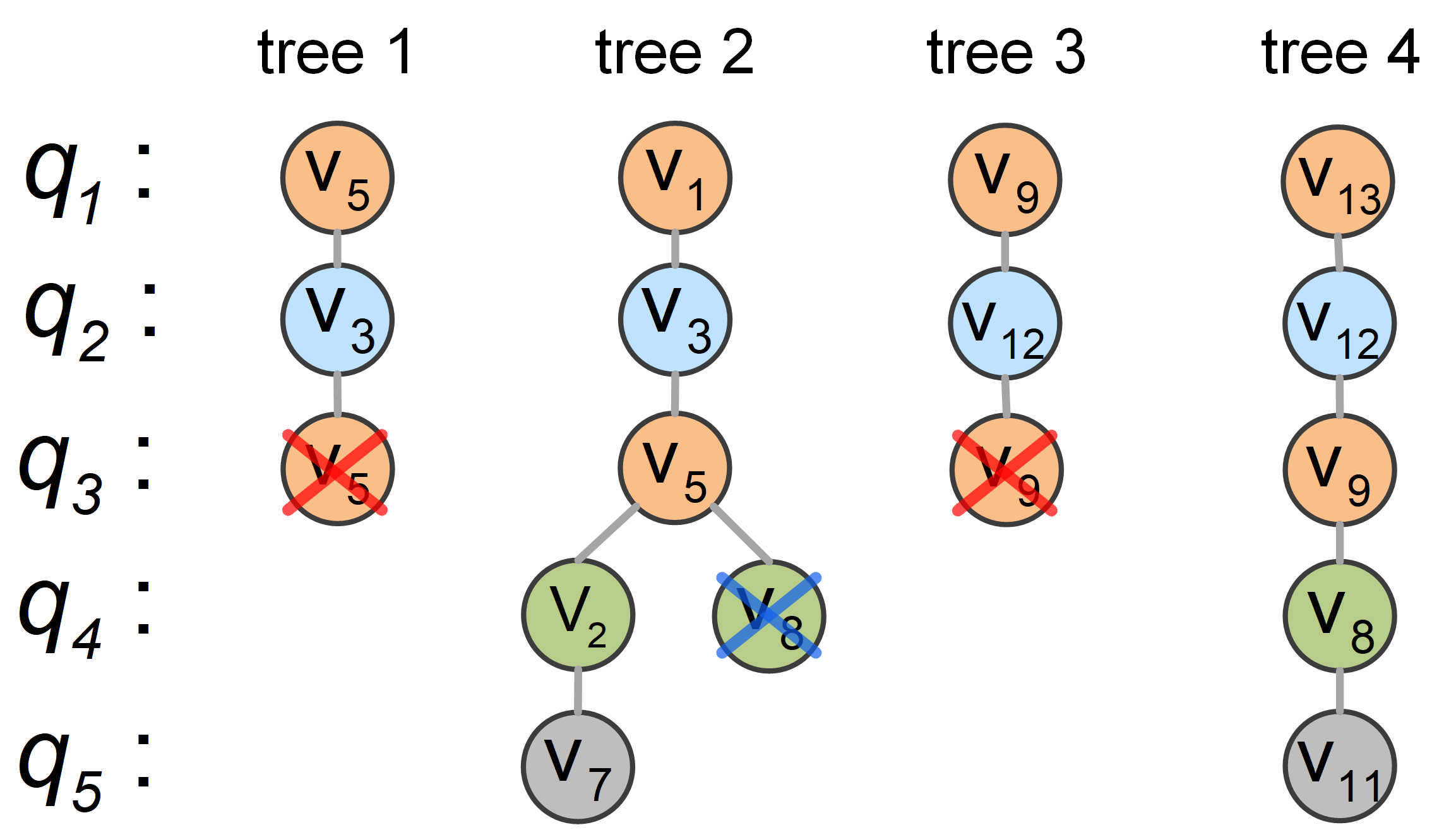}
	\caption{An example of the match tree growth.}
	\label{fig:match-tree}
	\vspace{0em}
\end{figure}

In Fig.~\ref{fig:match-tree}, there are 2 cases in which branches stop growing.

\noindent\textbullet~~\emph{Case~1. (Permutation Conflict)} \textit{If matches of two query vertices along permutation $\pi$ conflict, the branch stops growing.}

\begin{example}
	\label{exa:permutation-conflict-pruning}
	The match $\{(q_1, v_5), (q_2, v_3), (q_3, v_5)\}$ in tree 1 has a conflict and this branch stops growing.
\end{example}

\noindent\textbullet~~\emph{Case~2. (No Match for Non-DFS Edge)} \textit{If a non-DFS edge $(q_i, q_j)$ exists between $q_j$ and its forward query vertex $q_i$ in $\pi$, we check whether there is an edge $(v_i, v_j)$ between their matching vertices in $G$. If not exist, the branch stops growing.}

\begin{example}
	\label{exa:non-anchor-edges-pruning}
	For match $\{(q_1, v_1), (q_2, v_3), (q_3, v_5), (q_4, v_8)\}$ in tree 2, there is a non-DFS edge $(q_1, q_4)$, but no edge $(v_1, v_8)$ in $G$. Thus, the branch $(v_1, v_3, v_5, v_8)$ stops growing.
\end{example}

Finally, all branches $g$ with depth $|V(Q)|$ (or $|\pi|$) are all exact matches (line 29 of Algorithm~\ref{alg:subgraph-matching}).

\noindent{\bf Parallel Growth.}
In the match tree growth, the match of each query vertex $q_i$ only relies on the matches of its forward query vertices in the permutation $\pi$. Thus, we have the Lemma~\ref{lem:growth-iso}.

\begin{lemma}[Growth Isolation]
	\label{lem:growth-iso}
	All match trees or branches in the tree grow isolated from each other.
\end{lemma}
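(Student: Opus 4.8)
The plan is to first make precise the notion of \emph{isolation} and then show, by induction on the position in the permutation $\pi$, that every growth decision within a branch depends only on read-only structures together with that branch's own partial match. Concretely, I would declare that two branches (or trees) are isolated if the set of candidate-match vertices appended to one branch, together with the decision of whether that branch continues or halts, is a function solely of (i) the fixed candidate-match table $T$, (ii) the permutation $\pi$ and the query plan $\varphi$, (iii) the data graph $G$, and (iv) the partial match already assembled \emph{within the same branch} --- and never of the state of any other branch. Establishing this functional dependence is the whole content of the lemma, since it is precisely what licenses the parallel constructs (lines~21 and~23 of Algorithm~\ref{alg:subgraph-matching}).

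First I would set up the induction on the depth $t$ of the match tree, i.e.\ on the index of the query vertex $q_t$ being matched along $\pi$. The base case is the seed: each tree is initialised from a distinct candidate match $(u,v) \in C_{qq'}$ of the first DFS edge (line~20 of Algorithm~\ref{alg:subgraph-matching}), and these seeds are drawn from the fixed set $C_{qq'}$ independently of one another. For the inductive step, when the algorithm extends a branch to match $q_t$, it performs a hash join against $T$ (line~24) and then applies the two validity checks of subsection~\ref{sub:matching-growth-algorithm}. The key observation, already stated just before the lemma, is that the match of $q_t$ relies only on the matches of its \emph{forward} query vertices in $\pi$, all of which by construction lie on the \emph{same} branch; hence the hash join reads only this branch's current partial assignment and the read-only table $T$.

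The step I expect to need the most care is the non-DFS edge check (Case~2), because it consults the data graph $G$ and might superficially look like a global dependence. I would argue that it is nonetheless \emph{local}: it only tests whether an edge $(v_i, v_j)$ exists between the data vertices matched to $q_i$ and $q_j$, and both of these query vertices already belong to the current branch's partial match by the induction hypothesis; similarly, the permutation-conflict check (Case~1) only compares matches of query vertices inside the same branch. Thus the outcome of both checks is determined entirely by the branch's own partial match together with the fixed, read-only $G$ and $\pi$.

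Finally I would close the argument by noting that all shared objects ($T$, $G$, $\pi$, $\varphi$) are read-only throughout the growth phase and no branch ever writes into another branch's state $S_t$. Combining this with the inductive claim, the entire trajectory of each branch --- every appended vertex and every stop decision --- is a function of that branch alone, which is exactly the isolation asserted in Lemma~\ref{lem:growth-iso}. I anticipate no genuine obstacle beyond the careful localisation of the non-DFS check; the rest is a routine induction once \emph{isolation} is defined in terms of read-only shared state.
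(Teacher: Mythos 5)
Your proof is correct and rests on exactly the same key observation the paper uses --- that the match of each query vertex depends only on the matches of its forward query vertices in the permutation $\pi$, all of which lie on the same branch; the paper states this in a single sentence immediately before the lemma and offers no further proof. Your induction on depth, the read-only formalisation of isolation, and the localisation of the two stop checks (permutation conflict and the non-DFS edge test) are a faithful, more careful elaboration of that same argument rather than a different route.
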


Therefore, in Algorithm~\ref{alg:subgraph-matching}, we can grow each match tree (line 21) or each branch in a tree (line 23) in a parallel manner. 

\noindent{\bf Complexity Analysis.}
In Algorithm~\ref{alg:subgraph-matching}, the time complexity of obtaining the set of DFS edge $E_{dfs}(Q)$ and permutation $\pi$ is $O(|V(Q)|+|E(Q)|)$. The time complexity of obtaining anchored 1-radius subgraph embeddings via the GNN model is $O(\sum_{(q_i, q_j) \in E_{dfs}(Q)}{(d_{q_i} + d_{q_j})})$, and the time cost of encoding anchored paths is $O(\sum_{(q_i, q_j) \in E_{dfs}(Q)}{d_{q_i}})$.

For the \texttt{GetCandidate} function, we assume that the encoding set of anchored paths for $(q_i, q_j)$ is $\Omega_{q_iq_j}$. The time complexity of the \texttt{GetCandidate} function is given by $O(\sum_{(q_i, q_j) \in E_{dfs}(Q)}{(\sum_{\omega \in \Omega_{q_iq_j}}{|C^P_\omega|} + |C_{q_iq_j}^{S}| + |C_{q_iq_j}^{S'}|)})$, where $|C^P_\omega|$ is the number of candidate matches for the path encoding $\omega \in \Omega_{q_iq_j}$, $|C_{q_iq_j}^{S}|$ is the number of candidate matches (with sparse-sparse \& sparse-dense edge types) of $(q_i, q_j)$ and $|C_{q_iq_j}^{S'}|$ is the number of candidate matches (with the dense-sparse edge type) of $(q_i, q_j)$. Most existing methods build an auxiliary index of the vertex 1-hop neighbors online for each query, which occupies much time in the online querying. However, our GNN-AE builds hash indexes of the vertex neighborhood \textit{offline} and \textit{one-time only}. In the online querying, each query edge directly retrieves candidate matches in the indexes with $O(1)$ via hash keys generated by GNNs. Thus, our GNN-AE is efficient in terms of obtaining candidates.

In \texttt{MatchGrowth} procedure, the time complexity is given by $O(\sum_{(q_i, q_j) \in E_{dfs}(Q)}{(|C_{q_iq_j}^{S}| + |C_{q_iq_j}^{S'}| + |C_{q_iq_j}^{P}|)})$, where $|C_{q_iq_j}^{P}|$ is the number of dense-dense candidate matches. Since GNNs have a high embedding quality with rigorous theoretical guarantees, \textit{the subgraph embedding conflict ratio via the GNNs is extremely low}. For example, in Fig.~\ref{fig:conlict-train-data}, the conflict ratio is less than $10^{-5}\%$ on data graphs in experiments. This means that almost all non-isomorphic subgraphs have unique embeddings without conflicts. As a result, each query edge can obtain a compact candidate set (i.e., fewer invalid results) that matches feature subgraphs. Thus, fewer invalid intermediate results are generated in the matching growth procedure compared to existing methods. 

In summary, time complexity of Algorithm~\ref{alg:subgraph-matching} is given by $O(\sum_{(q_i, q_j) \in E_{dfs}(Q)}{(\sum_{\omega \in \Omega_{q_iq_j}}{|C^P_\omega|} + |C_{q_iq_j}| + d_{q_i} + d_{q_j})})$, where $|C_{q_iq_j}|$ is the number of candidate matches for $(q_i, q_j)$.

\section{Cost-Model-Based DFS Query Plan}
\label{sec:cost-model-dfs-plan}

In this section, we provide a cost-model-based DFS query plan (line 14 of Algorithm~\ref{alg:gnn-ae-framework}) to find a low-query-cost DFS strategy that further enhances the parallel matching growth algorithm in subsection~\ref{sub:matching-growth-algorithm}.

\begin{figure}[!t]
	\setlength{\abovecaptionskip}{0.01cm}
	\centering
	\includegraphics[width=0.82\linewidth]{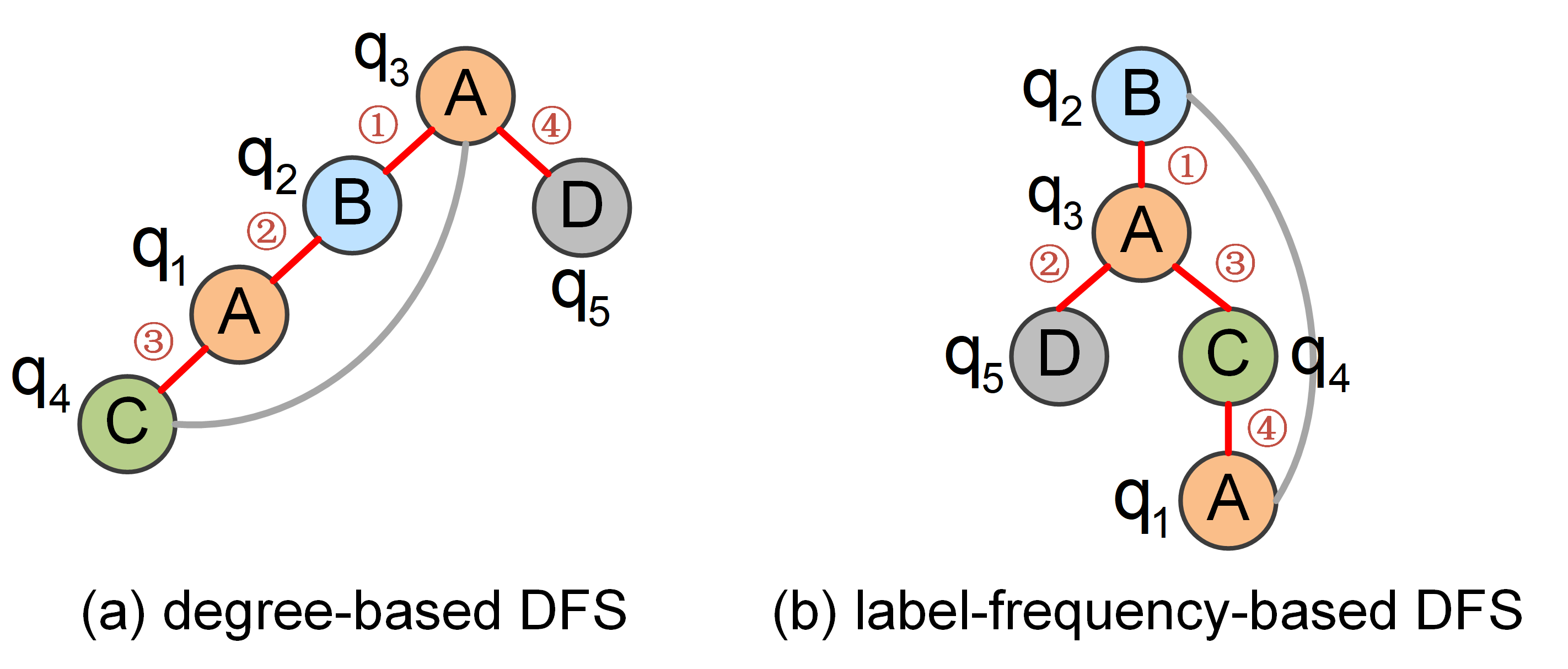}
	\caption{The degree-based and label-frequency-based DFS query plans.}
	\label{fig:cost-dfs-query}
	\vspace{-0.2em}
\end{figure}

\begin{algorithm}[t]
	\footnotesize
	\caption{Cost-Model-Based DFS Query Plan}
	\label{alg:dfs-query-plan}
	\KwIn{a query graph $Q$}
	\KwOut{a query DFS edge set $E_{dfs}(Q)$ in the query plan $\varphi$}
	initial $E_{dfs}(Q) = \emptyset$ and $Cost(\varphi) = +\infty$ \\
	\tcp{apply MaxDeg, MinLF, or Rand strategy}
	obtain a set $M_{st}$ of DFS start vertices \\
	\For{each start vertex $\mu \in M_{st}$}{
		initial $E_{tmp}(Q) = \emptyset$ and $Cost_{tmp}(\varphi) = 0$ \\
		DFS to select a vertex $q_i$ with minimum query cost $f(\mu, q_i)$ \\
		add $(\mu, q_i)$ to $E_{tmp}(Q)$ \\
		\While{at least one vertex in $V(Q)$ is not visited}{
			DFS to select a vertex with minimum query cost $f(q_i, q_j)$  \\
			add $(q_i, q_j)$ to $E_{tmp}(Q)$ \\
			$Cost_{tmp}(\varphi) \gets Cost_{tmp}(\varphi) + f(q_i, q_j)$
		}
		\If{$Cost_{tmp}(\varphi) < Cost(\varphi)$}{
			$E_{dfs}(Q) \gets E_{tmp}(Q)$ \\
			$Cost(\varphi) \gets Cost_{tmp}(\varphi)$ 
		}	
	}
	\Return{$E_{dfs}(Q)$}
\end{algorithm}

\subsection{Cost Model}
\label{sub:cost-model}

We design a formal cost model to estimate the query cost of the query plan $\varphi$. The query cost of subgraph matching is mainly limited by the number of candidate matches $|C^P_\omega|$ and $|C_{q_iq_j}|$. Thus, intuitively, fewer candidate matches for a query edge would lower query cost. Therefore, we define the query cost $Cost(\varphi)$ for $\varphi$ as follows:
\begin{equation}\label{eqn:cost-model}
	Cost(\varphi) = \sum_{(q_i, q_j) \in E_{dfs}(Q)} f(q_i, q_j),
\end{equation}
where $E_{dfs}(Q)$ is the query DFS edge set and $f(q_i, q_j)$ is the query cost of a query edge $(q_i, q_j)$. We aim to find a good DFS query plan $\varphi$ to minimize the total query cost $Cost(\varphi)$.

\noindent{\bf Edge Query Cost.}
We discuss how to compute the query cost $f(q_i, q_j)$ for a query edge $(q_i, q_j)$ in Equation~\ref{eqn:cost-model}. Intuitively, for an edge $(q_i, q_j)$, if the degrees of vertices $q_i$ and $q_j$ are high, the number of candidate matches of $(q_i, q_j)$ is expected to be small, which incurs a low query cost. Thus, we can set $f(q_i, q_j) = -(d_{q_i} + d_{q_j})$, where $d_{q_i}$ is the degree of vertex $q_i$ and $d_{q_j}$ is the degree of vertex $q_j$.

Alternatively, we can adopt other query cost metrics, such as vertex label frequency. For an edge $(q_i, q_j)$ in the query $Q$, we can compute the minimum label frequency in the 1-hop neighborhood of $q_i$ and $q_j$. That is, $f(q_i, q_j) = min(\Gamma(q_i)) + min(\Gamma(q_j))$, where $\Gamma(q_i)$ and $\Gamma(q_j)$ are the vertex label frequency set in the 1-hop neighborhood of $q_i$ and $q_j$ respectively.

Fig.~\ref{fig:cost-dfs-query} shows an example of the degree-based DFS query plan and the label-frequency-based DFS query plan. Consider the data graph $G$ and the query graph $Q$ in Fig.~\ref{fig:example-sm}. In Fig.~\ref{fig:cost-dfs-query}(a), we assume that DFS starts from the vertex $q_3$ with the maximum degree in the query $Q$. In the DFS procedure, we select the vertex with the minimum edge query cost $f(q_i, q_j) = -(d_{q_i} + d_{q_j})$ from the neighbors of the current vertex as the next vertex to traverse. We thus obtain a degree-based DFS query plan and the DFS edge set $\{(q_3, q_2), (q_2, q_1), (q_1, q_4), (q_3, q_5)\}$. Similarly, in Fig.~\ref{fig:cost-dfs-query}(b), we assume that DFS starts from the vertex $q_2$ (or $q_5$) with the minimum vertex label frequency. We can obtain a label-frequency-based DFS query plan and the the DFS edge set $\{(q_2, q_3), (q_3, q_5), (q_3, q_4), (q_4, q_1)\}$ by using the edge query cost function $f(q_i, q_j) = min(\Gamma(q_i)) + min(\Gamma(q_j))$.

\subsection{DFS-based Query Plan}
\label{sub:dfs-query-plan} 

Algorithm~\ref{alg:dfs-query-plan} shows the cost-model-based DFS query plan. For a query $Q$, we initialize $E_{dfs}(Q) = \emptyset$ and query cost $Cost(\varphi)$ (line 1). We select a vertex set $M_{st}$ in $Q$, each of which can be used as a start vertex of DFS (line 2). For a vertex $\mu \in M_{st}$, we perform DFS and minimize the cost $f(q_i, q_j)$ at each step expansion (lines 3-10). For different start vertices in $M_{st}$, we always keep the best-so-far DFS edge set in $E_{dfs}(Q)$ and the minimum cost in $Cost(\varphi)$ (lines 11-13). Finally, we obtain a DFS edge set $E_{dfs}(Q)$ with the lowest cost (line 14). 

\noindent{\bf DFS Start vertex Selection.}
We provide three strategies to select the set $M_{st}$ of DFS start vertices (line 2 of Algorithm~\ref{alg:dfs-query-plan}).

\begin{itemize}[leftmargin=1em]
	\item \textbf{Top-$K$ Maximum Degree vertices (MaxDeg)}: select top-$k$ vertices with maximum degree in the query graph $Q$. 
	\item \textbf{Top-$K$ Minimum Label Frequency vertices (MinLF)}: select top-$k$ vertices with minimum label frequency in $Q$.
	\item \textbf{$K$ Random vertices (Rand)}: randomly select $k$ vertices in the query graph $Q$ to form $M_{st}$.
\end{itemize}

\section{Experimental Evaluation}
\label{sec:experimental}

\subsection{Experimental Setup}
\label{sub:experimental-setup}

\begin{table}[!t]
	\centering
	\fontsize{6.3pt}{8pt}\selectfont 
	\renewcommand\arraystretch{1.3}
	\caption{Properties of real-world graph datasets}
	\label{tab:real-datasets}
	\begin{tabular}{|c|c|c|c|c|c|} 
		\hline
		\textbf{Category} & \textbf{Dataset} & \boldmath{$|V(G)|$} & \boldmath{$|E(G)|$} & \boldmath{$|\Sigma|$} & \boldmath{$avg\_deg(G)$} \\ 
		\hline
		\hline
		\multirow{2}{*}{\textbf{Biology}} & \textbf{Yeast (ye)} & 3,112 & 12,519 & 71 & 8.0 \\ 
		\cline{2-6}
		& \textbf{HPRD (hp)} & 9,460 & 34,998 & 307 & 7.4 \\ 
		\hline
		\hline
		\textbf{Lexical} & \textbf{WordNet (wn)} & 76,853 & 120,399 & 5 & 3.1 \\
		\hline
		\hline
		\multirow{2.2}{*}{\textbf{Social}} & \textbf{DBLP (db)} & 317,080 & 1,049,866 & 15 & 6.6 \\ 
		\cline{2-6}
		& \textbf{Youtube (yt)} & 1,134,890 & 2,987,624 & 25 & 5.3 \\
		\hline
		\hline
		\textbf{Citation} & \textbf{US Patents (up)} & 3,774,768 & 16,518,947 & 20 & 8.8 \\
		\hline
	\end{tabular}
\end{table}

We conduct subgraph matching experiments on a Linux machine equipped with an Intel Xeon E5-2620 v4 CPU and 220GB memory. The GNN model is implemented by PyTorch, where embeddings are offline computed on an Nvidia GeForce GTX 2080Ti Linux machine. The online subgraph matching is implemented in C++ for fair comparison with baselines.

For the GNN model, we set 2 GIN layers, the hidden dimension $n = 10$, and the output embedding dimension $m = 3$. The training epochs $\rho = 20$, and $k = 1$ for convenient evaluation. We use the Adam optimizer and set the learning rate $\eta = 0.001$. The bath sizes are adjusted in $[1024, 2048]$. We adopt a parallel mode with 8 threads for matching growth. 

\noindent{\bf Baseline Methods.}
We compare the performance of our GNN-AE approach with 7 representative exact subgraph matching baseline methods: GraphQL (GQL)~\cite{he2008graphs}, QuickSI (QSI)~\cite{shang2008taming}, RI~\cite{bonnici2013subgraph}, Hybrid (mixed by CFL~\cite{bi2016efficient}, RI~\cite{bonnici2013subgraph} and GQL)~\cite{sun2020memory}, RapidMatch (RM)~\cite{sun2021rapidmatch}, BICE~\cite{choi2023bice}, and GNN-PE~\cite{ye2024efficient}. 

\noindent{\bf Real/Synthetic Datasets.}
We conduct experiments on 6 real-world graphs from various domains and 3 synthetic graphs.

\underline{\textit{Real-world Graphs:}}
We selected six real-world datasets from four domains used by previous works~\cite{bi2016efficient, han2019efficient, shang2008taming, sun2020memory, zhao2010graph}. Table~\ref{tab:real-datasets} lists the properties of the real-world datasets. 

\underline{\textit{Synthetic Graphs:}}
The synthetic graphs were generated by the NetworkX~\cite{hagberg2020}. We produce three synthetic graphs: a random regular graph (Syn-RG), a small-world graph following the Newman-Watts-Strogatz model (Syn-WS)~\cite{watts1998collective}, and a scale-free graph following the Barabasi-Albert model (Syn-BA)~\cite{barabasi1999}. To evaluate the GNN-AE efficiency, we vary the characteristics of synthetic graphs. Table~\ref{tab:parameter-settings} depicts the parameters of synthetic graphs, where default parameter values are in bold. In the evaluation, we vary one parameter value while keeping the other parameters at default values.

\begin{table}[!t]
	\centering
	\fontsize{7pt}{8pt}\selectfont 
	\renewcommand\arraystretch{1.3}
	\caption{Parameter settings}
	\label{tab:parameter-settings}
	\begin{tabular}{|p{4.4cm}||p{3.35cm}|}   
		\hline
		\textbf{Parameter} & \textbf{Values} \\ 
		\hline
		\hline
		acceptable vertex degree threshold $d^*$ & 6, 8, \textbf{10}, 12, 14, 16 \\ 
		\hline
		the size, $|V(Q)|$, of query graph $Q$ & 4, 6, \textbf{8}, 10, 12, 16, 24, 32 \\
		\hline
		$avg\_deg(G)$ of synthetic data graph $G$ & 3, 4, \textbf{5}, 6, 8 \\
		\hline
		the number, $|\Sigma|$, of synthetic data graph $G$ & 20, 50, \textbf{100}, 200, 300, 500 \\
		\hline
		the size, $|V(G)|$, of synthetic data graph $G$ & 10K, 50K, \textbf{80K}, 100K, 500K, 1M \\
		\hline
	\end{tabular}
\end{table}

\noindent{\bf Query Graphs.}
We follow previous works~\cite{sun2021rapidmatch, sun2020memory, bi2016efficient, han2019efficient, ye2024efficient, choi2023bice} to generate query sizes and types for each data graph $G$. Specifically, for each dataset, we perform a random walk on $G$ until getting the specified number of vertices and extract the induced subgraph. Table~\ref{tab:parameter-settings} lists the size $|V(Q)|$ of query $Q$. Except for queries with $|V(Q)| = 4$, each query set on real-world graphs contains two categories: dense query graph (i.e., $avg\_deg(Q) > 3$) and sparse query graph (i.e., $avg\_deg(Q) \le 3$). For each query size and category, we generate 100 queries.

\noindent{\bf Reproducibility.} Our source code and data are available \href{https://github.com/BenYoungLab/GNN-AE}{\underline{here}}. 

\noindent{\bf Evaluation Metrics.}
We report the efficiency of our GNN-AE in terms of the total query time (including generating a DFS query plan, obtaining feature embeddings, getting candidate matches, and matching growth) for all queries by default. We also evaluate the \textit{filtering power} of our feature (anchored subgraph \& path) embeddings, which is the percentage of invalid (i.e., does not belong to the final matches) candidate matches that can be filtered out by our feature embeddings. At last, we also report offline pre-computation costs of GNN-AE.

\subsection{Efficiency and Effectiveness Evaluation}
\label{sub:efficiency-effectiveness}

\noindent{\bf The GNN-AE Efficiency on Real/Synthetic Datasets.}
In Fig.~\ref{fig:efficiency-baselines} and Fig.~\ref{fig:efficiency-online-cost}, we report the efficiency of GNN-AE.

\begin{figure}[!t]
	\centering
	\includegraphics[width=0.96\linewidth]{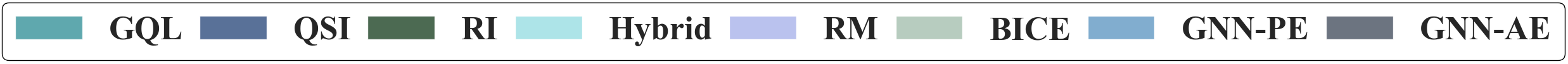}
	\vskip -0.15cm
	\begin{minipage}{\linewidth}
		\centering
		\subfloat[real-world datasets]{\includegraphics[width=0.46\linewidth,height=1.1in]{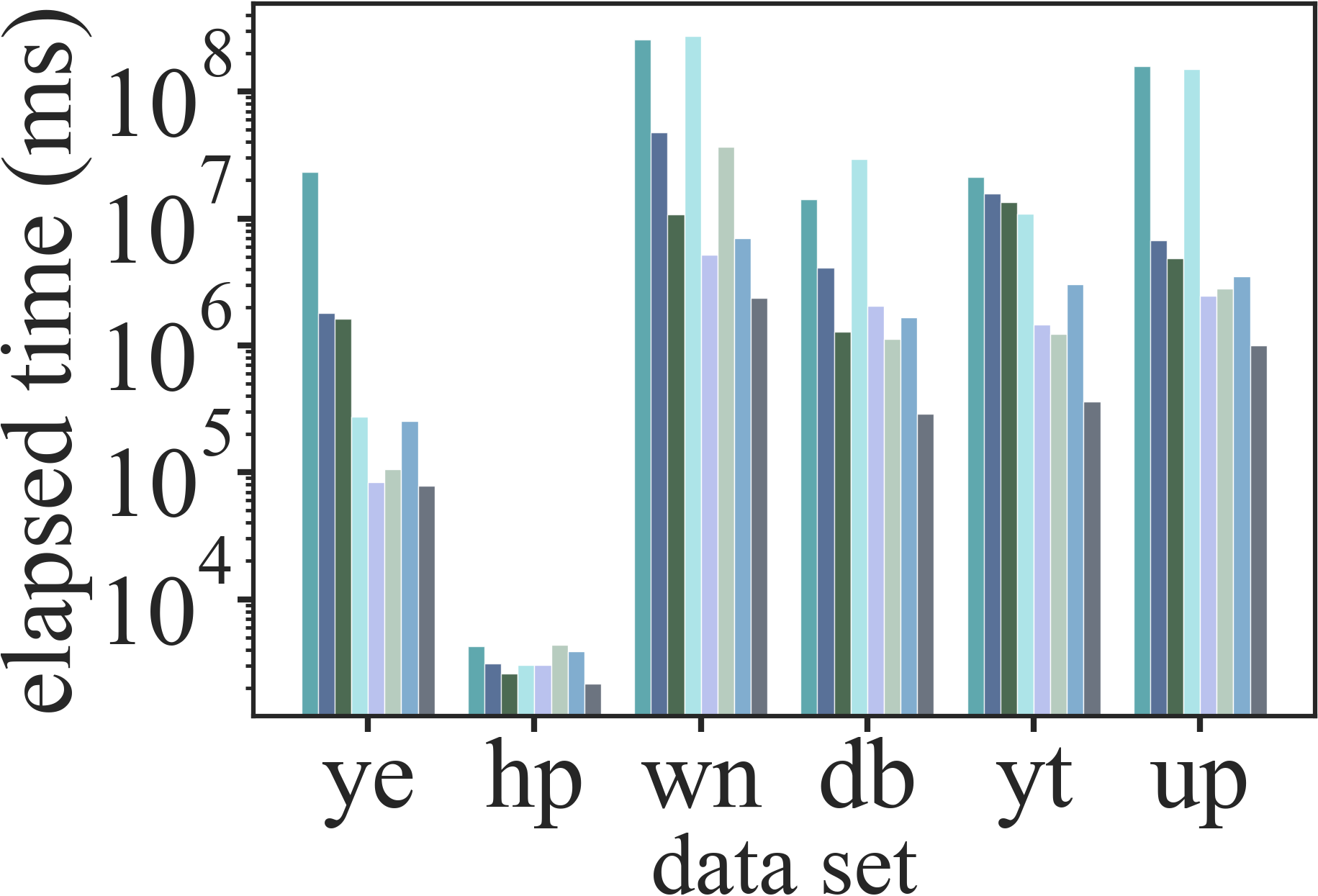}}
		\hspace{0.2cm}	
		\subfloat[synthetic datasets]{\includegraphics[width=0.46\linewidth,height=1.1in]{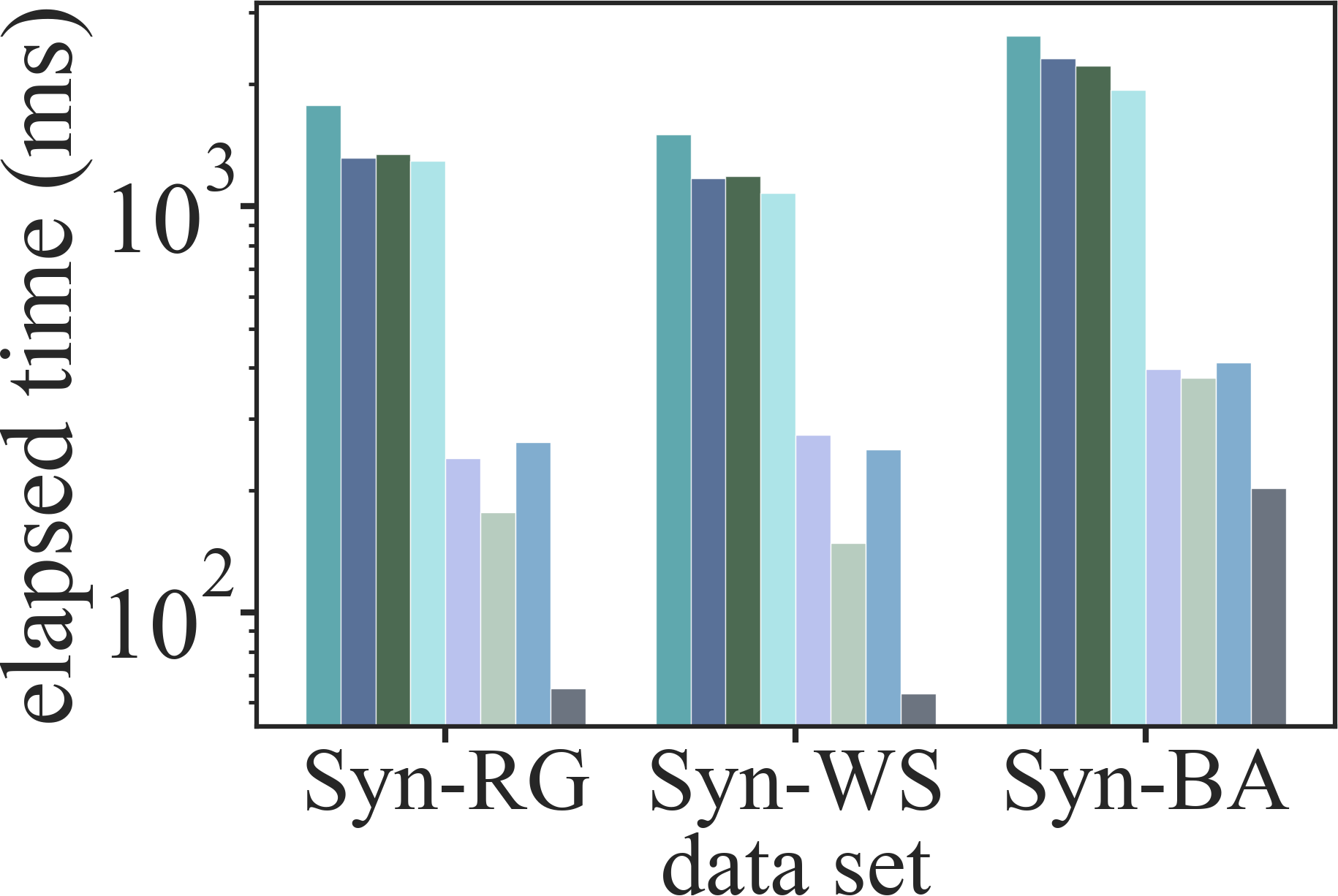}}
	\end{minipage}
	\caption{The GNN-AE efficiency on real and synthetic datasets.}
	\label{fig:efficiency-baselines}
	\vspace{-0.4em}
\end{figure}

\begin{figure}[!t]
	\centering
	\includegraphics[width=0.96\linewidth]{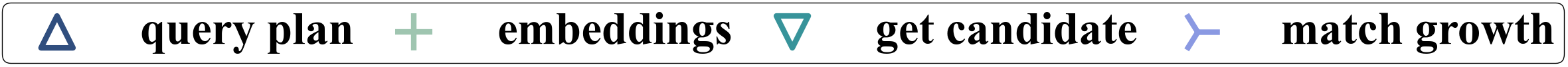}
	\vskip -0.15cm
	\begin{minipage}{\linewidth}
		\centering
		\subfloat[online computation on \textit{ye}]{\includegraphics[width=0.46\linewidth,height=1.1in]{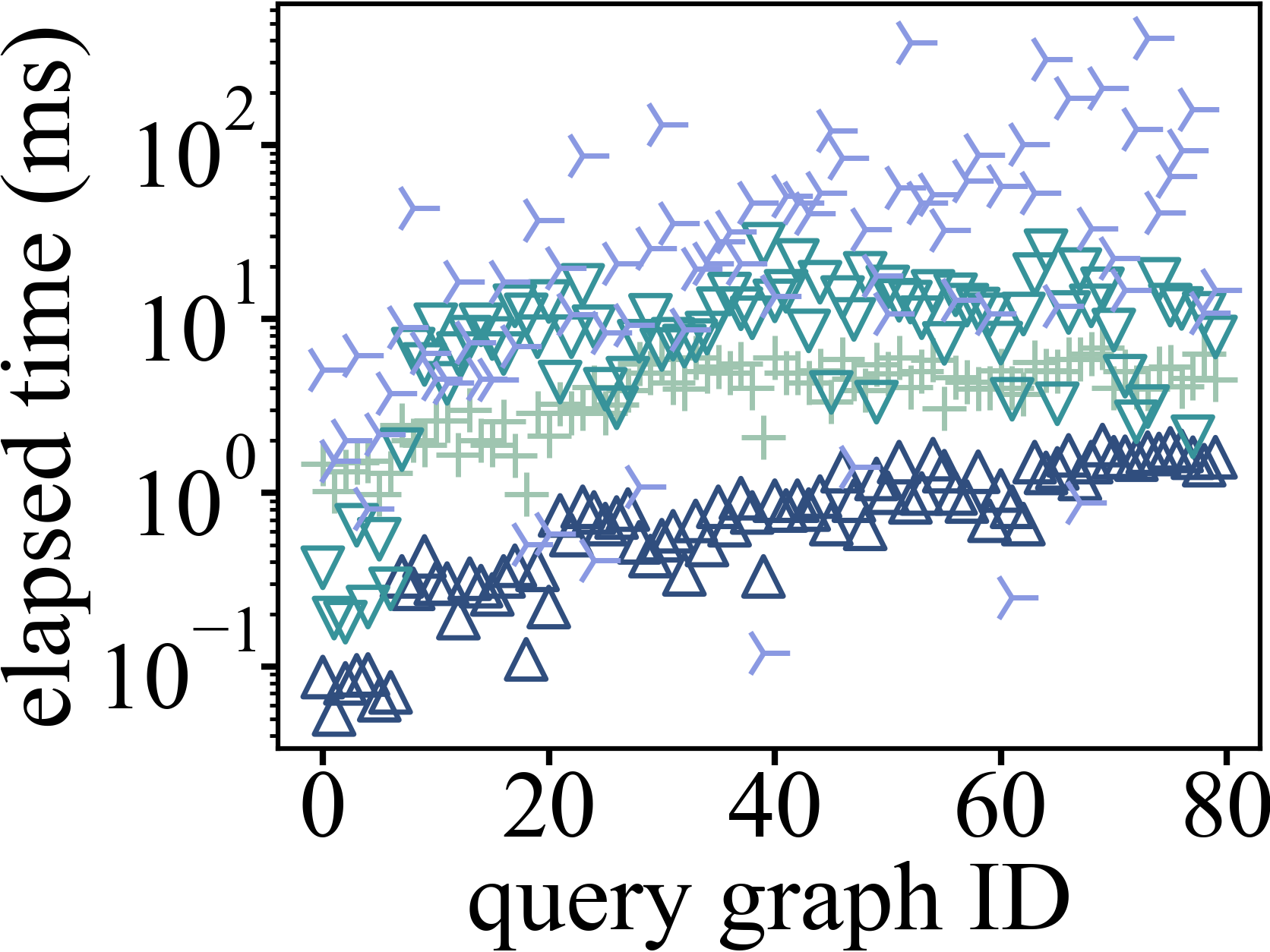}}
		\hspace{0.2cm}	
		\subfloat[online computation on \textit{wn}]{\includegraphics[width=0.46\linewidth,height=1.1in]{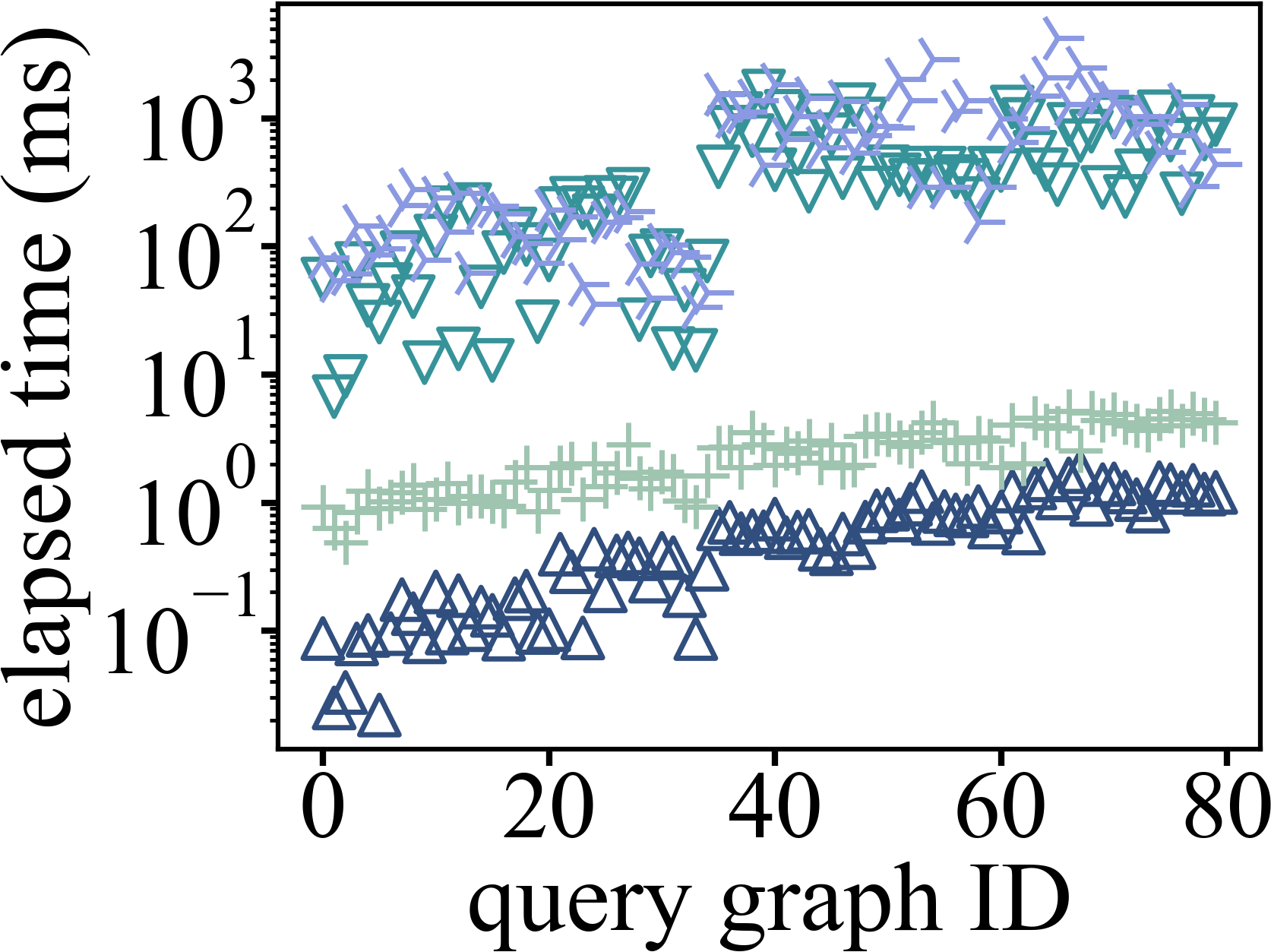}}
	\end{minipage}
	\caption{The GNN-AE online computation cost.}
	\label{fig:efficiency-online-cost}
	\vspace{0em}
\end{figure}  

\underline{\textit{Compare with Baselines:}}
Fig.~\ref{fig:efficiency-baselines} illustrates the total query time for all queries of our GNN-AE and 7 baselines over real-world and synthetic datasets. Overall, our GNN-AE consistently outperforms baseline methods. Specifically, for some real-world (e.g., \textit{db}, \textit{yt}, and \textit{up}) and synthetic (e.g., \textit{Syn-RG}, and \textit{Syn-WS}) datasets, our GNN-AE can perform better than baselines by up to 1--2 orders of magnitude. In particular, our GNN-AE significantly outperforms exploration-based backtracking search methods (e.g., GQL, QSI, and Hybrid). 

In subsequent experiments, we also evaluate the efficiency of our GNN-AE under varying query and data graph characteristics (e.g., query graph patterns, $|V(Q)|$, $|V(G)|$, $avg\_deg(G)$, and $|\Sigma|$ of $G$). To better show the curve trends, we omit the baseline results in subsequent experiments.

\underline{\textit{Online Computation Cost:}}
Fig.~\ref{fig:efficiency-online-cost} reports the computation cost of four steps (including DFS query plan, obtain feature embeddings, get candidate matches, and match growth) in the online subgraph matching phase (i.e., Algorithm~\ref{alg:subgraph-matching}). We select 80 query graphs of \textit{ye} and \textit{wn} respectively, and record the elapsed time of four steps. We find that the get candidate matches and match growth steps take up most of the elapsed time for subgraph matching. For the DFS query plan and obtaining feature embeddings, the time cost of our GNN-AE remains low (i.e., $<1.7$ $ms$, and $<6.5$ $ms$, respectively). 

\noindent{\bf The GNN-AE Efficiency w.r.t Query Graph Patterns.}
Fig.~\ref{fig:efficiency-query-patterns} presents the total query time of our GNN-AE on different query graph patterns on \textit{wn} dataset, including dense (i.e., $avg\_deg(Q) > 3$, denoted as $D$) and sparse (i.e., $avg\_deg(Q) \le 3$, denoted as $S$) query graphs with varying sizes. The figure shows that our GNN-AE can perform better than baselines on most query graph patterns. Especially, for most query patterns, GNN-AE can perform better than traditional exploration-based backtracking search methods (e.g., GQL, QSI, and Hybrid) and BICE by 1-2 orders of magnitude.

\begin{figure}[!t]
	\setlength{\abovecaptionskip}{0.02cm}
	\centering
	\includegraphics[width=\linewidth]{efficiency-bars-legend.png}
	\vskip 0.1cm
	\includegraphics[width=0.96\linewidth]{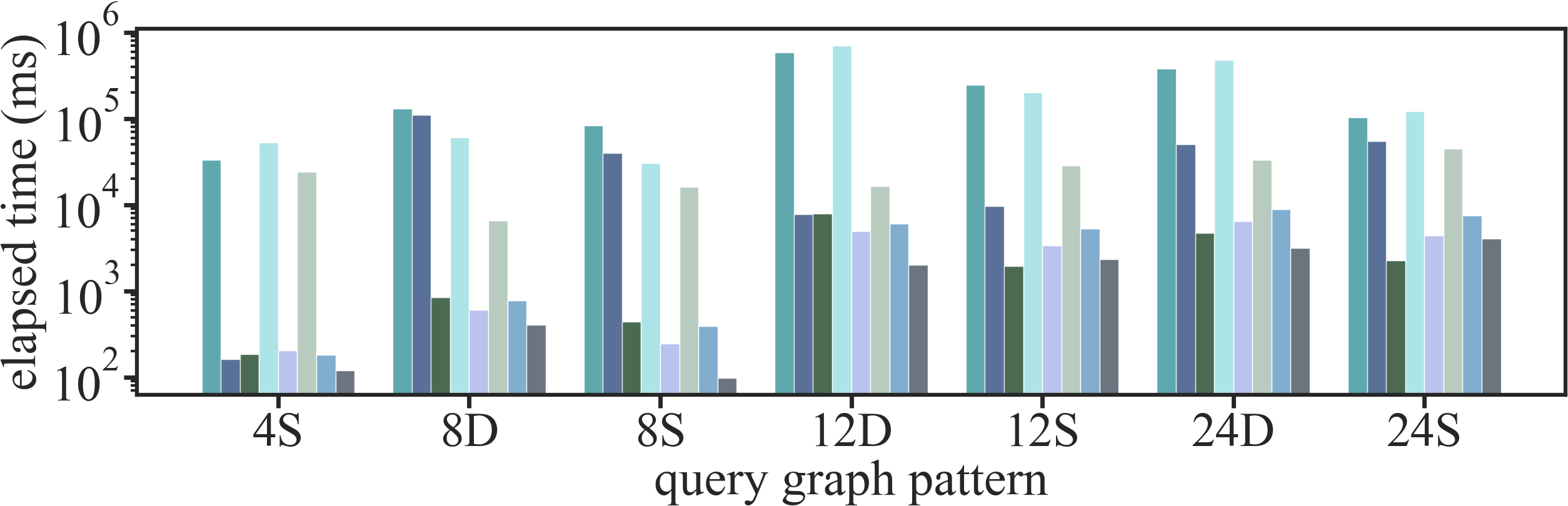}
	\caption{The GNN-AE efficiency w.r.t query graph patterns.}
	\label{fig:efficiency-query-patterns}
	\vspace{-1.9em}
\end{figure}

\begin{figure}[!t]
	\setlength{\abovecaptionskip}{0.2cm}
	\centering
	\begin{minipage}{\linewidth}
		\centering
		\subfloat[real-world datasets]{\includegraphics[width=0.38\linewidth,height=1in]{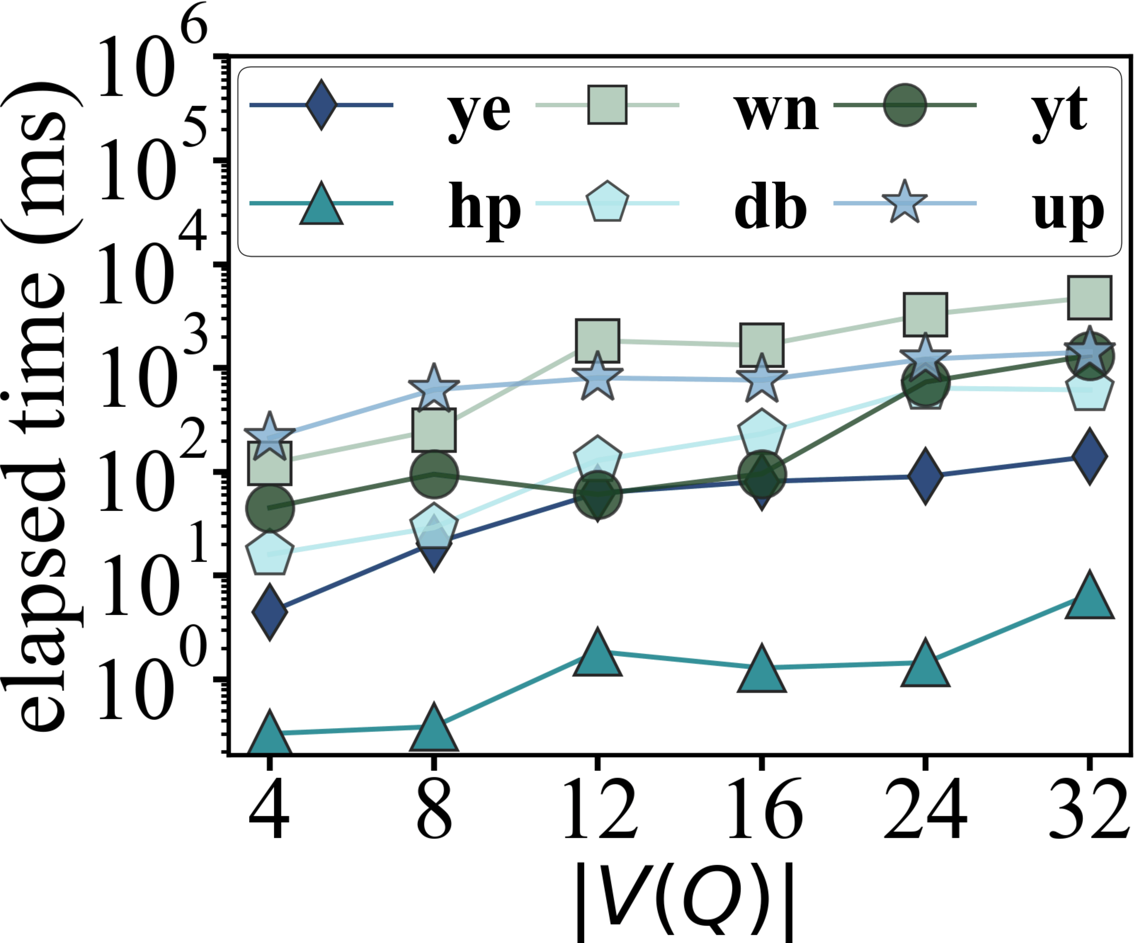}}
		\hspace{0.5cm}	
		\subfloat[synthetic datasets]{\includegraphics[width=0.38\linewidth,height=1in]{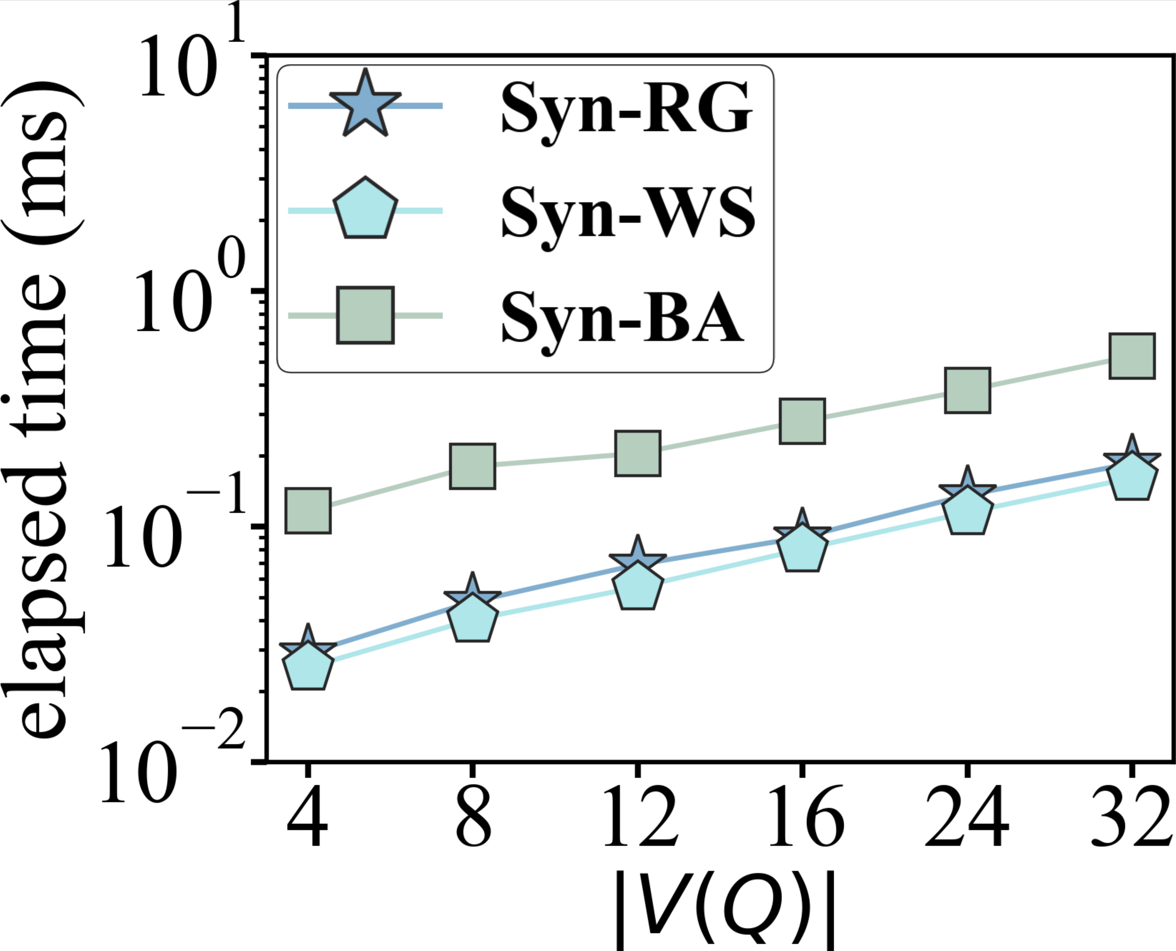}}
	\end{minipage}
	\caption{The GNN-AE efficiency w.r.t query graph size $|V(Q)|$.}
	\label{fig:efficiency-query-size}
	\vspace{-1.9em}
\end{figure}

\begin{figure}[!t]
	\setlength{\abovecaptionskip}{0.2cm}
	\centering
	\begin{minipage}{\linewidth}
		\centering
		\subfloat[$|V(G)|$]{\includegraphics[width=0.31\linewidth,height=0.905in]{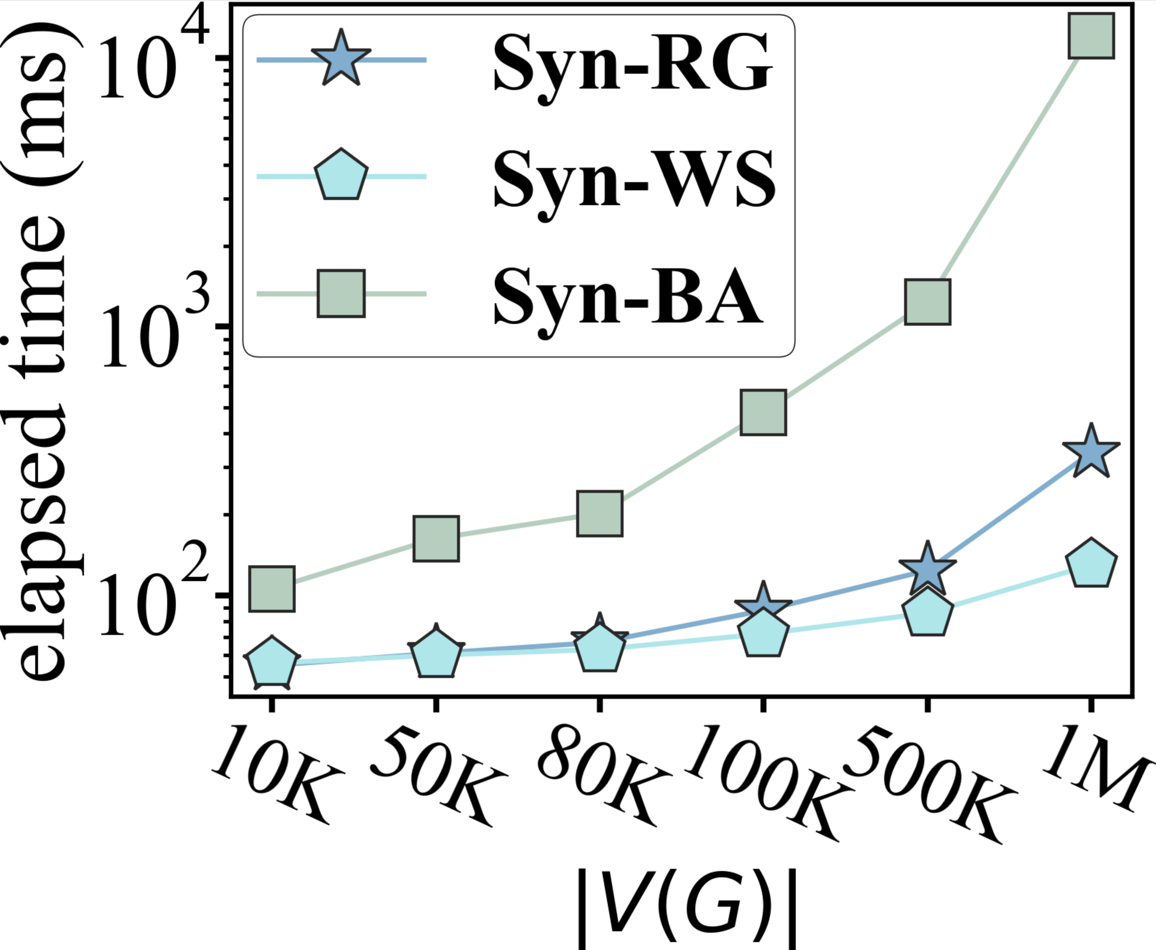}}
		\hspace{0.15cm}	
		\subfloat[$avg\_deg(G)$]{\includegraphics[width=0.31\linewidth,height=0.95in]{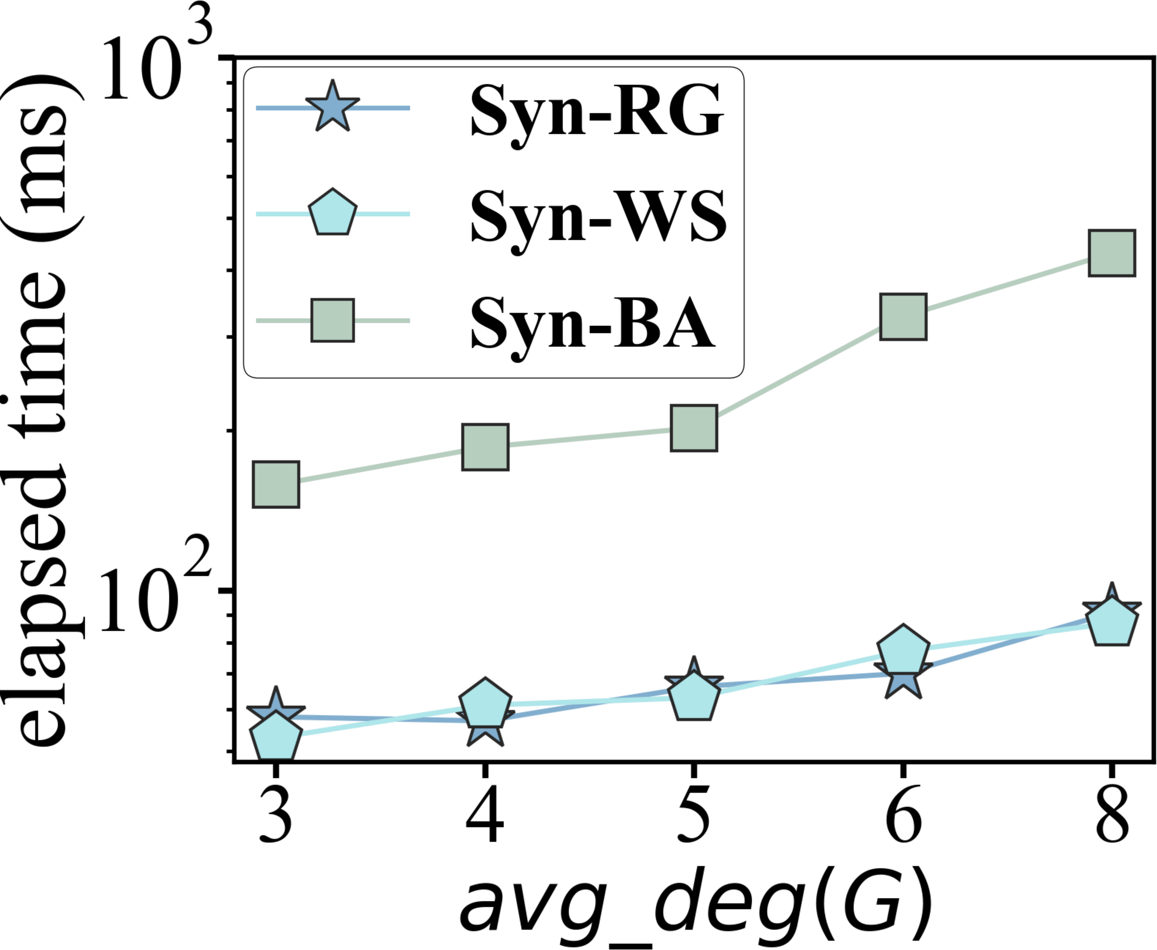}}
		\hspace{0.15cm}	
		\subfloat[$|\Sigma|$]{\includegraphics[width=0.31\linewidth,height=0.9in]{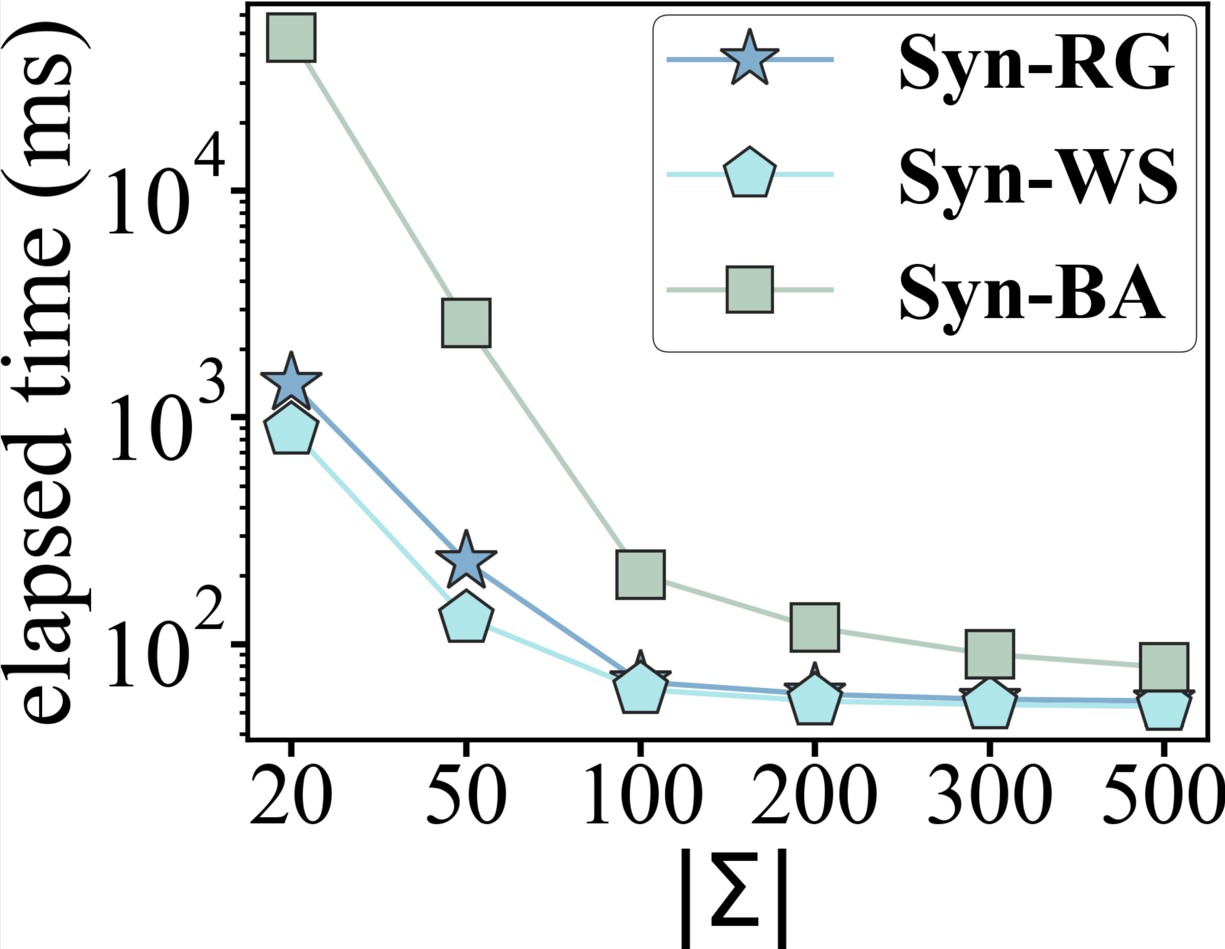}}
	\end{minipage}
	\caption{The GNN-AE scalability w.r.t varying data graph.}
	\label{fig:efficiency-data-graph}
	\vspace{0em}
\end{figure}

\noindent{\bf The GNN-AE Efficiency w.r.t Query Graph Size.} 
Fig.~\ref{fig:efficiency-query-size} illustrates the total query time of GNN-AE with varying the query graph size $|V(Q)|$ from 4 to 32. When $|V(Q)|$ increases, more DFS edges in the query are processed. According to the time complexity in the subsection~\ref{sub:matching-growth-algorithm}, the elapsed time of online subgraph matching tends to increase. Thus, larger query graph size $|V(Q)|$ incurs higher elapsed time. For varying query graph sizes, GNN-AE achieves a low average time cost.

\noindent{\bf The GNN-AE Scalability w.r.t Data Graph.}
Fig.~\ref{fig:efficiency-data-graph} evaluates the total query time of GNN-AE for all queries with different data graph characteristics, including the data graph size $|V(G)|$, $avg\_deg(G)$, and $|\Sigma|$ of $G$. Intuitively, larger $|V(G)|$ and $avg\_deg(G)$, more candidate matches for the query, which incurs an increased time cost. When $|\Sigma|$ increases, the feature embeddings have a stronger filtering power, which produces fewer candidate matches for the query. Thus, the elapsed time of GNN-AE shows a downward trend. Nevertheless, for varying data graphs, GNN-AE still remains a low time cost (i.e., $<47$ $sec$), which confirms the scalability of GNN-AE.

\noindent{\bf The Feature Embedding Filtering Power.}
Fig.~\ref{fig:filtering-real-syn} presents the filtering power of our proposed feature embeddings (i.e., anchored subgraph \& path embeddings). From subfigures, we can see that for all real-world (except \textit{wn}) and synthetic datasets, the filtering power of our feature embeddings can reach $99.37\% \sim 99.99\%$, which confirms the effectiveness of the feature embeddings. For \textit{wn}, most vertices have the same label ($|\Sigma|$ is only 5), which makes it challenging. Thus, feature embeddings have a relatively low filtering power on \textit{wn}.

\begin{figure}[!t]
	\setlength{\abovecaptionskip}{0.2cm}
	\centering
	\begin{minipage}{\linewidth}
		\centering
		\subfloat[real-world datasets]{\includegraphics[width=0.38\linewidth,height=0.9in]{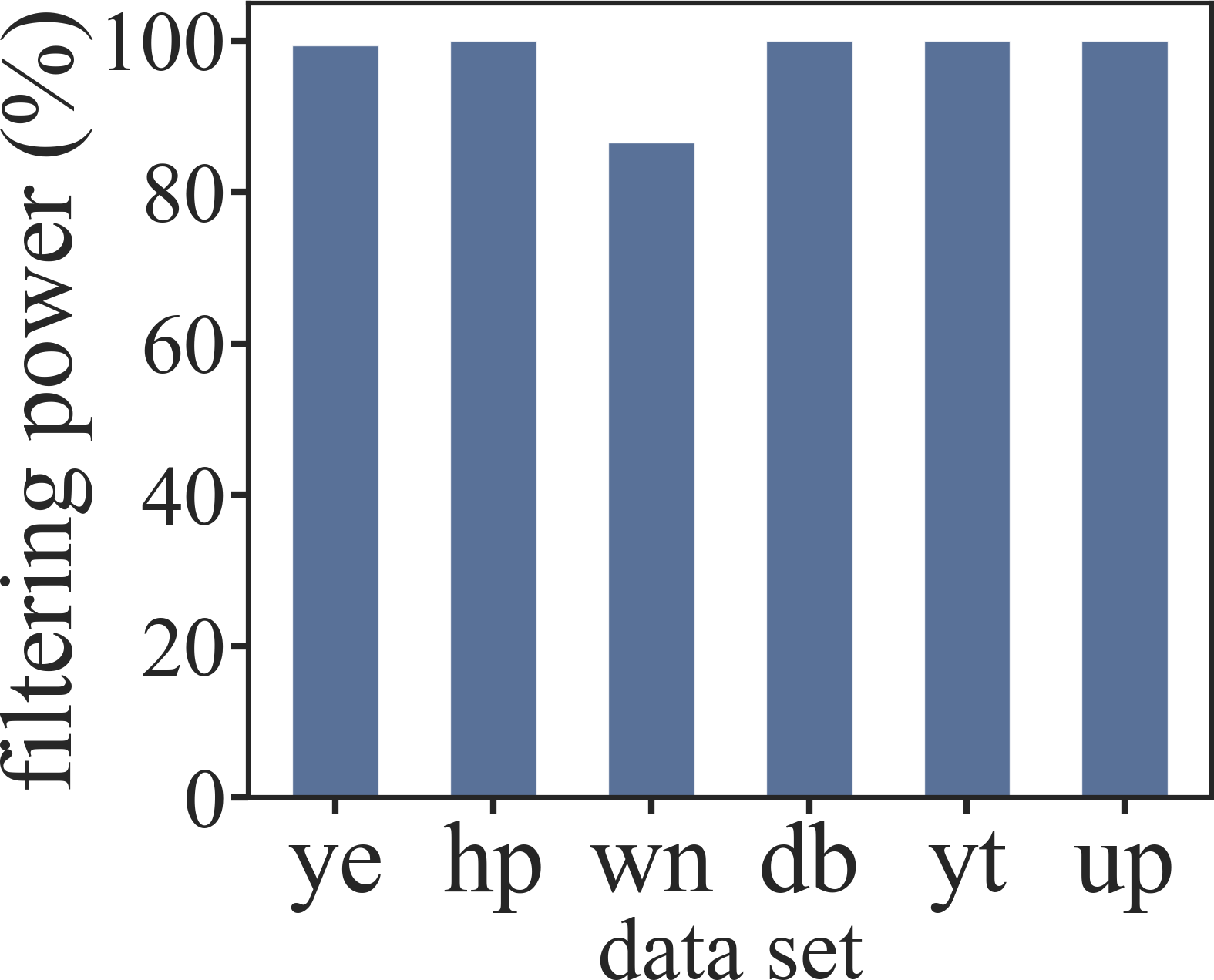}}
		\hspace{0.5cm}	
		\subfloat[synthetic datasets]{\includegraphics[width=0.38\linewidth,height=0.9in]{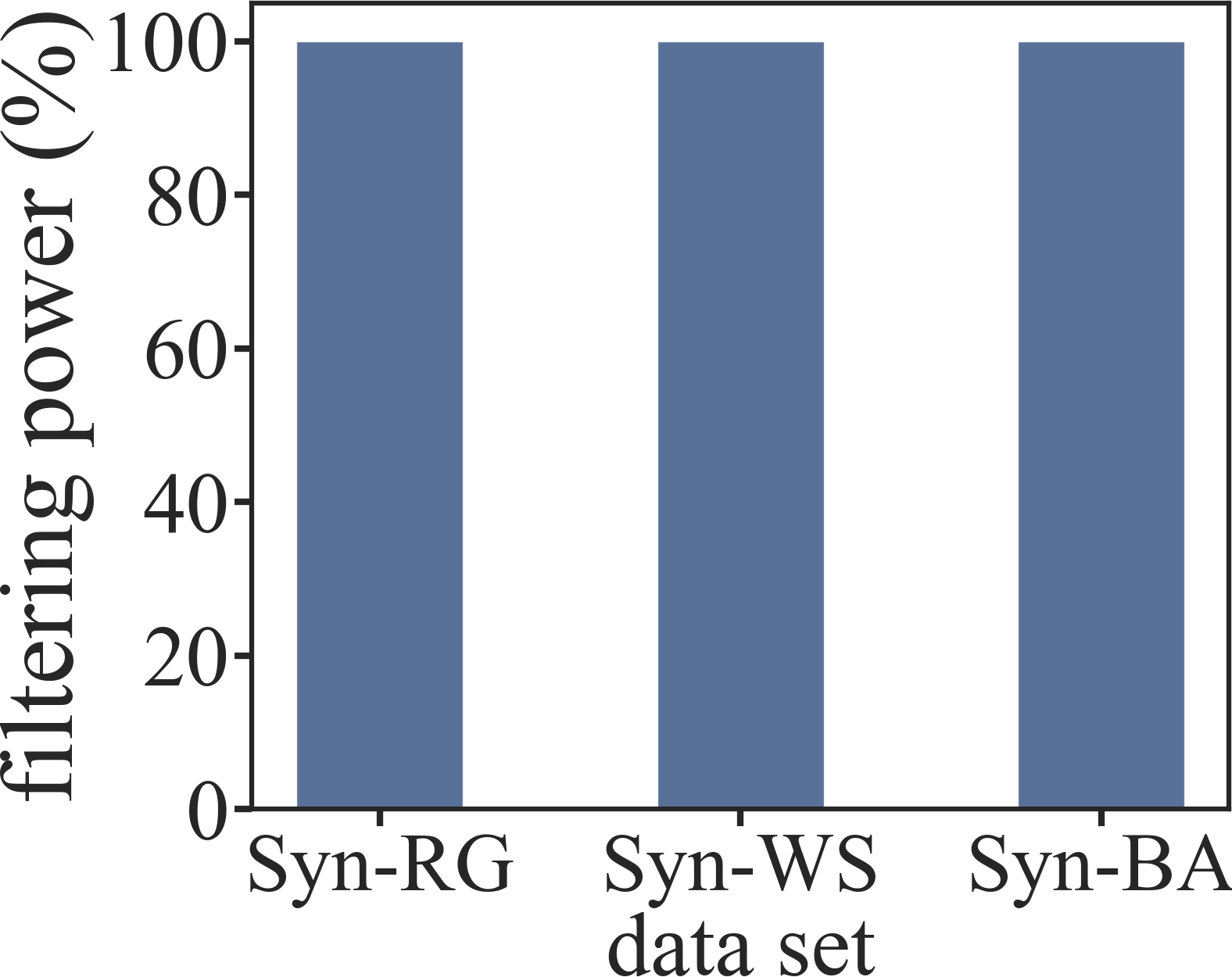}}
	\end{minipage}
	\caption{The feature embedding filtering power.}
	\label{fig:filtering-real-syn}
	\vspace{-0.5em}
\end{figure}

\subsection{Offline Pre-Computation Evaluation}
\label{sub:offline-pre-computation}

\begin{figure}[!t]
	\setlength{\abovecaptionskip}{0.02cm}
	\centering
	\includegraphics[width=0.66\linewidth,height=1.0in]{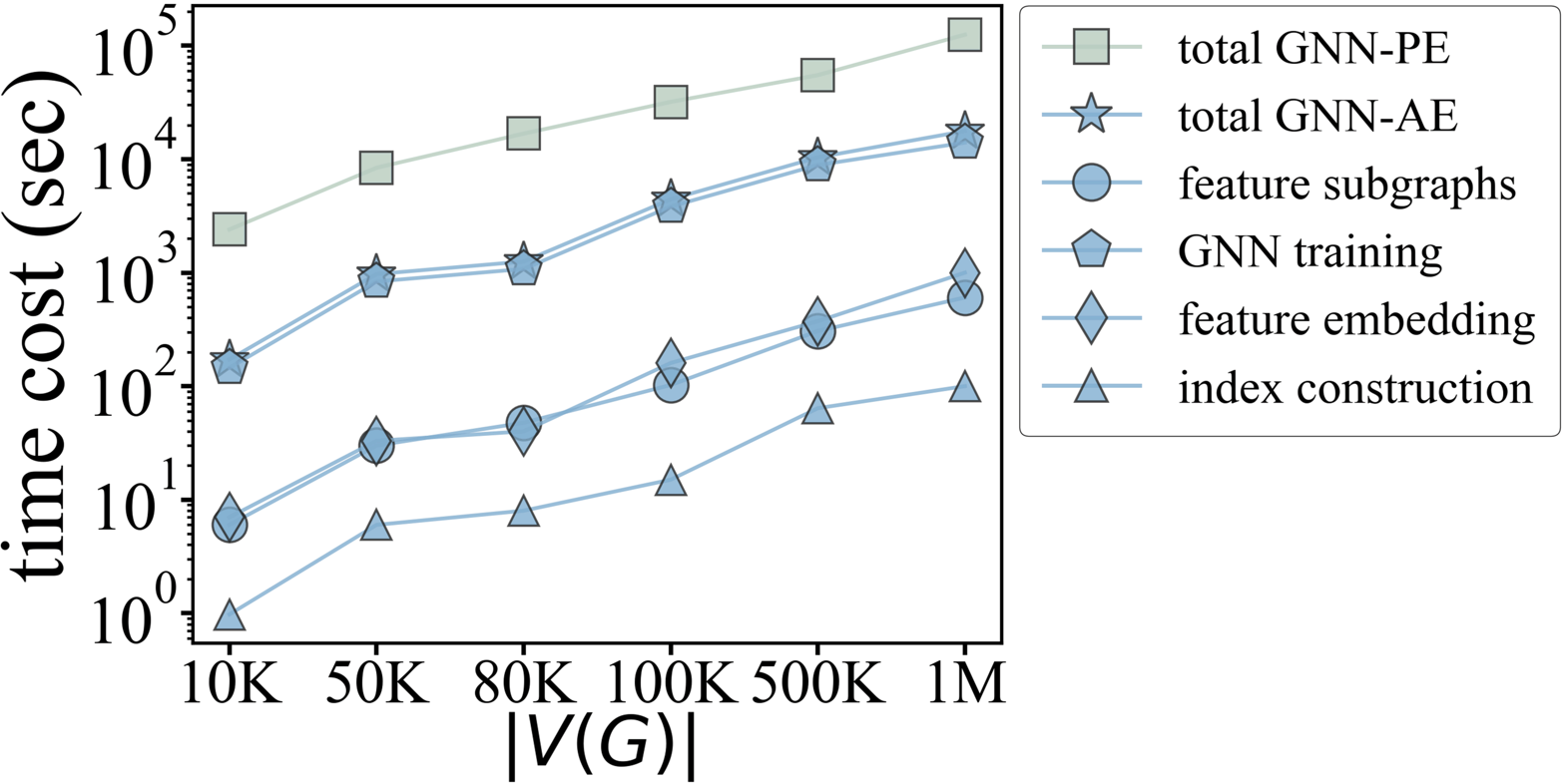}
	\caption{The GNN-AE offline pre-computation cost.}
	\label{fig:pre-computation}
	\vspace{-1.7em}
\end{figure}

\begin{figure}[!t]
	\setlength{\abovecaptionskip}{0.2cm}
	\centering
	\begin{minipage}{\linewidth}
		\centering
		\subfloat[index construction time]{\includegraphics[width=0.41\linewidth,height=1.16in]{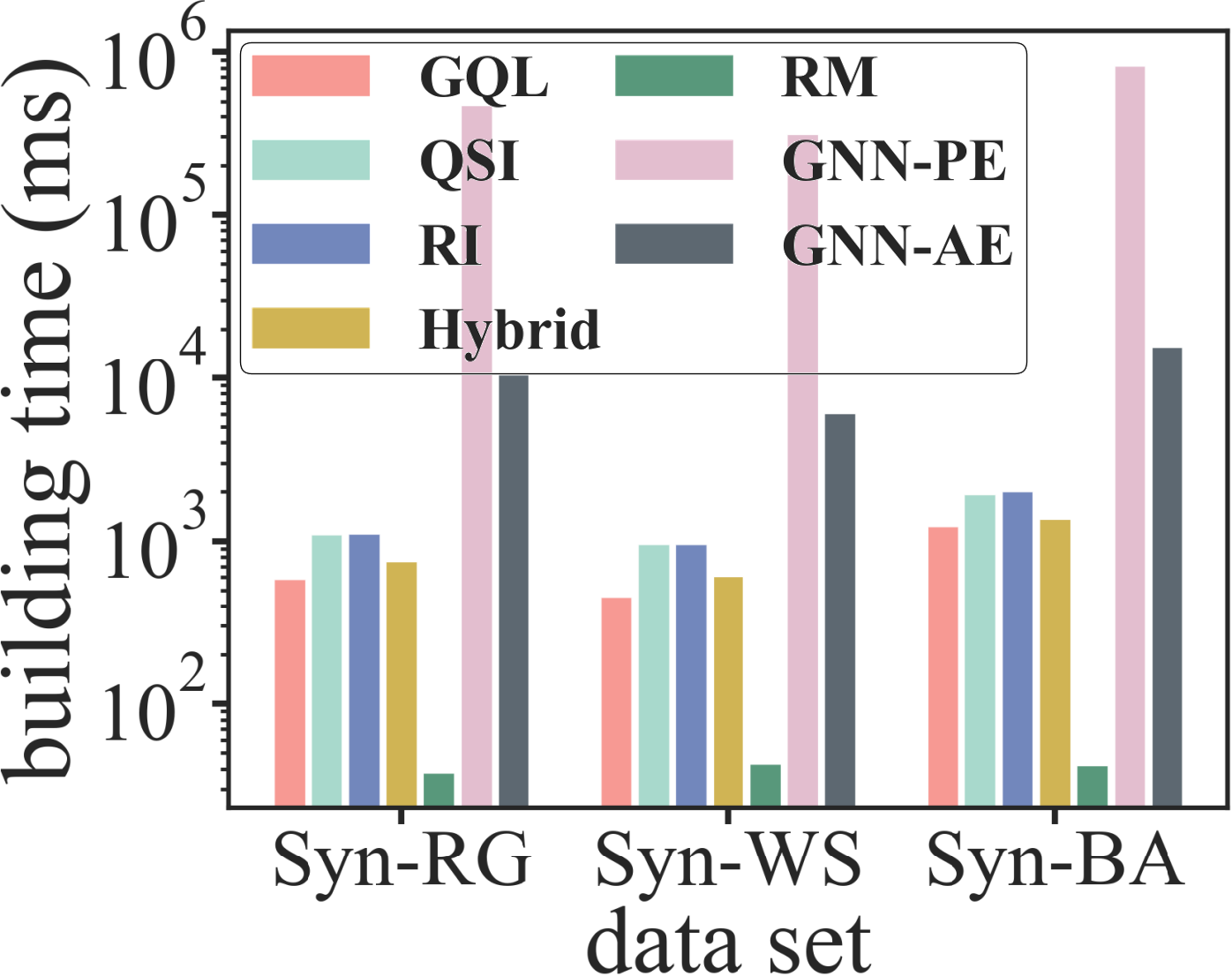}}
		\hspace{0.3cm}
		\subfloat[index storage cost]{\includegraphics[width=0.41\linewidth,height=1.18in]{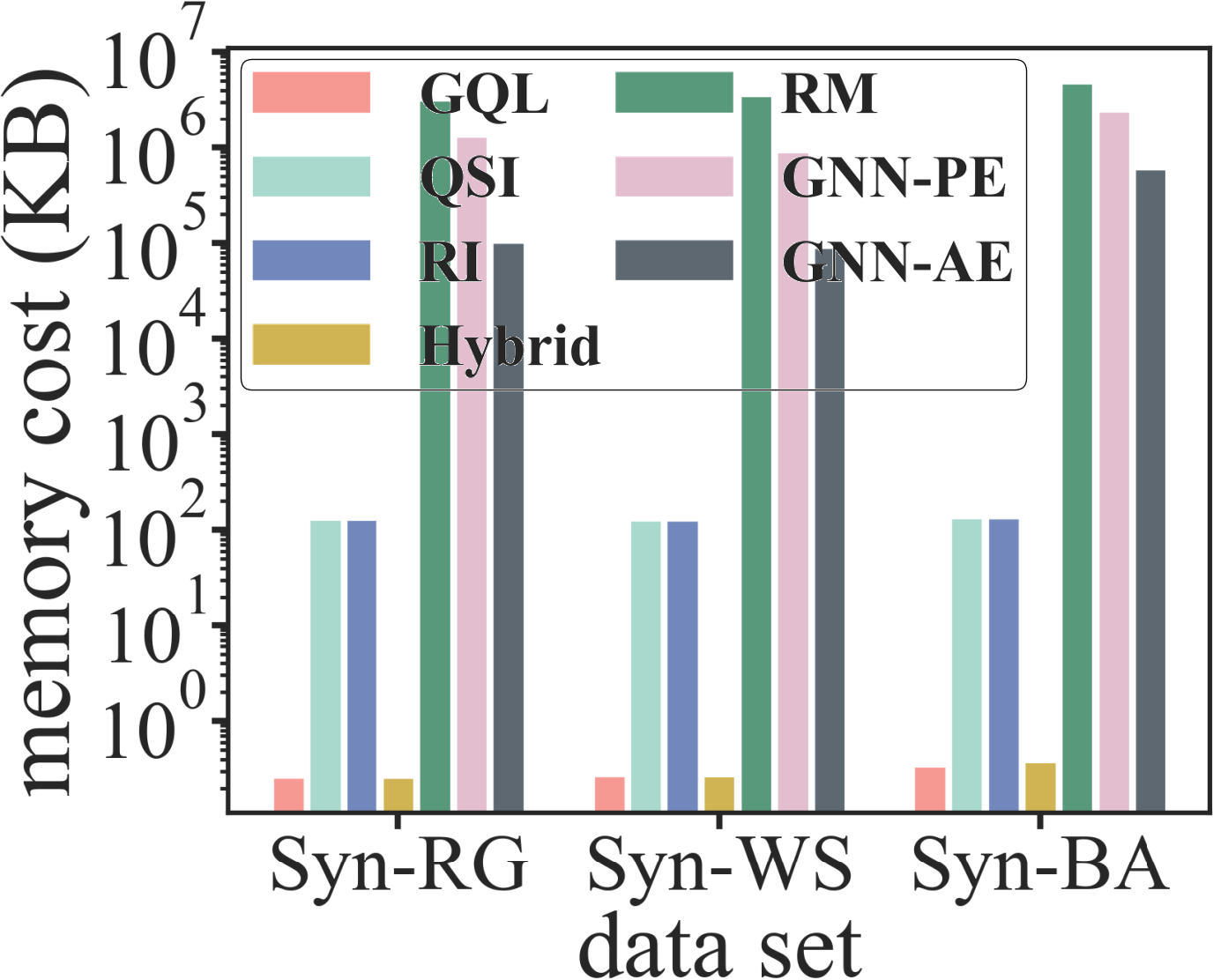}}
	\end{minipage}
	\caption{The GNN-AE index construction time and space costs.}
	\label{fig:index-cost}
	\vspace{0em}
\end{figure}

\noindent{\bf The GNN-AE Offline Pre-Computation Cost.}
The GNN-AE offline pre-computation consists of obtaining feature subgraphs (anchored subgraphs \& anchored paths), GNN training, generating feature embeddings, and index construction. Fig.~\ref{fig:pre-computation} evaluates the offline pre-computation cost of GNN-AE with different data graph size $|V(G)|$ on the synthetic graph \textit{Syn-WS}. In the subfigure, we find that GNN training takes up most of the offline pre-computation cost. Since GNN-PE requires the GNN model to be trained to overfit the training data set, GNN-PE has a high offline pre-computation cost. However, our GNN-AE can achieve a lower per-computation cost than GNN-PE by 1 order of magnitude for varying $|V(G)|$.

\noindent{\bf The GNN-AE Index Construction Time \& Space Costs.}
Fig.~\ref{fig:index-cost} compares the index construction time and memory cost of GNN-AE with 6 baseline methods that require auxiliary indexes. Although GNN-AE requires higher index construction time than exploration-based backtracking search methods (e.g., GQL, QSI, RI, and Hybrid) and RM, our GNN-AE has a lower index construction cost than the advanced GNN-PE approach. For the index storage cost, we calculate the average index memory cost for query graphs. The memory cost of RM on some query graphs exceeds $500MB$. GNN-AE only stores the anchored (subgraph \& path) embeddings and their corresponding data graph edges in the indexes, rather than storing a larger number of paths and their embeddings like GNN-PE. Thus, our GNN-AE has a lower index storage cost. Note that, different from most baseline methods that construct an auxiliary index for each query graph during online subgraph matching, the index construction of our GNN-AE is \textit{offline} and \textit{one-time only}. Thus, the indexes in our GNN-AE are suitable for processing numerous subgraph matching requests. 

\subsection{Other Parameter Evaluation}
\label{sub:parameter-evaluation}

\begin{figure}[!t]
	\centering
	\begin{minipage}{\linewidth}
		\centering
		\subfloat[degree threshold $d^*$]{\includegraphics[width=0.43\linewidth,height=1.16in]{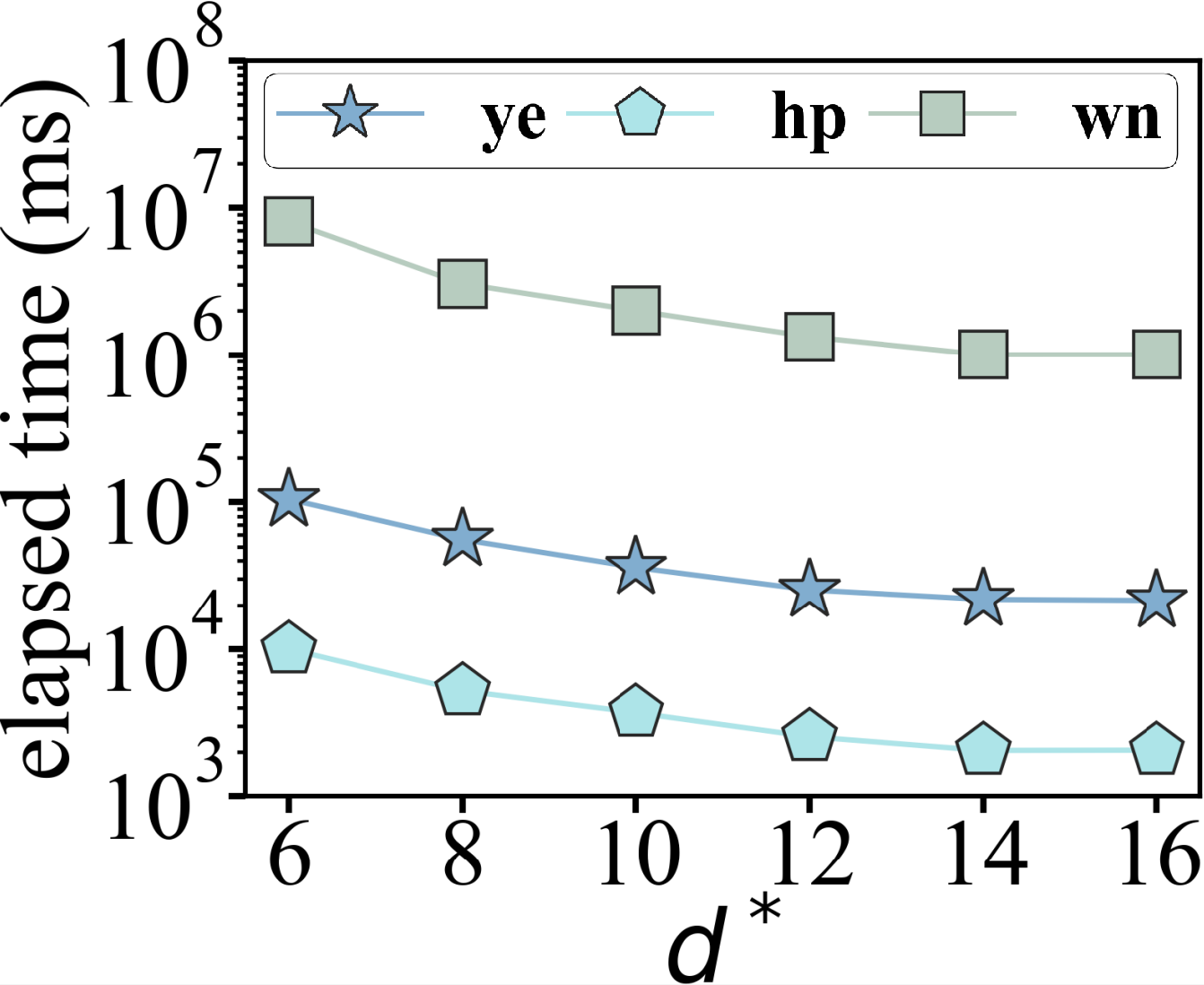}}
		\hspace{0.3cm}	
		\subfloat[DFS query strategies]{\includegraphics[width=0.43\linewidth,height=1.1in]{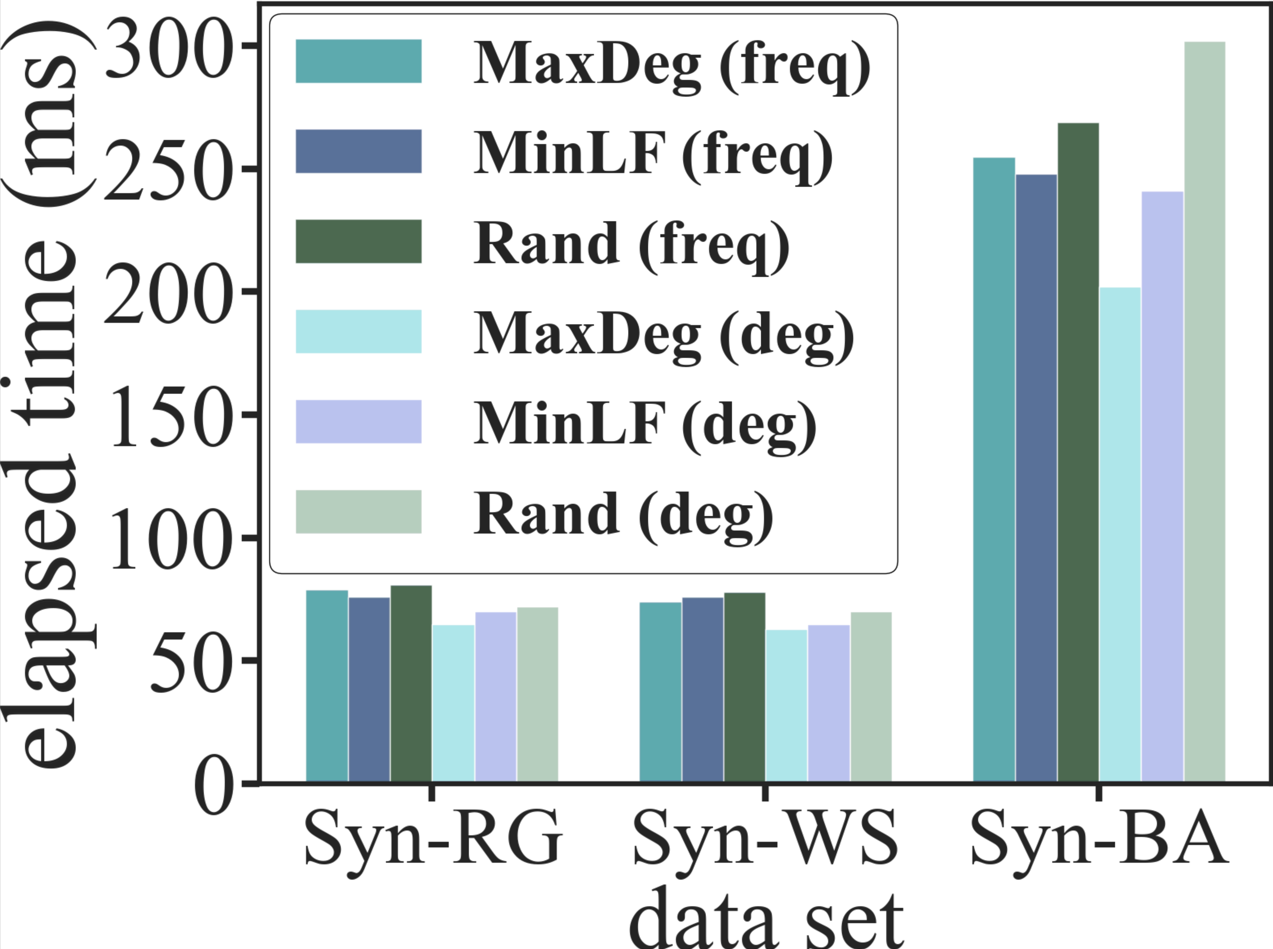}}
	\end{minipage}
	\caption{The GNN-AE efficiency w.r.t. varying $d^*$, and DFS query strategies.}
	\label{fig:paremeter-tuning}
	\vspace{-0.4em}
\end{figure}

\noindent{\bf The GNN-AE Efficiency w.r.t Threshold $d^*$.}
Fig.~\ref{fig:paremeter-tuning}(b) shows the total query time of GNN-AE for all queries on 3 real-world graphs, by varying the vertex degree threshold $d^*$. When $d^*$ increases, more edges on the data graph obtain feature embeddings via the anchored subgraph rather than the path, which incurs higher filtering power for embeddings. The elapsed time of GNN-AE shows a downward trend (i.e., reduced up to 73\% on \textit{ye}, 75\% on \textit{hp}, and 82\% on \textit{wn}). 

\noindent{\bf The GNN-AE Efficiency w.r.t DFS Query Plan.}
Fig.~\ref{fig:paremeter-tuning}(c) shows the total query time of our GNN-AE for all queries with different DFS query plan strategies on 3 synthetic graphs. Generally, degree-based query cost methods perform better than label-frequency-based ones. This implies that the degree-based query cost strategy may obtain fewer candidate matches for each DFS edge in the query graph. In addition, the combination MaxDeg (deg) obtains the optimal performance. 

\section{Related Work}
\label{sec:related-work}

\noindent{\bf Exact Subgraph Matching.}
Existing exact subgraph matching methods usually follow the \textit{exploration-based} or \textit{join-based} paradigms~\cite{sun2020memory, sun2021rapidmatch}. The exploration-based methods adopt the backtracking search, which recursively extend intermediate results by mapping query graph vertices to data graph vertices along a matching order~\cite{sun2012efficient, carletti2017challenging, bhattarai2019ceci, han2019efficient, han2013turboiso, zhao2010graph, zhang2009gaddi, choi2023bice, jiang2024ive}. The join-based methods model the subgraph matching as a relational query problem, and perform a multi-way join to obtain all matching results~\cite{aberger2017, mhedhbi2019, nguyen2015join, ammar2018, lai2019distributed, sun2021rapidmatch}.

There are several new works~\cite{wang2022reinforcement, ye2024efficient} that attempt to utilize deep-learning-based techniques to deal with the exact subgraph matching, but they both have obvious drawbacks. RL-QVO in~\cite{wang2022reinforcement} employs the \textit{Reinforcement Learning} (RL) and GNNs to generate a high-quality matching order for subgraph matching algorithms, but it relies on historical query graphs to train the model, which limits its scalability and robustness. GNN-PE in~\cite{ye2024efficient} adopts a GNN model to learn the dominance relationships between the vertex (and its 1-hop neighbors) and their substructures on the data graph, and then uses the path with dominance relationships to obtain matches. GNN-PE requires the GNN model to be trained to overfit the training data set, which results in a high computation cost.  

\noindent{\bf Approximate Subgraph Matching.}
An alternative methodology is to obtain top-$k$ approximate results for subgraph matching, which returns subgraphs in the data graph $G$ that are similar to the given query $Q$~\cite{gori2005exact, dutta2017neighbor, zhu2011structure, kpodjedo2014using}. Approximate subgraph matching usually adopt various graph similarity measures such as edge edit distance~\cite{zhang2010sapper} to qualify occurrences of the query $Q$ on the data graph $G$. If the edit distance between the query $Q$ and a subgraph $g$ of $G$ is no more than some threshold $\theta$, then $g$ is an approximate result of $Q$.

Recently, several advanced works~\cite{liu2023d2match, roy2022interpretable, lou2020neural, doan2021interpretable} attempt to utilize deep-learning-based techniques to handle the approximate subgraph matching. However, these approaches usually only approximately assert subgraph isomorphism relations between the query and the data graph, and can not retrieve locations of all matching. Beyond that, they have no theoretical guarantees or error bounds about the approximate accuracy.  

\noindent{\bf Other Learning-based Graph Computation.}
In recent years, learning-based methods have emerged in a few related graph computation work, such as subgraph counting~\cite{zhao2021learned, liu2020neural, wang2022neural, hou2024learnsc}, shortest path distance~\cite{huang2021learning}, graph edit distance~\cite{yang2021noah} and graph keyword search~\cite{hao2021ks}. However, these graph computation problems (e.g., count or distance) are different from the subgraph matching problem, and the learning techniques in these works cannot be applied to solve the subgraph matching problem. 

\section{Conclusion}
\label{sec:conclusion}

The proposed GNN-AE framework is a learning-based framework that can exactly answer subgraph matching queries. It indexes small feature subgraphs in the data graph offline, thereby avoiding creating auxiliary summary structures online and improving online query efficiency. Our method uses GNNs to perform graph isomorphism on indexed feature subgraphs, thus obtaining high-quality candidate matches more efficiently than traditional graph isomorphism test methods. Using both anchored subgraphs and anchored paths as features can improve query efficiency while reducing index storage cost. The proposed parallel matching growth algorithm and the cost-model-based DFS query planning method efficiently obtain the locations of all matches at a low query cost. The extensive experimental evaluation verifies the efficiency of our GNN-AE compared to baseline methods.

\section*{Acknowledgments}

This work was partially supported by the National Natural Science Foundation of China (No.~62472123, No.~62072138).

\balance

\bibliographystyle{IEEEtran}
\bibliography{ref.bib}

\begin{thebibliography}{10}
\providecommand{\url}[1]{#1}
\csname url@samestyle\endcsname
\providecommand{\newblock}{\relax}
\providecommand{\bibinfo}[2]{#2}
\providecommand{\BIBentrySTDinterwordspacing}{\spaceskip=0pt\relax}
\providecommand{\BIBentryALTinterwordstretchfactor}{4}
\providecommand{\BIBentryALTinterwordspacing}{\spaceskip=\fontdimen2\font plus
\BIBentryALTinterwordstretchfactor\fontdimen3\font minus
  \fontdimen4\font\relax}
\providecommand{\BIBforeignlanguage}[2]{{%
\expandafter\ifx\csname l@#1\endcsname\relax
\typeout{** WARNING: IEEEtran.bst: No hyphenation pattern has been}%
\typeout{** loaded for the language `#1'. Using the pattern for}%
\typeout{** the default language instead.}%
\else
\language=\csname l@#1\endcsname
\fi
#2}}
\providecommand{\BIBdecl}{\relax}
\BIBdecl

\bibitem{sun2020memory}
S.~Sun and Q.~Luo, ``In-memory subgraph matching: An in-depth study,'' in
  \emph{Proceedings of the International Conference on Management of Data
  (SIGMOD)}, 2020, pp. 1083--1098.

\bibitem{zhang2024comprehensive}
Z.~Zhang, Y.~Lu, W.~Zheng, and X.~Lin, ``A comprehensive survey and
  experimental study of subgraph matching: trends, unbiasedness, and
  interaction,'' in \emph{Proceedings of the International Conference on
  Management of Data (SIGMOD)}, 2024, pp. 1--29.

\bibitem{cook2023complexity}
S.~A. Cook, ``The complexity of theorem-proving procedures,'' in \emph{Logic,
  Automata, and Computational Complexity: The Works of Stephen A. Cook}, 2023,
  pp. 143--152.

\bibitem{zhang2024rapid}
W.~Zhang, M.~Weng, M.~Zhang, Z.~Chen, B.~Wang, S.~Li, and F.~Pan, ``Rapid
  mining of fast ion conductors via subgraph isomorphism matching,''
  \emph{Journal of the American Chemical Society}, vol. 146, no.~27, pp.
  18\,535--18\,543, 2024.

\bibitem{ajoykumar2023study}
A.~Ajoykumar and M.~Venkatesan, ``Study of anomalous subgraph detection in
  social networks,'' \emph{International Research Journal on Advanced Science
  Hub}, vol.~5, no.~05, pp. 287--300, 2023.

\bibitem{wu2023rethinking}
F.~Wu, S.~Li, X.~Jin, Y.~Jiang, D.~Radev, Z.~Niu, and S.~Z. Li, ``Rethinking
  explaining graph neural networks via non-parametric subgraph matching,'' in
  \emph{Proceedings of the International Conference on Machine Learning
  (ICML)}, 2023, pp. 37\,511--37\,523.

\bibitem{sun2021rapidmatch}
S.~Sun, X.~Sun, Y.~Che, Q.~Luo, and B.~He, ``Rapidmatch: A holistic approach to
  subgraph query processing,'' \emph{Proceedings of the VLDB Endowment},
  vol.~14, no.~2, pp. 176--188, 2021.

\bibitem{liu2023d2match}
X.~Liu, L.~Zhang, J.~Sun, Y.~Yang, and H.~Yang, ``D2match: leveraging deep
  learning and degeneracy for subgraph matching,'' in \emph{Proceedings of the
  International Conference on Machine Learning (ICML)}, 2023, pp.
  22\,454--22\,472.

\bibitem{roy2022interpretable}
I.~Roy, V.~S. B.~R. Velugoti, S.~Chakrabarti, and A.~De, ``Interpretable neural
  subgraph matching for graph retrieval,'' in \emph{Proceedings of the
  Conference on Artificial Intelligence (AAAI)}, 2022, pp. 8115--8123.

\bibitem{lou2020neural}
Z.~Lou, J.~You, C.~Wen, A.~Canedo, J.~Leskovec \emph{et~al.}, ``Neural subgraph
  matching,'' \emph{arXiv preprint arXiv:2007.03092}, pp. 1--16, 2020.

\bibitem{ye2024efficient}
Y.~Ye, X.~Lian, and M.~Chen, ``Efficient exact subgraph matching via gnn-based
  path dominance embedding,'' \emph{Proceedings of the VLDB Endowment},
  vol.~17, no.~7, pp. 1628--1641, 2024.

\bibitem{wang2022reinforcement}
H.~Wang, Y.~Zhang, L.~Qin, W.~Wang, W.~Zhang, and X.~Lin, ``Reinforcement
  learning based query vertex ordering model for subgraph matching,'' in
  \emph{Proceedings of the International Conference on Data Engineering
  (ICDE)}, 2022, pp. 245--258.

\bibitem{maron2019invariant}
H.~Maron, H.~Ben-Hamu, N.~Shamir, and Y.~Lipman, ``Invariant and equivariant
  graph networks,'' in \emph{Proceedings of the International Conference on
  Learning Representations (ICLR)}, 2019, pp. 1--14.

\bibitem{lin2014graph}
Z.~Lin and Y.~Bei, ``Graph indexing for large networks: A neighborhood
  tree-based approach,'' \emph{Knowledge-Based Systems}, vol.~72, pp. 48--59,
  2014.

\bibitem{wang2019efficient}
X.~Wang, L.~Chai, Q.~Xu, Y.~Yang, J.~Li, J.~Wang, and Y.~Chai, ``Efficient
  subgraph matching on large rdf graphs using mapreduce,'' \emph{Data Science
  and Engineering}, vol.~4, pp. 24--43, 2019.

\bibitem{barabasi1999}
A.-L. Barab{\'a}si and R.~Albert, ``Emergence of scaling in random networks,''
  \emph{Science}, vol. 286, no. 5439, pp. 509--512, 1999.

\bibitem{wu2025overview}
Y.~Wu, X.~Xie, J.~Zhu, L.~Guan, and M.~Li, ``Overview and prospects of dna
  sequence visualization,'' \emph{International Journal of Molecular Sciences},
  vol.~26, no.~2, p. 477, 2025.

\bibitem{chen2024global}
W.~Chen, H.~Huang, S.~Liao, F.~Gao, and F.~Biljecki, ``Global urban road
  network patterns: Unveiling multiscale planning paradigms of 144 cities with
  a novel deep learning approach,'' \emph{Landscape and Urban Planning}, vol.
  241, p. 104901, 2024.

\bibitem{darabkh2025smart}
K.~A. Darabkh and M.~Al-Akhras, ``Smart cities optimization using computational
  intelligence in power-constrained iot sensor networks,'' \emph{Swarm and
  Evolutionary Computation}, vol.~94, p. 101889, 2025.

\bibitem{liu2024mean}
J.-B. Liu, X.~Zhang, J.~Cao, and L.~Chen, ``Mean first-passage time and
  robustness of complex cellular mobile communication network,'' \emph{IEEE
  Transactions on Network Science and Engineering}, vol.~11, no.~3, pp.
  3066--3076, 2024.

\bibitem{lin2025performance}
H.~Lin, M.~A. Kishk, and M.-S. Alouini, ``Performance analysis of
  infrastructure sharing techniques in cellular networks: A percolation theory
  approach,'' \emph{IEEE Transactions on Vehicular Technology}, vol.~74,
  no.~10, pp. 16\,295--16\,309, 2025.

\bibitem{tsigdinos2025}
S.~Tsigdinos, A.~Nikitas, and E.~Bakogiannis, ``Contextualizing urban road
  network hierarchy and its role for sustainable transport futures: A
  systematic literature review using bibliometric analysis and content analysis
  tools,'' \emph{Frontiers of Engineering Management}, vol.~12, no.~2, pp.
  361--393, 2025.

\bibitem{xu2018powerful}
K.~Xu, W.~Hu, J.~Leskovec, and S.~Jegelka, ``How powerful are graph neural
  networks?'' in \emph{Proceedings of the International Conference on Learning
  Representations (ICLR)}, 2018, pp. 1--17.

\bibitem{perozzi2014deepwalk}
B.~Perozzi, R.~Al-Rfou, and S.~Skiena, ``Deepwalk: Online learning of social
  representations,'' in \emph{Proceedings of the International Conference on
  Knowledge Discovery \& Data Mining (SIGKDD)}, 2014, pp. 701--710.

\bibitem{grover2016node2vec}
A.~Grover and J.~Leskovec, ``Node2vec: Scalable feature learning for
  networks,'' in \emph{Proceedings of the International Conference on Knowledge
  Discovery \& Data Mining (SIGKDD)}, 2016, pp. 855--864.

\bibitem{Inductive2017}
W.~Hamilton, Z.~Ying, and J.~Leskovec, ``Inductive representation learning on
  large graphs,'' in \emph{Proceedings of the International Conference on
  Neural Information Processing Systems (NeurIPS)}, 2017, pp. 1--11.

\bibitem{choudhury}
J.~K. Choudhury, A.~Dutta, and B.~Kalita, ``Graph factorization and its
  application,'' \emph{International Journal of Managment, IT and Engineering},
  vol.~2, no.~3, pp. 208--220, 2012.

\bibitem{bisong2019}
E.~Bisong and E.~Bisong, ``Introduction to scikit-learn,'' \emph{Building
  Machine Learning and Deep Learning Models on Google Cloud Platform: A
  Comprehensive Guide for Beginners}, pp. 215--229, 2019.

\bibitem{leman1968reduction}
A.~Leman and B.~Weisfeiler, ``A reduction of a graph to a canonical form and an
  algebra arising during this reduction,'' \emph{Nauchno-Technicheskaya
  Informatsiya}, vol.~2, no.~9, pp. 12--16, 1968.

\bibitem{sato2020survey}
R.~Sato, ``A survey on the expressive power of graph neural networks,''
  \emph{arXiv preprint arXiv:2003.04078}, pp. 1--42, 2020.

\bibitem{zhang2024expressive}
B.~Zhang, C.~Fan, S.~Liu, K.~Huang, X.~Zhao, J.~Huang, and Z.~Liu, ``The
  expressive power of graph neural networks: A survey,'' \emph{IEEE
  Transactions on Knowledge and Data Engineering}, vol.~37, no.~3, pp.
  1455--1474, 2024.

\bibitem{he2008graphs}
H.~He and A.~K. Singh, ``Graphs-at-a-time: query language and access methods
  for graph databases,'' in \emph{Proceedings of the International Conference
  on Management of Data (SIGMOD)}, 2008, pp. 405--418.

\bibitem{shang2008taming}
H.~Shang, Y.~Zhang, X.~Lin, and J.~X. Yu, ``Taming verification hardness: an
  efficient algorithm for testing subgraph isomorphism,'' \emph{Proceedings of
  the VLDB Endowment}, vol.~1, no.~1, pp. 364--375, 2008.

\bibitem{bonnici2013subgraph}
V.~Bonnici, R.~Giugno, A.~Pulvirenti, D.~Shasha, and A.~Ferro, ``A subgraph
  isomorphism algorithm and its application to biochemical data,'' \emph{BMC
  Bioinformatics}, vol.~14, no.~7, pp. 1--13, 2013.

\bibitem{bi2016efficient}
F.~Bi, L.~Chang, X.~Lin, L.~Qin, and W.~Zhang, ``Efficient subgraph matching by
  postponing cartesian products,'' in \emph{Proceedings of the International
  Conference on Management of Data (SIGMOD)}, 2016, pp. 1199--1214.

\bibitem{choi2023bice}
Y.~Choi, K.~Park, and H.~Kim, ``Bice: Exploring compact search space by using
  bipartite matching and cell-wide verification,'' \emph{Proceedings of the
  VLDB Endowment}, vol.~16, no.~9, pp. 2186--2198, 2023.

\bibitem{han2019efficient}
M.~Han, H.~Kim, G.~Gu, K.~Park, and W.-S. Han, ``Efficient subgraph matching:
  Harmonizing dynamic programming, adaptive matching order, and failing set
  together,'' in \emph{Proceedings of the International Conference on
  Management of Data (SIGMOD)}, 2019, pp. 1429--1446.

\bibitem{zhao2010graph}
P.~Zhao and J.~Han, ``On graph query optimization in large networks,''
  \emph{Proceedings of the VLDB Endowment}, vol.~3, no. 1-2, pp. 340--351,
  2010.

\bibitem{hagberg2020}
A.~Hagberg and D.~Conway, ``Networkx: Network analysis with python,''
  \emph{URL: https://networkx. github. io}, 2020.

\bibitem{watts1998collective}
D.~J. Watts and S.~H. Strogatz, ``Collective dynamics of
  ‘small-world’networks,'' \emph{Nature}, vol. 393, no. 6684, pp. 440--442,
  1998.

\bibitem{sun2012efficient}
Z.~Sun, H.~Wang, H.~Wang, B.~Shao, and J.~Li, ``Efficient subgraph matching on
  billion node graphs,'' \emph{Proceedings of the VLDB Endowment}, vol.~5,
  no.~9, pp. 788--799, 2012.

\bibitem{carletti2017challenging}
V.~Carletti, P.~Foggia, A.~Saggese, and M.~Vento, ``Challenging the time
  complexity of exact subgraph isomorphism for huge and dense graphs with
  vf3,'' \emph{IEEE Transactions on Pattern Analysis and Machine Intelligence},
  vol.~40, no.~4, pp. 804--818, 2017.

\bibitem{bhattarai2019ceci}
B.~Bhattarai, H.~Liu, and H.~H. Huang, ``Ceci: Compact embedding cluster index
  for scalable subgraph matching,'' in \emph{Proceedings of the International
  Conference on Management of Data (SIGMOD)}, 2019, pp. 1447--1462.

\bibitem{han2013turboiso}
W.-S. Han, J.~Lee, and J.-H. Lee, ``Turboiso: towards ultrafast and robust
  subgraph isomorphism search in large graph databases,'' in \emph{Proceedings
  of the International Conference on Management of Data (SIGMOD)}, 2013, pp.
  337--348.

\bibitem{zhang2009gaddi}
S.~Zhang, S.~Li, and J.~Yang, ``Gaddi: distance index based subgraph matching
  in biological networks,'' in \emph{Proceedings of the International
  Conference on Extending Database Technology (EDBT)}, 2009, pp. 192--203.

\bibitem{jiang2024ive}
Z.~Jiang, S.~Zhang, X.~Hou, M.~Yuan, and H.~You, ``Ive: Accelerating
  enumeration-based subgraph matching via exploring isolated vertices,'' in
  \emph{Proceedings of the International Conference on Data Engineering
  (ICDE)}, 2024, pp. 4208--4221.

\bibitem{aberger2017}
C.~R. Aberger, A.~Lamb, S.~Tu, A.~N{\"o}tzli, K.~Olukotun, and C.~R{\'e},
  ``Emptyheaded: A relational engine for graph processing,'' \emph{ACM
  Transactions on Database Systems}, vol.~42, no.~4, pp. 1--44, 2017.

\bibitem{mhedhbi2019}
A.~Mhedhbi and S.~Salihoglu, ``Optimizing subgraph queries by combining binary
  and worst-case optimal joins,'' \emph{Proceedings of the VLDB Endowment},
  vol.~12, no.~11, pp. 1692--1704, 2019.

\bibitem{nguyen2015join}
D.~Nguyen, M.~Aref, M.~Bravenboer, G.~Kollias, H.~Q. Ngo, C.~R{\'e}, and
  A.~Rudra, ``Join processing for graph patterns: An old dog with new tricks,''
  in \emph{Proceedings of the GRADES'15}, 2015, pp. 1--8.

\bibitem{ammar2018}
K.~Ammar, F.~McSherry, S.~Salihoglu, and M.~Joglekar, ``Distributed evaluation
  of subgraph queries using worst-case optimal and low-memory dataflows,''
  \emph{Proceedings of the VLDB Endowment}, vol.~11, pp. 691--704, 2018.

\bibitem{lai2019distributed}
L.~Lai, Z.~Qing, Z.~Yang, X.~Jin, Z.~Lai, R.~Wang, K.~Hao, X.~Lin, L.~Qin,
  W.~Zhang \emph{et~al.}, ``Distributed subgraph matching on timely dataflow,''
  \emph{Proceedings of the VLDB Endowment}, vol.~12, no.~10, pp. 1099--1112,
  2019.

\bibitem{gori2005exact}
M.~Gori, M.~Maggini, and L.~Sarti, ``Exact and approximate graph matching using
  random walks,'' \emph{IEEE Transactions on Pattern Analysis and Machine
  Intelligence}, vol.~27, no.~7, pp. 1100--1111, 2005.

\bibitem{dutta2017neighbor}
S.~Dutta, P.~Nayek, and A.~Bhattacharya, ``Neighbor-aware search for
  approximate labeled graph matching using the chi-square statistics,'' in
  \emph{Proceedings of the International Conference on World Wide Web (WWW)},
  2017, pp. 1281--1290.

\bibitem{zhu2011structure}
L.~Zhu, W.~K. Ng, and J.~Cheng, ``Structure and attribute index for approximate
  graph matching in large graphs,'' \emph{Information Systems}, vol.~36, no.~6,
  pp. 958--972, 2011.

\bibitem{kpodjedo2014using}
S.~Kpodjedo, P.~Galinier, and G.~Antoniol, ``Using local similarity measures to
  efficiently address approximate graph matching,'' \emph{Discrete Applied
  Mathematics}, vol. 164, pp. 161--177, 2014.

\bibitem{zhang2010sapper}
S.~Zhang, J.~Yang, and W.~Jin, ``Sapper: Subgraph indexing and approximate
  matching in large graphs,'' \emph{Proceedings of the VLDB Endowment}, vol.~3,
  no. 1-2, pp. 1185--1194, 2010.

\bibitem{doan2021interpretable}
K.~D. Doan, S.~Manchanda, S.~Mahapatra, and C.~K. Reddy, ``Interpretable graph
  similarity computation via differentiable optimal alignment of node
  embeddings,'' in \emph{Proceedings of the International Conference on
  Research and Development in Information Retrieval (SIGIR)}, 2021, pp.
  665--674.

\bibitem{zhao2021learned}
K.~Zhao, J.~X. Yu, H.~Zhang, Q.~Li, and Y.~Rong, ``A learned sketch for
  subgraph counting,'' in \emph{Proceedings of the International Conference on
  Management of Data (SIGMOD)}, 2021, pp. 2142--2155.

\bibitem{liu2020neural}
X.~Liu, H.~Pan, M.~He, Y.~Song, X.~Jiang, and L.~Shang, ``Neural subgraph
  isomorphism counting,'' in \emph{Proceedings of the International Conference
  on Knowledge Discovery \& Data Mining (SIGKDD)}, 2020, pp. 1959--1969.

\bibitem{wang2022neural}
H.~Wang, R.~Hu, Y.~Zhang, L.~Qin, W.~Wang, and W.~Zhang, ``Neural subgraph
  counting with wasserstein estimator,'' in \emph{Proceedings of the
  International Conference on Management of Data (SIGMOD)}, 2022, pp. 160--175.

\bibitem{hou2024learnsc}
W.~Hou, X.~Zhao, and B.~Tang, ``Learnsc: An efficient and unified
  learning-based framework for subgraph counting problem,'' in
  \emph{Proceedings of the International Conference on Data Engineering
  (ICDE)}, 2024, pp. 2625--2638.

\bibitem{huang2021learning}
S.~Huang, Y.~Wang, T.~Zhao, and G.~Li, ``A learning-based method for computing
  shortest path distances on road networks,'' in \emph{Proceedings of the
  International Conference on Data Engineering (ICDE)}, 2021, pp. 360--371.

\bibitem{yang2021noah}
L.~Yang and L.~Zou, ``Noah: Neural-optimized a* search algorithm for graph edit
  distance computation,'' in \emph{Proceedings of the International Conference
  on Data Engineering (ICDE)}, 2021, pp. 576--587.

\bibitem{hao2021ks}
Y.~Hao, X.~Cao, Y.~Sheng, Y.~Fang, and W.~Wang, ``Ks-gnn: keywords search over
  incomplete graphs via graph neural network,'' in \emph{Proceedings of the
  International Conference on Neural Information Processing Systems (NeurIPS)},
  2021, pp. 1700--1712.

\end{thebibliography}

\vfill

\end{document}